\newcommand{\pavel}[1]{}
\newcommand{\sreejato}[1]{}
\newcommand{\arkadev}[1]{}
\newtheorem{theorem}{Theorem}
\newtheorem{lemma}{Lemma}
\newtheorem{corollary}{Corollary}
\newtheorem{definition}{Definition}
\newtheorem{observation}{Observation}
\newtheorem{claim}{Claim}
\newtheorem{problem}{Problem}
\newtheorem*{lemma*}{Lemma}
\newtheorem*{theorem*}{Theorem}
\newcommand{\N}{\mathbb{N}}
\newcommand{\TV}{d_{\text{TV}}}
\newcommand{\F}{\mathbb{F}}
\newcommand{\codim}{\text{co-dim}}
\newcommand{\cube}{\{0,1\}}
\newcommand{\Cl}{\text{Cl}}
\newcommand{\VarCl}{\text{VarCl}}
\newcommand{\Space}{\mathcal{S}}
\newcommand{\Span}{\text{Span}}
\newcommand{\Row}{\mathcal{R}}
\newcommand{\SP}{\mathsf{SP}}
\newcommand{\Stone}{\text{Stone}}
\newcommand{\V}{\mathcal{V}}
\newcommand{\Supp}{\text{Supp}}
\newcommand{\Search}{\textit{Search}}
\newcommand{\IP}{\mathsf{IP}}
\newcommand{\Post}{\textit{Post}}
\newcommand{\Pre}{\textit{Pre}}
\newcommand{\Prob}{\mathsf{P}}
\newcommand{\cP}{\mathcal{P}}
\newcommand{\cT}{\mathcal{T}}
\tikzstyle{redcircle}= [circle, draw=black, fill=red!30]
\tikzstyle{bluecircle}= [circle, draw=black, fill=blue!30]
\tikzstyle{blackcircle}= [circle, draw=black, fill=black!30]
\title{Exponential Separation Between Powers of Regular and General Resolution Over Parities}
\author{Sreejata Kishor Bhattacharya\thanks{TIFR-Mumbai, email: sreejata.bhattacharya@tifr.res.in} \and Arkadev Chattopadhyay\thanks{TIFR - Mumbai, email: arkadev.c\@@tifr.res.in. Partially funded by the Department of Atomic Energy and a Google India Research Award.} \and Pavel Dvořák\thanks{TIFR-Mumbai \& Charles University, Prague, email: koblich\@@iuuk.mff.cuni.cz. Work done entirely at TIFR. Supported by Czech Science Foundation GAČR grant \#22-14872O.}}
\date{ }
\newcommand{ \prob} {\text{Pr}}
\newtheorem{remark}{Remark}[section]
\theoremstyle{definition}
\theoremstyle{remark}
\numberwithin{equation}{section}
\tikzstyle{startstop} = [rectangle, rounded corners, 
\tikzstyle{startstopdashed} = [rectangle, dashed, rounded corners, 
\tikzstyle{process} = [rectangle, 
\tikzstyle{arrow} = [thick,->,>=stealth]
\tikzstyle{abcd}= [rectangle]
\begin{document}
\maketitle

\begin{abstract}
    Proving super-polynomial lower bounds on the size of proofs of unsatisfiability of Boolean formulas using resolution over parities is an outstanding problem that has received a lot of attention after its introduction by Raz and Tzamaret \cite{RT08}. Very recently, Efremenko, Garl{\'i}k and Itsykson \cite{EGI23} proved the first exponential lower bounds on the size of ResLin proofs that were additionally restricted to be \emph{bottom-regular}. We show that there are formulas for which such regular ResLin proofs of unsatisfiability continue to have exponential size even though there exist short proofs of their unsatisfiability in ordinary, non-regular resolution. This is the first super-polynomial separation between the power of general ResLin and and that of regular ResLin for any natural notion of regularity. 
    
    Our argument, while building upon the work of Efremenko et al., uses additional ideas from the literature on \emph{lifting theorems}.
\end{abstract}

\section{Introduction}
One of the most basic and well studied proof systems in propositional proof complexity is resolution. Its weakness by now is reasonably well understood after years of research. Yet, this understanding is quite fragile as natural and simple strengthenings of resolution quickly pose challenges that remain outstanding. One such system is resolution over parities, introduced by Raz and Tzameret \cite{RT08}, which has been abbreviated as ResLin. More precisely, a \emph{linear} clause is a disjunction of affine equations, generalizing the notion of ordinary clauses. If $A$ and $B$ are two such linear disjunctions and $\ell$ is a linear form, then the inference rule of ResLin derives the linear clause $A \vee B$ from clauses $A \vee (\ell = 1)$ and $B \vee (\ell=0)$. To appreciate the power of this system, let us recall that a linear clause, unlike an ordinary clause, can be expressed using many different bases. Indeed, no super-polynomial lower bounds on the size of general proofs in this system is currently known for any explicit unsatisfiable Boolean formula.

Progress was first made in the work of Itsykson and Sokolov \cite{IS14} when they proved exponential lower bounds on the size of tree-like ResLin proofs for central tautologies including  the Pigeonhole Principle and Tseitin formulas over expanding graphs. Proving such lower bounds was systematized, only recently, in the independent works of Chattopadhyay, Mande, Sanyal and Sherif \cite{CMSS23} and that of Beame and Koroth \cite{BK23}. These works could be used to conclude that tree-like ResLin proofs are exponentially weaker than general ResLin proofs.

In the world of ordinary resolution, it is known that there is an intermediate proof system, known as \emph{regular} resolution, whose power strictly lies in between tree-like and general proofs. In the graph of a regular resolution proof, no derivation path from an axiom clause to the final empty clause resolves a variable of the formula more than once. In the dual view of searching for a falsified clause, this corresponds precisely to read-once branching programs, where no source to sink path queries a variable more than once. Taking cue from this,  
Gryaznov, Pudl{\'a}k and Talebanfard \cite{GPT22} introduced models of read-once linear branching programs (ROLBP), to capture notions of regularity in ResLin. They identified two such notions that extend the notion of regularity in ordinary resolution, or the read-once property of branching programs. Consider a node $v$ of an ROLBP. Let $\Pre(v)$ denote the vector space spanned by all the linear queries that appear in some path from the source node to $v$. Similarly, let $\Post(v)$ denote the space spanned by all linear queries that lie in some path from $v$ to a sink node. In the most restrictive notion, called strongly regular proofs or strongly read-once linear branching programs, $\Pre(v)$ and $\Post(v)$ have no non-trivial intersection for every $v$. In a more relaxed notion, called weak regularity or weakly ROLBP, the linear query made at node $v$ is not contained in $\Pre(v)$. Gryaznov et al. \cite{GPT22} were able to prove an exponential lower bound on the size of strongly ROLBP for computing a \emph{function}, using the notion of directional affine dispersers. However, their argument is not known to work for search problems for even strongly ROLBP. 

There is another natural notion of weakly read-once linear branching programs (dually, weakly regular ResLin  ) that complements the notion defined in \cite{GPT22}. In this notion, we forbid the linear query made at a node $v$ of the branching program to lie in an affine space $\Post(u)$, for each $u$ that is a child of $v$. We will call this notion bottom-read-once (bottom-regular proofs) and  the former notion of Gryaznov et al as top-read-once (top-regular proofs). Both are generalizations of strongly read-once linear branching programs (strongly regular ResLin proofs). 
 Very recently, Efremenko, Garlik and Itsykson \cite{EGI23} proved the first exponential lower bounds on the size of bottom-regular ResLin proofs. The tautology they used was the Binary Pigeonhole Principle (BPHP). It is plausible that the BPHP remains hard for general ResLin proofs. One way to prove such a bound would be to show that every general ResLin proof could be converted to a bottom-regular ResLin proof with a (quasi-)polynomial blow up. 

Our main result is a strong refutation of that possibility. We show that bottom-regular ResLin proofs require exponential size blow-up to simulate non-regular proofs even in the ordinary resolution system. The formulas we use are twists of certain formulas used by Alekhnovich, Johannsen, Pitassi and Urquhart \cite{AJPU02}  for providing separation between regular and general resolution.  Alekhnovich et al. provided two formulas for proving such a separation. In the second one, the starting point is a pebbling formula on pyramid graphs. It turns out that they are easy for regular resolution. To get around that, they consider \emph{stone formula} for pyramid graphs, that they prove is hard for regular resolution while remaining easy for general resolution. We further obfuscate such stone formulas using another idea of Alekhnovich et al. that appears in the construction of their  first formula.  Finally, we \emph{lift} these formulas by logarithmic size Inner-product ($\IP$) gadgets.

This lifted formula presents fresh difficulties to be overcome to carry out the implicit technique of Efremenko et al. We overcome them by exploitng two properties of the Inner-product. First, we exploit the property that $\IP$ has low discrepancy, invoking a result of Chattopadhyay, Filmus, Koroth, Meir and Pitassi \cite{CFKMP21}. Second, we use the fact that $\IP$ has the stifling property, inspired by the recent work of Chattopadhyay, Mande, Sanyal and Sherif \cite{CMSS23}.

Our unsatisfiable formula that yields a separation of resolution and bottom-regular ResLin is a stone formula for a pyramid graph $G_n$ of $n$ levels lifted by an inner-product gadget $\IP: \cube^b \to \cube$.
We denote this formula as $\SP_n \circ \IP$ and it is defined over $M := 2N^2 \cdot b$ variables, where $b = \Theta(\log n)$ and $N = n(n+1) / 2$ is the number of vertices of the pyramid graph $G_n$.
For exact definition of $\SP_n \circ \IP$, see Section~\ref{sec:HardProblem}.

\begin{theorem}
    \label{thm:MainResult}
    The formula $\SP_n \circ \IP$ admits a resolution refutation of length that is polynomial in $M$ but any bottom-regular ResLin refutation of it must have length at least $2^{\Omega(M^{1/12} / \log^{1 + \varepsilon} M)}$ for $\varepsilon = 1/12$.
\end{theorem}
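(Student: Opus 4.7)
The plan splits into an upper bound showing a polynomial-size resolution refutation exists, and a lower bound ruling out short bottom-regular ResLin refutations; the latter is the heart of the theorem.

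For the upper bound, I would start from the polynomial-size resolution refutation of the (unlifted, obfuscated) stone formula $\SP_n$ due to Alekhnovich, Johannsen, Pitassi and Urquhart. Each inference step of that refutation that reads a stone variable is simulated in the lifted refutation by exhaustively ``decoding'' the $2b$ variables of the corresponding inner-product gadget via a decision-tree-style case analysis of size $\mathrm{poly}(n)$. Since $b = \Theta(\log n)$, each decoding contributes $\mathrm{poly}(M)$ clauses and the overall blow-up is polynomial in $M$.

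For the lower bound, suppose $B$ is a bottom-read-once linear branching program solving the search problem associated with $\SP_n \circ \IP$ and of size at most $2^{o(M^{1/12}/\log^{1+\varepsilon} M)}$. I would follow the template of Efremenko, Garlík and Itsykson, adapted to handle (i) the pyramid stone structure with AJPU-style obfuscation rather than BPHP, and (ii) the inner-product gadget composed at the leaves. The core of the argument is an \emph{adversary} that walks down a source-to-sink path in $B$ while maintaining simultaneously a partial stone-level configuration on $G_n$ consistent with no pebbling contradiction, and a partial affine restriction on the gadget variables consistent with the linear queries answered so far. At each internal node with query $\ell$, bottom-regularity guarantees $\ell \notin \Post(u)$ for either child $u$, which is precisely the slack the adversary exploits to keep both the current stone-level state live and the surviving gadget inputs flexible.

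The two properties of $\IP$ enter the adversary's moves as follows. The low-discrepancy bound of Chattopadhyay, Filmus, Koroth, Meir and Pitassi ensures that if the current affine restriction leaves a sufficiently high-dimensional subspace of gadget inputs free, then $\ell$ is near-balanced on that subspace, so the adversary can consistently answer $\ell$ in whichever direction preserves its current stone-level commitment. The stifling property in the spirit of Chattopadhyay, Mande, Sanyal and Sherif lets the adversary, whenever it wishes to freeze a stone variable to a specific truth value, extend the affine restriction to force the corresponding IP output while keeping the other gadgets pliable. Once the adversary reaches a sink, its stone-level configuration contradicts the output clause, yielding the required lower bound on the size of $B$.

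The principal obstacle, and the most delicate part of the argument, is that a single ResLin query $\ell = \sum_g \ell_g$ may mix coordinates from many different gadgets and so is not localized to a single stone variable. I would track for each gadget $g$ and each node $v$ the dimension of the projection of $\Pre(v)$ onto the $b$ coordinates of $g$, dividing gadgets into \emph{alive} (small projection) and \emph{revealed} (large projection). So long as enough gadgets remain alive, discrepancy applied to the alive part of $\ell$ gives near-uniform behaviour; the revealed part of $\ell$ depends only on already-committed stone values and can be folded into the adversary's running commitments. Balancing the alive/revealed thresholds against the path length in $B$ and the obfuscation parameters of $\SP_n$ is what yields the exponent $M^{1/12}$ in the final bound, and is where the bulk of the technical bookkeeping will go.
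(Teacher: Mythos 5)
Your upper bound sketch is essentially the same as the paper's: refute $\SP_{n,\rho}$ with a constant-width polynomial-size resolution proof and then simulate each inference step on the lifted formula via a $2^{O(b)}$-size local derivation (the paper's Lemma~\ref{lemma:lifted-upper-bound}). That part is fine.

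For the lower bound your proposal has the right cast of characters but the wrong plot, and as described it has a genuine gap: an adversary walk that ends at a sink whose output clause contradicts the maintained stone configuration only shows that $B$ is \emph{incorrect}, not that $B$ is \emph{large}. What is actually needed, and what the paper does, is a bottleneck-counting argument: one shows that, for a random input $\beta$ from a lifted distribution $\mu' = \overrightarrow{\IP}^{-1}(\mu)$, the node $\cP(\beta,t)$ reached after $t = \Theta(n^{1/3})$ queries has $\codim(A_{\cP(\beta,t)}) \ge t$ with constant probability, while each fixed affine space of co-dimension $t$ captures only $2^{-\Omega(t/b)}$ of the mass of $\mu'$; dividing gives the size lower bound. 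Your sketch never sets up a distribution, never shows that reaching a ``safe'' node implies large co-dimension, and never gives a per-node mass bound, so there is nothing to count.

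There are also two more localized mismatches with how the ingredients actually enter. First, the discrepancy of $\IP$ is not applied node-by-node to argue a single linear query is near-balanced; it is used once, inside the CFKMP simulation theorem, to lift the average-case \emph{decision-tree} hardness of $\Search(\SP_{n,\rho})$ under $\mu$ to average-case \emph{parity decision-tree} hardness of $\Search(\SP_{n,\rho}\circ\IP)$ under $\mu'$. Second, your alive/revealed-gadget accounting based on projections of $\Pre(v)$ is not the right invariant: the paper's argument runs through the closure/obstruction machinery of Efremenko et al.\ applied to $A_v$, defining a node to be $\alpha$-foolable when $\Cl(A_v)$ misses all variables of the critical triple $v,u,w$. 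The step that uses bottom-read-once-ness (Lemma~\ref{lemma:large-rank-stone}) then shows a foolable node has large co-dimension, and it does so by a quite specific mechanism: the obfuscation map $\rho$ (Lemma~\ref{lem:RhoFunction}) together with the stifling property of $\IP$ (Lemma~\ref{lem:StiflingGadget}) lets one realize, inside $A_v$, assignments that falsify any desired induction clause, forcing the sinks reachable from $v$ to mention every variable; combined with $\dim(\Post(v)) \le mb - t$ (Lemma~\ref{lem:PostDim}) this yields $\codim(A_v)\ge t$. Stifling also reappears separately in the pseudo-randomness Lemma~\ref{lemma:rank-fooling} that supplies the per-node mass bound. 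Your proposal uses stifling only to ``force IP outputs,'' and never invokes the obfuscation map at all, so the actual engine of the bottom-read-once $\Rightarrow$ large rank implication is missing.
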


\paragraph{Comparision with Efremenko et al.~\cite{EGI23}:} Our work builds upon the very recent work of Efremenko, Garl{\'i}k and Itsykson, who proved an exponential lower bound for the regular linear resolution complexity of the formula $\text{Binary-PHP}_{n}^{n+1}$. A crucial property of this formula that they use is that if we sample an assignment to the variables from the uniform distribution, with high probability one needs to make at least $n^{\Omega(1)}$ bit-queries to locate a falsified clause. Later, they use the following simple property of the uniform distribution: let $A$ be an affine subspace of co-dimension $r$. Then, the probability mass of $A$ under uniform distribution is very small (inverse-exponential in $r$). Call this property (*). 

Our goal is to show an exponential lower bound on the regular linear resolution complexity of a formula that has a small resolution refutation. A candidate formula would be a CNF which exhibits exponential separation between resolution and regular resolution. Some such formulas are $\text{MGT}_{n,\rho}$ and \textit{stone formulas} with auxiliary variables to keep width of clauses short (both defined in Alekhnovich et al. \cite{AJPU02}). However, all such formulas have constant width -- and therefore, a uniformly random assignment falsifies a constant fraction of clauses. It follows that for both these formulas there is a query algorithm making only constantly many queries which finds a falsified clause with high probability under the uniform distribution. Thus, directly adapting the argument of Efremenko et al. would not work for these formulas. 

Our main observation is that property (*) continues to hold for a much larger class of distributions than the uniform distribution when the base formula is \textit{lifted} with an appropriate gadget. More precisely, if we take any distribution $\mu$ on the assignments of the base formula and let $\mu'$ be its \textit{uniform lift}, property (*) holds for $\mu'$. We are now free to choose \textit{any} distribution on the assignments of the base formula for which locating a falsifying axiom requires many queries on average (this is just a sketch; we actually need something slightly stronger). This gives us enough freedom to construct appropriate distributions. Some more ingredients are required to make this idea work; we explain them in the subsequent sections.

\paragraph{Some Other Related Work:} Following up on the work by Alekhnovich et al \cite{AJPU02}, Urquhart \cite{U11} proved a stronger separation between the length of regular and general resolution proofs.  Much more recently, Vinyals, Elffer, Johannsen and Nordstr{\"o}m \cite{VEJN20}, designed a different formula for showing an even stronger separation between regular and general resolutions. The constructions of Urquhart \cite{U11} as well as that of Vinyals et al \cite{VEJN20} are somewhat related to the hard formulas that we construct in this paper. We talk about them more at the end of Section~\ref{sec:HardProblem} after describing our construction in detail. In another direction, the model of read-once linear branching programs, introduced by Gryaznov, Pudl{\'a}k and Talebanfard \cite{GPT22}, spawned research in directional affine extractors and pseudo-randomness first by Chattopadhyay and Liao \cite{CL23}, and then further by Li and Zhong \cite{LZ23}. These work on extractors, while independently interesting, are not known to have consequences for ResLin.



\paragraph{Organization of the Paper:} 
We present our hard formula in the next section and briefly compare it with constructions done in earlier work. 
In Section~\ref{sec:BP_Res}, we define ResLin refutation system and its regular and tree-like restrictions. Further, we present the connection between resolution proof systems and branching programs.
In Section~\ref{sec:linear}, we present some results from linear algebra which we use in our proofs.
In Section~\ref{sec:Outline}, we sketch the outline and main ideas of the proof of our main result, Theorem~\ref{thm:MainResult}.
Then in Section~\ref{sec:UpperBound}, we prove the upper bound part of Theorem~\ref{thm:MainResult}, i.e., our hard formula has short resolution refutation.
We finish our proof in Section~\ref{sec:LowerBound} where we prove the lower bound part of Theorem~\ref{thm:MainResult}, i.e., any bottom-regular ResLin refutation of our hard formula must have exponential length. Finally, in Section~\ref{sec:conclusion}, we conclude with some of the many open problems that our work raises.

\section{ A Formula Hard For Just Regular ResLin}
\label{sec:HardProblem}
Let us first recall the stone formula that was used by Alekhnovich et al.~\cite{AJPU02} for separating the powers of regular and general resolution. The formula that we shall use is a \emph{lift} of this formula by an appropriate gadget. 
Let $G= (V,E)$ be a directed acyclic graph such that it has exactly one root (vertex with indegree 0), $r$, and  every vertex of $G$ has outdegree either 0 or 2.
Call the vertices with outdegree 0 the sinks of $G$. Let $|V|= N$. We describe the \emph{stone formula on $G$ twisted with $\rho$}, $\Stone(G, \rho)$ below. In words, the contradiction we are about to describe states the following:
\begin{itemize}
    \item There are $|V|$ stones. Each stone has a color: red or blue.
    \item At least one stone must be placed on each vertex.
    \item All stones placed on sinks must be red.
    \item All stones placed on the root must be blue.
    \item Let $v$ be a node with out-neighbors $u,w$. If a red stone $j$ is placed on $u$ and a red stone $k$ is placed on $w$, then all stones placed on $v$ must be red.
\end{itemize}
We shall twist this formula with an obfuscation map $\rho$ to make it hard for regular resolution. We formally define the formula below.
We introduce the following set of variables.
\begin{description}
    \item[Vertex variables:] For all $ v \in V, 1 \leq j \leq N: P_{v,j}$. \newline
    Semantic interpretation: $P_{v,j}$ is set to 1 iff stone $j$ is placed on vertex $v$.
    \item[Stone variables:] For all $1 \leq j \leq N: R_j$. \newline
    Semantic interpretation: $R_j$ is set to 1 if stone $j$ is colored red, otherwise it is set to 0.
    \item[Auxiliary variables:] For all $v \in V, 1 \leq j \leq N - 1 : Z_{v,j}$. \newline
    Semantic interpretation: These are auxiliary variables used to encode the fact that at least one stone is placed on vertex $v$, with a bunch of constant-width clauses.
\end{description}
Let $\V$ denote the set of all variables mentioned above and $\rho: [N]^3 \to \V$ be an arbitrary mapping that we call an \emph{obfuscation map}.
Let $S$ be the set of sinks of $G$. We define $\Stone(G, \rho)$ to be the formula comprising the following set of clauses:
\begin{description}
    \item[Root clauses:] For all $1 \leq j \leq N$: $\neg P_{r,j} \lor \neg R_j$ \newline
    Semantic interpretation: All stones placed on the root $r$ of $G$ must be coloured blue.
    \item[Sink clauses:] For all $1 \leq j \leq N, s \in S$: $\neg P_{s,j} \lor R_j$ \newline
    Semantic interpretation: Each stone placed on a sink of $G$ must be coloured red.
    \item[Induction clauses:] For all $v \in V(G)$ with out-neighbors $u, w$ and for each $i,j,k \in [N]$:
    \[
    \neg P_{u, i} \lor \neg R_{i} \lor \neg P_{w, j} \lor \neg R_{j} \lor \neg P_{v,k} \lor R_k \lor \rho(i, j, k)
    \] 
    \[
    \neg P_{u, i} \lor \neg R_{i} \lor \neg P_{w, j} \lor \neg R_{j} \lor \neg P_{v,k} \lor R_k \lor \neg \rho(i, j, k)
    \]       
    Semantic interpretation: after resolving out the variable $\rho(i, j, k)$, the clause says that if the stones placed on $u$ and $w$ are colored red, the stone placed at $v$ must also be colored red, i.e., the implication
    \[
    (P_{u, i} \land R_{i} \land P_{w, j} \land R_{j} \land P_{v,k}) \Longrightarrow R_k.
    \]
    \item[Stone-placement clauses:] For all $v \in V(G)$:
    \begin{align*}& P_{v,1} \lor \neg Z_{v,1} \\
     &Z_{v,1} \lor P_{v,2} \lor \neg Z_{v,2} \\
     &\dots\\ 
     &Z_{v, N-2} \lor P_{v, N-1} \lor \neg Z_{v, N-1}\\
     &Z_{v, N-1} \lor P_{v,N}\end{align*}
     Semantic interpretation: Together, the clauses are equivalent to
     $$ P_{v,1} \lor P_{v,2} \lor \cdots \lor P_{v,N}$$
     i.e., at least one stone is placed on the vertex $v$.
\end{description}


Let $G_n$ be the pyramid graph on $n$ levels. The vertex set is $V=\{ (i,j)| 1 \leq i \leq n, 1 \leq j \leq i\}$. The \emph{level} of a vertex $(i,j)$ is defined to be its first coordinate $i$. The edge set is $E= \{ (i,j) \rightarrow (i+1, j) | 1 \leq i < n, 1 \leq j \leq i\} \cup \{ (i,j) \rightarrow (i+1, j+1) | 1 \leq i < n, 1 \leq j \leq i\}$. 
See Figure~\ref{fig:PyramidGraph}, for an example of the pyramid graph.
The sinks of $G_n$ are the vertices at layer $n$, i.e., $(n,i)$ for $1 \leq i \leq n$. The root is $(1,1)$. 
We have $|V| = N = \frac{1}{2}n(n+1)$.

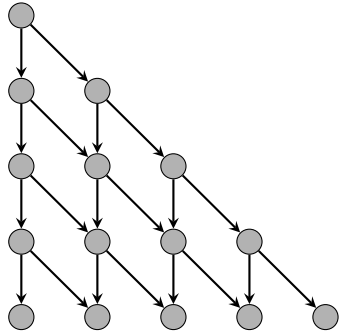
\begin{figure}[ht]
    \centering
    
\begin{tikzpicture}
\node (1001)[blackcircle]{};
\node (2001)[blackcircle, below of = 1001]{};
\node (2002)[blackcircle, right of = 2001]{};
\node (3001)[blackcircle, below of = 2001]{};
\node (3002)[blackcircle, right of = 3001]{};
\node (3003)[blackcircle, right of = 3002]{};
\node (4001)[blackcircle, below of = 3001]{};
\node (4002)[blackcircle, right of = 4001]{};
\node (4003)[blackcircle, right of = 4002]{};
\node (4004)[blackcircle, right of = 4003]{};
\node (5001)[blackcircle, below of = 4001]{};
\node (5002)[blackcircle, right of = 5001]{};
\node (5003)[blackcircle, right of = 5002]{};
\node (5004)[blackcircle, right of = 5003]{};
\node (5005)[blackcircle, right of = 5004]{};
\draw [arrow](1001) -- (2002);
\draw [arrow](1001) -- (2001);
\draw [arrow](2001) -- (3002);
\draw [arrow](2001) -- (3001);
\draw [arrow](2002) -- (3003);
\draw [arrow](2002) -- (3002);
\draw [arrow](3001) -- (4002);
\draw [arrow](3001) -- (4001);
\draw [arrow](3002) -- (4003);
\draw [arrow](3002) -- (4002);
\draw [arrow](3003) -- (4004);
\draw [arrow](3003) -- (4003);
\draw [arrow](4001) -- (5002);
\draw [arrow](4001) -- (5001);
\draw [arrow](4002) -- (5003);
\draw [arrow](4002) -- (5002);
\draw [arrow](4003) -- (5004);
\draw [arrow](4003) -- (5003);
\draw [arrow](4004) -- (5005);
\draw [arrow](4004) -- (5004);

\end{tikzpicture}
    \caption{An example of the pyramid graph with $n=5$ levels.}
    \label{fig:PyramidGraph}
\end{figure}

We instantiate the stone formula with $G=G_n$. Let $\SP_{n,\rho} = \Stone(G_n, \rho)$.
We denote the number of variables of $\SP_{n,\rho}$ by $m$, i.e. $|\V| = m = N^2 + N + N(N - 1) = 2N^2$.
In order to prove our lower bound against regular ResLin, it turns out to be convenient working with a formula that is obtained by \emph{lifting} $\SP_{n, \rho}$.
Let $g: \cube^b \rightarrow \cube$ be a Boolean function, called gadget. For $a \in \{0,1\}$, we denote the set of all pre-images of $a$ by $g^{-1}(a)$, i.e., $g^{-1}(a) \coloneqq \bigl\{ x \in \cube^b \mid g(x) = a \bigr\}$.

Let $c_1, c_2, \cdots , c_k \in \{0,1\}$. Let $C = [X_1 = c_1]  \lor \cdots \lor [X_k = c_k]$ be a clause over variables $X_1,\dots,X_k$. Here $[X_i = c_i]$ denotes a literal, i.e. $X_i$ if $c_i = 1$, and $\neg X_i$ otherwise.
To lift Clause $C$ we introduce $b$ variables $Y^i_1,\dots,Y^i_b$ for each variable $X_i$ of $C$.
The lift of $C$, $C \circ g$, is a set of clauses which, in conjunction, are semantically equivalent to $ [g(Y^1_1, \cdots, Y^1_b) = c_1] \lor \cdots \lor [g(Y^k_1, \cdots , Y^k_b)=c_k] $, i.e., the following:


\[
C \circ g \coloneqq \left\{  \bigvee_{1 \leq i \leq k, 1 \leq j \leq b} [Y^i_j =  1 - d^i_j]  ~\Bigl|~ d^1 \in g^{-1}(1 - c_1), \dots , d^k \in g^{-1} (1 - c_k) \right\},
\]
where each $d^i$ is a $b$-bit string and $d^i_j$ is its $j$-th bit.

\begin{observation}
\label{obs:LiftedClause}
    An assignment $(Y^i_j)_{1 \leq i \leq k, 1 \leq j \leq b}$ satisfies every clause in $C \circ g$ if and only the lifted assignment $(x_1, x_2, \cdots , x_k)$ given by 
    $$x_1 = g (Y^1_1, \cdots , Y^1_b), \cdots , x_k = g(Y^k_1, \cdots , Y^k_b)$$
    satisfies the clause $C$.
\end{observation}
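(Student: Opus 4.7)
The plan is to prove both directions of this biconditional by unpacking the definition of $C \circ g$ and reducing each direction to a straightforward case analysis on whether each $g(Y^i_1, \ldots, Y^i_b)$ equals $c_i$ or $1-c_i$.

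For the forward direction, I would argue by contrapositive: suppose the lifted assignment $(x_1, \ldots, x_k)$ falsifies $C$. Then by definition of the literals $[X_i = c_i]$, we must have $x_i = 1 - c_i$ for every $i \in [k]$, i.e., $g(Y^i_1, \ldots, Y^i_b) = 1 - c_i$ for every $i$. Set $d^i := (Y^i_1, \ldots, Y^i_b) \in g^{-1}(1 - c_i)$ and consider the specific clause in $C \circ g$ indexed by this tuple $(d^1, \ldots, d^k)$, namely $\bigvee_{i, j} [Y^i_j = 1 - d^i_j]$. By construction $Y^i_j = d^i_j$ for every $i, j$, so every literal in this clause evaluates to false. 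Hence this particular clause of $C \circ g$ is not satisfied, contradicting the hypothesis.

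For the backward direction, suppose $(x_1, \ldots, x_k)$ satisfies $C$, so there is at least one index $i^\star$ with $x_{i^\star} = c_{i^\star}$, meaning $(Y^{i^\star}_1, \ldots, Y^{i^\star}_b) \in g^{-1}(c_{i^\star})$. Pick an arbitrary clause in $C \circ g$, which is indexed by some tuple $(d^1, \ldots, d^k)$ with $d^i \in g^{-1}(1 - c_i)$ for each $i$. In particular $d^{i^\star} \in g^{-1}(1 - c_{i^\star})$, and since $g^{-1}(c_{i^\star})$ and $g^{-1}(1 - c_{i^\star})$ are disjoint, the tuple $(Y^{i^\star}_1, \ldots, Y^{i^\star}_b)$ differs from $d^{i^\star}$ in at least one coordinate $j^\star$. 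Thus $Y^{i^\star}_{j^\star} = 1 - d^{i^\star}_{j^\star}$, so the literal $[Y^{i^\star}_{j^\star} = 1 - d^{i^\star}_{j^\star}]$ is satisfied, and therefore the chosen clause of $C \circ g$ is satisfied. Since the chosen clause was arbitrary, every clause of $C \circ g$ is satisfied.

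There is no serious obstacle here; the statement is essentially a rephrasing of the defining property of the lift. The only place one has to be slightly careful is in the forward direction, where one must exhibit a \emph{specific} clause of $C \circ g$ that is falsified, and the natural choice is the one indexed by the $Y$-values themselves (which lie in $g^{-1}(1 - c_i)$ precisely because $C$ is falsified by the lifted assignment).
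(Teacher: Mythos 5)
Your proof is correct and matches the intended argument: the forward direction by contrapositive, exhibiting the falsified clause of $C \circ g$ indexed by $d^i = (Y^i_1,\dots,Y^i_b)$, and the backward direction by noting that $g^{-1}(c_{i^\star})$ and $g^{-1}(1-c_{i^\star})$ are disjoint so some $j^\star$ gives a satisfied literal. The paper states this observation without a written-out proof, treating it as immediate from the definition of $C \circ g$, and your unpacking is exactly the verification one would do.
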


For a technical reason, we shall also need the following simple observation:
\begin{observation}
\label{obs:AllVariablesInLift}
    Let $\psi$ be any clause of $C \circ g$. Suppose, one of the variables $Y^i_j$ appears in $\psi$. Then, for all $k \in [b]$, the variable $Y^i_k$ also appears in $\psi$.
\end{observation}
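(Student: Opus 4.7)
The plan is to read the statement directly off the explicit definition of $C \circ g$ given just above. Every clause $\psi \in C \circ g$ has the form
\[
\psi \;=\; \bigvee_{1 \leq i' \leq k,\, 1 \leq j' \leq b} [Y^{i'}_{j'} = 1 - d^{i'}_{j'}]
\]
for some tuple $(d^1, \ldots, d^k)$ with $d^{i'} \in g^{-1}(1 - c_{i'})$ for each $i'$. The indexing set of the outer disjunction is the \emph{full} product $[k] \times [b]$, so $\psi$ contains exactly one literal for every pair $(i', j')$ in that set. In particular, the observation will actually follow in the stronger form: every variable $Y^{i'}_{j'}$ with $i' \in [k]$, $j' \in [b]$ appears in every clause of $C \circ g$, so if any $Y^i_j$ appears then all $Y^i_k$ for $k \in [b]$ do.

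To turn this into a rigorous proof I would simply instantiate the definition and point out that no two of the listed literals can coincide or cancel: each literal $[Y^{i'}_{j'} = 1 - d^{i'}_{j'}]$ mentions the variable $Y^{i'}_{j'}$, and the map $(i', j') \mapsto Y^{i'}_{j'}$ is injective by construction of the lift. Hence no subsumption or simplification within $\psi$ can delete any literal, and the set of variables appearing in $\psi$ is genuinely all of $\{Y^{i'}_{j'} : i' \in [k],\, j' \in [b]\}$. The conclusion then follows by projecting to the single index $i' = i$.

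I do not expect any real obstacle here; the content of the observation is that the lift is ``block-rigid'' in the sense that one cannot have a clause of $C \circ g$ which mentions only a proper subset of the $b$ gadget-variables of a single input coordinate. The main step is therefore just the injectivity remark above, which guarantees that the verbatim expression from the definition is already the reduced form of $\psi$. This structural fact will presumably be used later (in Section~\ref{sec:LowerBound}) when restrictions or queries are analyzed block-by-block on the lifted variables.
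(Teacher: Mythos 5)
Your proof is correct, and indeed proves the stronger statement that every variable $Y^{i'}_{j'}$ with $i' \in [k]$, $j' \in [b]$ appears in every clause of $C \circ g$; the paper gives no explicit proof of this observation, treating it as immediate from the displayed definition of $C \circ g$, which is exactly the reading you carry out. Your extra remark that the map $(i',j') \mapsto Y^{i'}_{j'}$ is injective (so no literal can coincide or cancel) is the one small point that makes the ``read it off the definition'' argument fully rigorous.
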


For a set of clauses $\Phi$ (a CNF formula), we define its lift as $ \Phi \circ g \coloneqq \cup_{C \in \Phi} (C \circ g)$.
We have the following corollary of Observation~\ref{obs:LiftedClause}.
\begin{corollary}
The set of clauses $\Phi$ is unsatisfiable if and only if the set of clauses $\Phi \circ g$ is unsatisfiable.
\end{corollary}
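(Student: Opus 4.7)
The plan is to deduce both directions of the equivalence from a clause-by-clause application of Observation~\ref{obs:LiftedClause}. For the direction "$\Phi \circ g$ satisfiable $\Rightarrow$ $\Phi$ satisfiable", I would start with any satisfying $Y$-assignment of $\Phi \circ g$ and define the lifted assignment $x_i \coloneqq g(Y^i_1, \dots, Y^i_b)$. For each $C \in \Phi$, since $C \circ g \subseteq \Phi \circ g$, the $Y$-assignment satisfies every clause of $C \circ g$, so Observation~\ref{obs:LiftedClause} forces $x$ to satisfy $C$; quantifying over $C \in \Phi$ gives that $x$ satisfies $\Phi$. Contrapositively, $\Phi$ unsatisfiable implies $\Phi \circ g$ unsatisfiable.

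For the opposite direction, starting from any satisfying $x$ of $\Phi$, for each variable $X_i$ I would pick a preimage $Y^i \in g^{-1}(x_i) \subseteq \cube^b$. This step requires $g$ to be surjective, which is automatic for the inner-product gadget $\IP$ used later in the paper (and would need to be flagged as a hypothesis if the corollary is read in isolation). By construction, the lifted assignment induced by this $(Y^i_j)$ agrees with $x$ on every coordinate, so Observation~\ref{obs:LiftedClause} applied to each $C \in \Phi$ tells us that $(Y^i_j)$ satisfies every clause of $C \circ g$, and hence all of $\Phi \circ g$. Contrapositively, $\Phi \circ g$ unsatisfiable implies $\Phi$ unsatisfiable.

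There is no genuine difficulty here: the corollary is essentially a bookkeeping argument that packages Observation~\ref{obs:LiftedClause} once per clause of $\Phi$. The only subtlety worth flagging is the implicit use of surjectivity of $g$ in the second direction, which is harmless because every gadget the paper instantiates with — in particular $\IP$ — is surjective on $\cube$.
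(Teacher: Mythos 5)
Your proof is correct and takes the only natural route: apply Observation~\ref{obs:LiftedClause} clause by clause in both directions. The paper gives no separate argument for this corollary, and your two-paragraph derivation is exactly what is intended.

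Your flag about surjectivity is well taken, and it is a genuine subtlety rather than pedantry. If $g$ is not surjective, the ``$\Phi$ satisfiable $\Rightarrow \Phi \circ g$ satisfiable'' direction fails: take $g \equiv 0$ on $\cube^b$ and $\Phi = \{X_1\}$. Then $\Phi$ is satisfiable, but $X_1 \circ g$ consists of the clauses $\bigvee_j [Y^1_j \neq d_j]$ for every $d \in g^{-1}(0) = \cube^b$, which collectively rule out every possible value of $Y^1$ and are hence unsatisfiable. (The other direction, via $x_i \coloneqq g(Y^i)$, holds for any $g$, and so does Observation~\ref{obs:LiftedClause} itself.) The corollary is thus stated slightly too broadly, but harmlessly so since the paper instantiates it only with $\IP$, which is surjective. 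You were right to make this hypothesis explicit.
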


We remark that if the base set $\Phi$ contains only clauses of width at most $k$, then $\Phi \circ g$ contains clauses of width at most $kb$ and $|\Phi \circ g| \leq 2^{bk}\cdot|\Phi|$.
In particular, a constant-width, poly-size unsatisfiable formula, when lifted by an $O(\log n)$ size gadget, yields an $O(\log n)$-width, poly-size unsatisfiable formula.
Our hard formula will be the stone formula lifted by an inner product gadget $\SP_{n,\rho} \circ \IP$, where $\IP: \cube^b \to \cube$ is the inner-product function, i.e. for each $x,y \in \{0,1\}^{b/2}$, set $\IP(x,y) = x_1y_1 + \cdots + x_{b/2}y_{b/2} \text{ (mod }2\text{)}$ where $b = O(\log n)$ is an even integer.
\arkadev{Here, instead of writing $O(\log n)$, we should write it $\theta\cdot\log n$, where $\theta$ is some constant. Would be nice to fix the value of $\theta$.}

\paragraph{Comparison with related formulas from previous work:}  We briefly mention some formulas that were used in the past, that are related to $\Stone(G_n,\rho)$. First, the formula sans obfuscation $\Stone(G_n)$ was one of the formulas used by Alekhnovich et al. \cite{AJPU02} to yield the first exponential separation between regular and general resolution. This separation was further strengthened by Urquhart \cite{U11} in a follow-up work, using the following more involved formula. Fix a bijective placement of stones to the vertices of $G_n$. This reduces $\Stone(G_n)$ to a plain pebbling formula on pyramid graph, denoted by $\text{Peb}(G_n)$. This reduces the number of clauses to linear in $N$. Urquhart considers the 2-bit XOR lift of such a formula, i.e. $\text{Peb}(G_n) \circ \oplus$. This formula blows up the number of clauses, but still keeps the number of variables to $O(N)$. Finally, he shows that for a suitable $\rho$, the obfuscation of $\text{Peb}(G_n) \circ \oplus$ by $\rho$, just as we obfuscate $\Stone(G_n)$ by $\Stone(G_n,\rho)$, yields a separation between regular and general resolution that is stronger than the one by Alekhnovich et al. \cite{AJPU02}. More recently, Vinyals et al. \cite{VEJN20} worked with a different \emph{sparsification} of $\Stone(G_n)$. This comes about naturally by considering $\Stone(G_n)$ as a densification of $\text{Peb}(G_n)$ by using a complete bi-partite graph with $N$ vertices on each side. Roughly speaking, Vinyals et al. used a constant degree bi-partite expander gadget with $\text{Peb}(G_n)$, inspired by the earlier work of Razborov \cite{R16}. This results in more modular and optimal arguments. However, the lift by a stifled gadget of a base stone formula, like we do as in $\Stone(G_n,\rho) \circ \IP$, seems not to have been considered earlier.

\section{Resolution Proof Systems and Branching Programs}
\label{sec:BP_Res}
A proof in a propositional proof system starts from a set of clauses $\Phi$, called axioms, that is purportedly unsatisfiable.  It generates a proof by deriving the empty clause from the axioms, using inference rules. The main inference rule in the standard resolution, called the resolution rule,  derives  a clause $A \vee B$  from clauses $A \vee x$ and $B \vee \lnot x$ (i.e., we resolve the variable $x$).
If we can derive the empty clause from the original set $\Phi$ then it proves the set $\Phi$ is unsatisfiable.
We will need the following basic and well known fact  that states resolution is complete without being too inefficient.

\begin{lemma} \label{lemma:resolution-complete}
    Let $C$ be any clause, and $\Phi$ be any CNF formula over $n$ Boolean variables and of polynomial size, that semantically implies $C$. Then, $C$ can be derived from $\Phi$ by a resolution proof of size at most $2^{O(n)}$. 
\end{lemma}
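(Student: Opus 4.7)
The plan is to reduce the statement to the folklore completeness bound, namely that every unsatisfiable CNF on $n$ variables admits a resolution refutation of size $2^{O(n)}$. Writing $C = \ell_1 \vee \cdots \vee \ell_k$, the hypothesis $\Phi \models C$ means that the CNF $\Phi' \coloneqq \Phi \cup \{\neg \ell_1, \ldots, \neg \ell_k\}$ is unsatisfiable. If I can produce a resolution refutation $\pi$ of $\Phi'$ of size $2^{O(n)}$, then a standard deduction-theorem-style transformation turns $\pi$ into a resolution derivation of $C$ from $\Phi$ of essentially the same size.

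For the completeness bound I would use the standard decision-tree-to-resolution correspondence. Take any complete decision tree that queries all $n$ variables: every root-to-leaf path fixes all variables and there are at most $2^n$ leaves. At each leaf the corresponding complete assignment must falsify at least one clause of $\Phi'$; label the leaf by such a clause. Reading the tree bottom-up gives a tree-like resolution refutation, where each internal node (querying variable $x$) is a resolution step on $x$ combining the clauses labelling its two children. This yields a tree-like refutation of $\Phi'$ of size at most $2^n+1$, which is $2^{O(n)}$ even after multiplying by the polynomial size of $\Phi$.

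For the deduction-theorem step the plan is to reintroduce the literals of $C$ throughout $\pi$. For each clause $D$ appearing in $\pi$, form $D^{+}$ by adding back the literal $\ell_i$ whenever the unit axiom $\neg \ell_i$ was used on some path from an axiom of $\pi$ down to $D$. A routine induction on the structure of $\pi$ shows that each $D^{+}$ is either an axiom of $\Phi$, or is obtained from the $E^{+}$'s of its predecessors by resolving on exactly the same pivot as the original step (steps that originally resolved on a variable of $C$ become trivial and can be dropped, with the $\ell_i$ simply carried along). The final clause of $\pi$, which was $\varnothing$, becomes exactly $\ell_1 \vee \cdots \vee \ell_k = C$, producing a derivation of $C$ from $\Phi$ of size $2^{O(n)}$.

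The only step with real content is verifying the inductive invariant for the surgery in the third paragraph, i.e., checking that $D^{+}$ is always genuinely obtainable by resolution from its predecessors' $D^{+}$'s. Everything else, including the decision tree construction and the final size bookkeeping, is standard, so this surgery is where I would focus care when writing out the formal proof.
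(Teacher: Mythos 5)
Your proof is correct, but it takes a genuinely different route from the paper's. Both arguments pass through the same auxiliary formula $\Phi' = \Phi \cup \{\neg\ell_1,\ldots,\neg\ell_k\}$ and the decision-tree-to-resolution correspondence, but they diverge in how they recover a derivation of $C$ from $\Phi$. You produce an arbitrary refutation of $\Phi'$ (via a canonical depth-$n$ decision tree) and then perform a separate deduction-theorem surgery, tracking for each line which of the unit axioms $\neg\ell_i$ it depends on and adding the literal $\ell_i$ back in. The paper instead builds a \emph{structured} decision tree for $\operatorname{\mathit{Search}}(\Phi')$ that queries the variables of $C$ first and the remaining variables afterward; the node $v$ reached when all $C$-variables are set to satisfy $\neg\ell_1,\ldots,\neg\ell_k$ is then labelled by (a subclause of) $C$, and its subtree, whose leaves are all axioms of $\Phi$, already \emph{is} the desired derivation. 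In other words, the paper bakes the deduction theorem into the query order of the decision tree, so no surgery is needed. Your route is more modular and more clearly reduces to the textbook completeness bound, while the paper's is more direct and sidesteps exactly the bookkeeping you flag as "the only step with real content." You are right to flag it: the induction is routine when the pivot is not a variable of $C$, but when it is, one must check that the carried-along $C$-literals do not collide with the pivot and produce a tautological $D^{+}$; the standard resolution is that a step resolving with a unit axiom $\neg\ell_i$ becomes a no-op (the two $D^{+}$'s coincide), and in the remaining cases one derives a subclause of $D^{+}$ rather than $D^{+}$ itself. That last point also means both proofs, strictly speaking, yield a subclause $C' \subseteq C$ rather than $C$; this is the usual convention in resolution and is harmless for the paper's application in Lemma~\ref{lemma:lifted-upper-bound}.
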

\pavel{Put a reference for this lemma?}

Linear resolution, aka ResLin and introduced by Raz and Tzamaret \cite{RT08},  is a generalization of standard resolution, using linear clauses (disjunction of linear equations over $\F_2$) to express lines of a proof.
It consists of two rules:
\begin{description}
    \item[Resolution Rule:] From linear clauses $A \vee (\ell = 0)$ and $B \vee (\ell = 1)$ derive a linear clause $A \vee B$.
    \item[Weakening Rule:] From a linear clause $A$ derive a linear clause $B$ that is semantically implied by $A$ (i.e., any assignment satisfying $A$ also satisfies $B$).
\end{description}
The length of a resolution (or ResLin) refutation of a formula $\Phi$ is the number of applications of the rules above in order to refute the formula $\Phi.$
The width of a resolution (or ResLin) refutation is the maximum width of any (linear) clause that is used in the resolution proof.

It is known that a resolution proof and a linear resolution proof, for an unsatisfiable set of clauses $\Phi$, correspond to a branching program and a linear branching program, respectively, for a search problem $\Search(\Phi)$ (see for example Garg et al \cite{GGKS20}, who credit it to earlier work of Razborov \cite{Razborov95} that was simplified by Pudl{\'a}k \cite{Pudlak10} and Sokolov\cite{Sokolov17}) 
that is defined as follows.
For a given assignment $\alpha$ of the $n$ variables of $\Phi$, one needs to find a clause in $\Phi$ that is unsatisfied by $\alpha$ (at least one exists as the set $\Phi$ is unsatisfiable).
A linear branching program computing a search problem $\Prob \subseteq \F_2^n \times O$ is defined as follows.
\begin{itemize}
    \item There is a directed acyclic graph $\mathcal{P}$ of one source and some sinks. Each non-sink node has out-degree at most two.
    For an inner node $v$ the two out-neighbors $u$ and $w$ (i.e., there are edges $(v,u)$ and $(v,w)$ in $\mathcal{P}$) are called children of $v$.
    \item Each node $v$ of $\mathcal{P}$ is labeled by an affine space $A_v \subseteq \F^n_2$.
    \item The source is labeled by $\F^n_2$. 
    \item Let $v$ be a node of out-degree 2 and $u$ and $w$ be children of $v$. Then, $A_u = A^0_v$ and $A_w = A^1_v$, where $A^c_v = \{ x \in A_v \mid \langle f_v, x \rangle = c\}$ for a linear query $f_v = \F^n_2$ and $c \in \{0,1\}$. We call such $v$ a \emph{query node}.
    \item Let $v$ be a node of out-degree 1 and $u$ be the child of $v$. Then, $A_v \subseteq A_u$. We call such $v$ a \emph{forget node}.
    \item Each sink $v$ of $\mathcal{P}$ has an assigned output $o_v \in O$ such that $A_v$ is $o_v$-monochromatic according to $\Prob$, i.e., $\alpha \in A_v \implies (\alpha,\,o_v) \in \Prob$.
\end{itemize}

A standard/ordinary branching program is defined analogously but its nodes are labeled by cubes instead of affine spaces. Consequently, variables instead of arbitrary linear functions are queried at its query nodes.

The correspondence between a branching program computing $\Search(\Phi)$ and a (linear) resolution proof refuting $\Phi$ is roughly the following.
We can represent the resolution proof as a directed acyclic graph where nodes are labeled by (linear) clauses.
The sources are labeled by clauses of $\Phi$ and there is exactly one sink that is labeled by an empty clause. 
Each node that is not a source has at most two parents and it corresponds to an application of the (linear) resolution rule (if the node has 2 parents), or the weakening rule (if the nodes has 1 parent). 
To get a (linear) branching program for $\Search(\Phi)$ we just flip the direction of the edges in the resolution graph and negate the clauses that are used for node labeling.
Thus, each node is labeled by a cube or an affine space, the query nodes correspond to applications of the resolution rule, and the forget nodes correspond to applications of the weakening rule.
It is clear the size of a branching program $\cP$ (number of nodes of $\cP$) is exactly the same as length of the corresponding resolution refutation.

Regular resolution is a subsystem of the resolution system, such that in any path of the resolution proof graph each variable can be resolved at most once.
A read-once branching program corresponds to a regular resolution proof, i.e., on each directed path from the source to a sink each variable is queried at most once.
There is interest in two generalizations of regular resolution to linear regular resolution -- top-regular linear resolution~\cite{GPT22} and bottom-regular linear resolution~\cite{EGI23} (in both papers called as regular linear resolution).
We will define both of them by their corresponding linear branching programs.

\begin{definition}[\cite{GPT22}]
Let $v$ be a node of a linear branching program $\mathcal{P}$.
Let $\Pre(v)$ be the space spanned by all linear functions queried on any path from the source of $\mathcal{P}$ to $v$.
Let $\Post(v)$ be the space spanned by all linear functions queried on any path from $v$ to any sink of $\mathcal{P}$.
\end{definition}

A linear branching program is \emph{top-read-once}\footnote{Gryaznov et al.~\cite{GPT22} used just the name weakly read once for such programs.}~\cite{GPT22} if for each query node $v$, we have $f_v \not \in \Pre(v)$.
A linear branching program is \emph{bottom-read-once}~\cite{EGI23} if for each edge $(v,u)$ such that $v$ is a query node holds that $f_v \not \in \Post(u)$.
A linear resolution proof is \emph{top-regular}, or \emph{bottom-regular} if the corresponding branching program is top-read-once, or bottom-read-once, respectively.
\arkadev{Two things: perhaps it is better to hyphenate the whole thinf, i.e. say bottom-read-once instead of bottom read-once. Ditto for top-read-once. Second, we perhaps should mention what would be a read-once linear branching program. This should be a common generalization of both bottom and top read-once.}
We use both notion of branching program and resolution to state and prove our result, whichever is more suitable for the presentation at hand.
Our separation is only for bottom-regular ResLin, i.e., for bottom-read-once linear branching program, which we abbreviate to BROLBP.

\begin{lemma}[Lemma 2.6~\cite{EGI23} stated for branching programs]
\label{lem:PostDim}
Let $\cP$ be a BROLBP computing a search problem $\Prob \subseteq \{0,1\}^n \times O$.
Let $v$ be a node of $\cP$ such that there is a path of length $t$ from the source of $\mathcal{P}$ to $v$.
Then, $\dim(\Post(v)) \leq n - t$.
\end{lemma}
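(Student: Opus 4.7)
The plan is to induct along any path from the source down to $v$, using the bottom-read-once hypothesis to extract one new linearly independent direction in $\F_2^n$ per query step. Fix such a path $v_0 \to v_1 \to \cdots \to v_t = v$ and, for each $i < t$ at which $v_i$ is a query node, let $f_i$ denote the linear form queried there. For the purpose of the bound we may assume each $v_i$ with $i < t$ is a query node: a forget node makes no query and only shortens the effective length of the path, which would only strengthen the conclusion.

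The engine of the argument is a simple monotonicity: $\Post(\cdot)$ can only shrink along a directed edge. If $v_k \to v_{k+1}$ is any edge of $\cP$, then every $v_{k+1}$-to-sink path is the tail of a $v_k$-to-sink path, so $\Post(v_{k+1}) \subseteq \Post(v_k)$. Iterating this observation from position $i+1$ down to position $t$, and noting that $f_k \in \Post(v_k)$ whenever $v_k$ is a query node, yields
\[
\Post(v_{i+1}) \;\supseteq\; \Span\{f_{i+1}, f_{i+2}, \ldots, f_{t-1}\} + \Post(v).
\]
Combined with the bottom-read-once hypothesis $f_i \notin \Post(v_{i+1})$, this says that $f_i$ is not a linear combination of $f_{i+1}, \ldots, f_{t-1}$ and elements of $\Post(v)$.

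A reverse induction on $i$ starting at $t-1$ then shows that $f_0, f_1, \ldots, f_{t-1}$, together with any basis of $\Post(v)$, form a linearly independent subset of $\F_2^n$ of cardinality $t + \dim(\Post(v))$. Since an $n$-dimensional space admits at most $n$ linearly independent vectors, $t + \dim(\Post(v)) \leq n$, which rearranges to the stated bound. The one place I expect to require real care is verifying the $\Post$-monotonicity chain displayed above; once that containment is settled, the bottom-read-once property delivers the needed independence at each step for free and the rest is a dimension count in $\F_2^n$.
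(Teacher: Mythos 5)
Your core argument is sound and is the natural way to prove this lemma: the $\Post$-monotonicity chain $\Post(v_{i+1}) \supseteq \Post(v_{i+2}) \supseteq \cdots \supseteq \Post(v)$ combined with $f_k \in \Post(v_k)$ gives the displayed containment, and then the bottom-read-once hypothesis $f_i \notin \Post(v_{i+1})$ drives a reverse induction producing $t + \dim(\Post(v))$ linearly independent vectors in $\F_2^n$, from which the bound follows by a dimension count. (The paper does not spell out a proof, deferring to~\cite{EGI23}, but this is exactly the argument one expects.)

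The one flaw is your handling of forget nodes, which is stated backwards. If the path $v_0 \to \cdots \to v_t$ contains forget nodes, you only extract a query form $f_i$ at the $q \le t$ query nodes, so the independence argument yields $q + \dim(\Post(v)) \le n$, i.e. $\dim(\Post(v)) \le n - q$. Since $q \le t$, this is \emph{weaker} than $n-t$, not stronger, so forget nodes cannot be waved away by saying they strengthen the conclusion. In fact, read literally with $t$ counting every edge, the lemma is false: a path consisting entirely of forget nodes has length $t$ yet places no constraint on $\dim(\Post(v))$. The resolution is that $t$ here must be read as the number of \emph{query} nodes along the path; this is exactly how the lemma is invoked in Lemma~\ref{lemma:large-rank-stone}, where the relevant node is $\cP(\beta,t)$, defined in Section~\ref{sec:Outline} as the node reached after making $t$ \emph{linear queries} on $\beta$. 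Under that reading forget nodes simply never enter the count and your argument goes through verbatim; the ``we may assume each $v_i$ is a query node'' reduction is not needed, and the justification you gave for it should be dropped.
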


\begin{lemma}[Lemma 2.3~\cite{EGI23} stated for branching programs]
\label{lem:BPPath}
Let $\cP$ be a linear branching program computing a search problem $\Prob \subseteq \{0,1\}^n \times O$.
Let $u$ and $v$ be nodes of $\cP$ such that there is a directed path $p$ from $u$ to $v$.
Let $(M|c)$ be the system of linear equations given by queries along the path $p$ and $A$ be the affine space of solution of $(M|c)$, i.e., $A = \{\alpha \in \{0,1\}^n \mid M\alpha = c\}$.
Then, for $A_u$ and $A_v$ the affine spaces associated with $u$, and $v$, respectively, holds that $A_u \cap A \subseteq A_v$.
\end{lemma}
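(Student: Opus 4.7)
The plan is to prove this by induction on the length $t$ of the directed path $p$ from $u$ to $v$. The base case $t=0$ is immediate: $u=v$, the system $(M|c)$ is empty so $A = \F_2^n$, and thus $A_u \cap A = A_u = A_v$.

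For the inductive step, I would split off the last edge of $p$. Let $w$ be the penultimate node, so $p$ decomposes as a length-$t$ path $p'$ from $u$ to $w$ followed by the edge $(w,v)$. Write $(M'|c')$ for the system of queries along $p'$ and $A'$ for its solution set. By the inductive hypothesis, $A_u \cap A' \subseteq A_w$. Now I would case-split on the type of $w$. If $w$ is a forget node, then no new query is added, so $(M|c) = (M'|c')$ and $A = A'$; by the definition of a forget node $A_w \subseteq A_v$, which gives $A_u \cap A = A_u \cap A' \subseteq A_w \subseteq A_v$. If $w$ is a query node with query $f_w$ and the edge $(w,v)$ corresponds to the answer $b \in \{0,1\}$, then by definition $A_v = \{x \in A_w : \langle f_w, x\rangle = b\}$, while the extended system satisfies $A = A' \cap \{x : \langle f_w, x\rangle = b\}$. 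Combining these,
\[
A_u \cap A \;=\; (A_u \cap A') \cap \{x : \langle f_w, x\rangle = b\} \;\subseteq\; A_w \cap \{x : \langle f_w, x\rangle = b\} \;=\; A_v,
\]
completing the induction.

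I do not expect any genuine obstacle here: the statement is essentially a bookkeeping fact about how the affine-space labels of a linear branching program are defined along a walk, and the only subtlety is remembering to handle forget nodes separately from query nodes so that the recurrence on $A_v$ matches up with the recurrence on the linear system $(M|c)$. The one thing I would double-check is that the convention used for the direction of the edges in $\mathcal{P}$ (source-to-sink, as given in the definition) matches the direction of the path $p$ in the lemma, so that the inductive hypothesis applies with the correct orientation; this is just a matter of being careful with notation and does not affect the structure of the argument.
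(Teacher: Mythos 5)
Your proof is correct, and the induction on path length with the case split between forget nodes and query nodes is the natural (and essentially only) way to prove this. The paper itself does not include a proof of this lemma --- it is cited as Lemma~2.3 of Efremenko, Garl\'{i}k and Itsykson~\cite{EGI23} with the phrase ``stated for branching programs'' --- so there is no in-paper argument to compare against, but your argument matches what the cited source does and is precisely the bookkeeping one expects. One small point worth making explicit in the write-up: in the query-node case you implicitly use that the edge $(w,v)$ on the path $p$ determines the answer bit $b$ uniquely (because $A_v$ equals $A_w^0$ or $A_w^1$ depending on which child $v$ is), which is what makes ``the system of equations along $p$'' well-defined; you clearly understand this but it is worth a sentence.
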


Even more restrictive subsystems, are tree-like resolution, and tree-like ResLin.
These subsystems correspond to decision trees and parity decision trees.
A parity decision tree (PDT) is a linear branching program such that its underlying graph is a tree, and a decision tree (DT) is a restriction where we query only bits of the input, instead of linear functions.
It is clear that tree-like resolution is a subsystem of regular resolution.
Analogously, tree-like ResLin is a subsystem of both bottom-regular and top-regular ResLin.

It is easy to see that strongly read-once linear branching programs are both top-read-once and bottom-read-once. Chattopadhyay and Liao \cite{CL23} showed that strongly read-once linear branching programs can simulate parity decision trees. A super-polynomial separation between tree-like ResLin and bottom-regular ResLin follows from the lifting theorems in Chattopadhay et al \cite{CFKMP21} as well as in Beame and Koroth \cite{BK23}. We have the following containments:
\begin{center}
\begin{tikzpicture}
\node (DT) [process] {Tree-Like Resolution};
\node (PDT) [process, above of= DT, yshift= 0.5cm] {Tree-Like Linear Resolution};
\node (SROLBP) [process, above of= PDT, yshift= 0.5cm] {Strongly Regular Linear Resolution};
\node (bottom-regular)[process, above of= SROLBP, yshift= 0.5cm, xshift=-5cm]{Bottom-Regular Linear Resolution};
\node(top-regular)[process, above of= SROLBP, xshift=5cm, yshift=0.5cm]{Top-Regular Linear Resolution} ;
\node(ResLin)[process, above of= top-regular, xshift=-5cm, yshift=0.5cm]{General Linear Resolution};
\draw [arrow] (DT)--(PDT);
\draw [arrow] (PDT)--(SROLBP);
\draw [arrow] (SROLBP) -- (bottom-regular);
\draw [arrow] (SROLBP) -- (top-regular);
\draw [arrow] (bottom-regular) -- (ResLin);
\draw [arrow] (top-regular) -- (ResLin);
\end{tikzpicture}

\end{center}

In this paper we show the existence of a CNF formula with a polynomial sized resolution refutation for which any bottom-regular linear resolution refutation requires exponential size. Thus, we show that the containment \textit{bottom-regular linear resolution} $\subseteq$ \textit{general ResLin} is strict. 

\subsection{ROLBP Computing Boolean Function}  \label{sec:BooleanFunction}
In the context of computing Boolean functions, branching programs are defined, usually, in a more relaxed fashion in a certain sense. For instance, ordinary branching programs are defined without placing the restriction that the set of inputs reaching a node can be contained in a non-trivial sub-cube. This is something we insist when we define BPs here as our focus is on capturing the limitations of those BPs that are derived from a resolution proof DAG by reversing the direction of its edges. It, possibly, would have been more meaningful to call these latter objects affine DAGs, but we chose to call it linear branching programs for the sake of continuity wrt the earlier works by Efremenko et al. \cite{EGI23} and Gryaznov et al. \cite{GPT22}. In this section, we take the liberty of indeed calling them DAGs to compare them with BPs. Affine DAGs severely restrict the power of computing Boolean functions. This is because the set of sink nodes of such a DAG of small size computing a Boolean function $f$ simply provides an efficient affine cover of $f^{-1}(0)$ as well as $f^{-1}(1)$. Thus, immediately, one concludes that any affine DAG, without any restriction on the number of reads of a variable, computing the Inner-product on $n$ bits requires $2^{\Omega(n)}$ size\footnote{In fact, the case for cube DAGs is known to be more dramatic. If a Boolean function $f$ can be computed by a cube DAG of size $s$, then it can be also computed by a decision tree of size $s^{O(\log(s)\cdot\log n)}$.} as $\IP$ has large affine cover number. On the other hand, $\IP$ can be easily seen to be computed by a linear-size read-once and bit-querying branching program. 

On the other hand, the situation for problems of searching a falsified clause, is quite different. As Lemma 2.4 in the work of Efremenko et al. \cite{EGI23} proves, for any unsatisfiable CNF $\Psi$, any top-read-once linear branching program solving $\Search\big(\Psi\big)$ gives rise to a top-read-once affine DAG for $\Search\big(\Psi\big)$ with hardly any blow-up in its size. The proof can be easily verified to additionally yield that a strongly read-once linear BP for $\Search\big(\Psi\big)$, completely analogously, yields a strongly read-once affine DAG for $\Search\big(\Psi\big)$ with no essential blow-up to its size. Thus, our main result, Theorem~\ref{thm:MainResult} holds equally for strongly read-once linear branching programs that are not restricted by definition to be affine DAGs.

\section{Linear Algebraic Facts} \label{sec:linear}
In this section, we will describe the notions of linear algebra that we will need in our arguments. Let us introduce some notation first. 
Let $M \in \F^{t \times m}_2$ be a matrix. We denote the row space of $M$ by $\Row(M)$. For a vector $c \in \{0,1\}^t$, $\Space(M,c)$ is the affine space of solutions to the linear system $(M|c)$, i.e., $\Space(M,c) = \{ \alpha \in \{0,1\}^m \mid M\cdot \alpha = c\}$. 

The entries of vectors of $\F^{mb}_2$ are naturally divided into $m$ blocks, each having $b$ co-ordinates/bits, i.e., for $j \in [m]$, the $j$-th block contains the coordinates $(j-1)b + 1, \dots, jb$. 
For $j \in [m]$, $\text{BLOCK}(j) = \{ (j-1)b+1, \cdots , jb \}$. 
Also for $T \subseteq [m]$ define $\text{BLOCK}(T) = \cup_{t \in T} \text{BLOCK}(t)$. 
For a vector $u \in \F^{mb}_2$ and a block $j \in [m]$, $u^j \in \F^b_2$ is the vector corresponding to the block $j$ of $u$, i.e., $u^j = (u_{(j-1)b+1}, \dots, u_{jb})$.
We say a vector $u \in \F^{mb}_2$ \emph{touches} a block $j \in [m]$ if the vector $u^j$ is non-zero. 
A set of vectors $R \subseteq \F^{mb}_2$ touches a block $j$ if at least one of the vectors in $R$ touches $j$.
Let $U$ be a subspace of $\F^{mb}_2$ and $T \subseteq [m]$ be a set of blocks. 
The subspace $U_T$ of $U$ is the linear space of all vectors $u$ that do not touch any block outside $T$, i.e., $U_T = \{u \in U \mid \forall j \not \in T: u^j = (0,\dots, 0) \}$.
For $S = \text{BLOCK}(T)$, the subspace $U_{\downarrow T}$ of $\F^S_2$ is the projection of $U$ onto $T$, i.e., $ U_{\downarrow T} = \{x \in \F^{S}_2 \mid \exists y \in \F^{[mb] \setminus S}_2: (x,y) \in U \}$.
We call a tuple of vectors $R = (u_1, \dots , u_t), u_i \in \F^{mb}_2$ to be \emph{safe} if the following condition holds:
\begin{itemize}
    \item The vectors $(u_1, \dots , u_t)$ form a matrix $M \in \F^{t \times mb}$ in echelon form, i.e., there are $t$ distinct coordinates $a_1, \dots , a_t \in [mb]$ such that for all $i, j \in [t]$:

\begin{align*}
    (u_i)_{a_j} = \begin{cases}
        1 & \text{ if } i=j \\
        0 & \text{ otherwise}
    \end{cases}
\end{align*}
In other words, the matrix $M$ restricted to the columns $a_1,\dots,a_t$ is the identity matrix $I_t \in \F^{t \times t}_2$.

\item Moreover, each pivot $a_i$ lies in a distinct block.
\end{itemize}

The $a_i$'s are called the pivot variables of $R$. 
There might be multiple possible choices for the tuple of pivot variables $(a_1, \dots , a_t)$. In that case we pick any valid choice, say the lexicographically smallest valid choice, and call it the set of pivots.

A subspace $U$ of $\F^{mb}_2$ is \emph{spread} if any set of $k$ linearly independent vectors of $U$ touches at least $k$ blocks, for each $1 \le k \le m $.
We say a set of blocks $T \subseteq [m]$ is an \emph{obstruction} of a space $U$ if $U_{\downarrow \bar{T}}$ is spread, where $\bar{T}$ is complement of $T$, i.e., $\bar{T} = [m] \setminus T$. An obstruction $T \subseteq [m]$ of a space $U$ is minimal if any proper subset $T' \subset T$ is not an obstruction of $U$, i.e., $U_{\downarrow \bar{T'}}$ is not spread.
Efremenko et al.~\cite{EGI23} showed the following result.

\begin{theorem}[Theorem 3.1, Efremenko et al.~\cite{EGI23}]
\label{thm:SafeBasis}
    Let $U$ be a spread subspace of $\F^{mb}_2$ with $\text{dim}(U) \leq m$. Then, $U$ has a safe basis.
\end{theorem}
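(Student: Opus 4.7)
The plan is to find $t := \dim(U)$ columns of a matrix representing $U$, lying in $t$ \emph{distinct} blocks, that are linearly independent over $\F_2$; a change of basis then converts this into a safe basis. Fix an arbitrary basis $v_1, \dots, v_t$ of $U$ and let $M \in \F_2^{t \times mb}$ be the matrix whose rows are these vectors. Viewing the $mb$ columns of $M$ as vectors in $\F_2^t$ partitioned according to the block structure, I look for columns $a_1, \dots, a_t$ lying in $t$ distinct blocks such that the $t \times t$ submatrix $M|_{a_1,\dots,a_t}$ is invertible. Once this is done, left-multiplying $M$ by the inverse of $M|_{a_1,\dots,a_t}$ yields a new matrix whose rows still span $U$ and whose restriction to those columns is $I_t$; by definition, its rows form a safe basis with pivots $a_1,\dots,a_t$ in distinct blocks.

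Finding such columns is a matroid intersection problem. Consider two matroids on the ground set $[mb]$ of column indices: the linear matroid $\mathcal{L}$, in which a subset is independent iff the indexed columns of $M$ are linearly independent over $\F_2$, and the partition matroid $\mathcal{P}$ induced by the $m$ blocks, in which a subset is independent iff it contains at most one index per block. A common independent set of size $t$ is exactly a choice of $t$ columns in distinct blocks that are linearly independent. By the matroid intersection min-max formula, together with the standard observation that the minimizer can be replaced by a union of entire blocks (shrinking $A$ to $\bigcup_{j : B_j \subseteq A} B_j$ can only decrease the sum, since the $\mathcal{P}$-rank of the complement is unchanged while the $\mathcal{L}$-rank of $A$ can only drop), the existence of such a common independent set reduces to showing that for every $S \subseteq [m]$,
\[
\dim(U_{\downarrow S}) \ \geq \ t - |\bar S|,
\]
where $\bar S = [m] \setminus S$; here I use that the rank in $\mathcal{L}$ of the columns indexed by $\bigcup_{j \in S} B_j$ equals $\dim(U_{\downarrow S})$, because row rank equals column rank and the rows of the corresponding submatrix span $U_{\downarrow S}$.

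The spread hypothesis makes this inequality immediate. By rank-nullity applied to the projection $U \to U_{\downarrow S}$, whose kernel is exactly $U_{\bar S}$, one has $\dim(U_{\downarrow S}) = t - \dim(U_{\bar S})$. Any $k$ linearly independent vectors of $U_{\bar S}$ touch only blocks in $\bar S$, and the spread property forces them to touch at least $k$ blocks, so $k \le |\bar S|$; hence $\dim(U_{\bar S}) \le |\bar S|$, giving the desired bound. The main obstacle I anticipate is recognizing this as a matroid intersection question rather than attempting a direct inductive construction (where controlling what happens to the spread property after deleting a coordinate and projecting out a block is awkward); once the correct formulation is in place, the spread condition yields exactly the required rank inequality via rank-nullity, and the rest is routine.
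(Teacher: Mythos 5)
Your proof is correct. The paper itself does not reproduce a proof of Theorem~\ref{thm:SafeBasis}---it is cited from Efremenko et al.~\cite{EGI23}---so there is no in-text argument to line you up against, but every step you give checks out. The reduction to matroid intersection is sound: a common independent set of size $t = \dim(U)$ for the linear matroid $\mathcal{L}$ (on the columns of a fixed basis matrix $M$ of $U$) and the partition matroid $\mathcal{P}$ (at most one column per block) is exactly a choice of $t$ linearly independent columns lying in $t$ distinct blocks, and left-multiplying $M$ by the inverse of the resulting $t\times t$ pivot submatrix preserves the row space while placing $I_t$ on those columns, so the new rows form a safe basis; the hypothesis $\dim(U)\le m$ guarantees there are enough blocks. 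Your restriction of the Edmonds minimizer to unions of blocks is also justified correctly: replacing $A$ by the union of blocks entirely contained in $A$ shrinks $A$ (so $r_{\mathcal{L}}$ can only drop) while leaving the set of blocks that meet the complement unchanged (so $r_{\mathcal{P}}$ of the complement is unchanged). With $A = \text{BLOCK}(S)$, $r_{\mathcal{L}}(A) = \dim(U_{\downarrow S})$ by row rank $=$ column rank, and rank--nullity (equivalently Observation~\ref{obs:linear-decomposition} with $T=\bar S$) gives $\dim(U_{\downarrow S}) = \dim(U) - \dim(U_{\bar S})$, so the required inequality reduces to $\dim(U_{\bar S}) \le |\bar S|$, which is precisely the spread condition applied to a basis of $U_{\bar S}$ (note $\dim(U_{\bar S}) \le \dim(U) \le m$, so you stay in the range where the spread hypothesis is stated). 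The same condition is what Rado's theorem on independent partial transversals of a matroid would ask for, so either formulation closes the argument; yours is complete and self-contained.
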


 The following is a basic fact.
\begin{observation}  \label{obs:linear-decomposition}
Let $U$ be any subspace of $\mathbb{F}_2^{mb}$ and $T \subseteq [m]$ be a set of blocks. Then, $$\text{dim}(U) = \text{dim}(U_T) + \text{dim}\big(U_{\downarrow \bar{T}}\big).$$
\end{observation}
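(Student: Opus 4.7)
The plan is to recognize the identity as an instance of the rank-nullity theorem applied to the natural coordinate-projection out of $U$. Concretely, I would let $S = \text{BLOCK}(T)$ and $\bar{S} = [mb] \setminus S = \text{BLOCK}(\bar{T})$, and consider the linear map
\[
\pi : U \longrightarrow \F_2^{\bar{S}}, \qquad \pi(u) = u|_{\bar{S}},
\]
that restricts a vector of $U$ to the coordinates lying in the blocks outside $T$. This is $\F_2$-linear since it is just a coordinate projection.

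Next I would identify the image and kernel of $\pi$ with the two objects appearing in the statement. The image is, unwinding the definition, $\{u|_{\bar{S}} : u \in U\}$, which is precisely the projection $U_{\downarrow \bar{T}} \subseteq \F_2^{\bar{S}}$. The kernel consists of those $u \in U$ with $u|_{\bar{S}} = 0$, equivalently $u^j = (0,\dots,0)$ for every $j \notin T$; by definition this is exactly $U_T$. Both identifications are immediate from the definitions given in Section~\ref{sec:linear}, so there is no real obstacle here.

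Finally I would invoke rank-nullity for linear maps of finite-dimensional $\F_2$-vector spaces to conclude
\[
\dim(U) \;=\; \dim(\ker \pi) + \dim(\operatorname{im} \pi) \;=\; \dim(U_T) + \dim\bigl(U_{\downarrow \bar{T}}\bigr),
\]
which is the claimed equality. The only thing to keep an eye on is making sure the two "restriction" notions in the paper (zeroing out blocks versus projecting away blocks) line up as kernel and image of the same map; once that bookkeeping is done the proof is essentially one line.
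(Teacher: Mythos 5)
Your proof is correct, and it does the same mathematics as the paper but packages it more cleanly. The paper chooses a complement $U'$ of $U_T$ inside $U$, observes that the block-projection onto $\bar{T}$ sends $U'$ bijectively onto $U_{\downarrow\bar T}$ (injectivity is argued by hand: if linearly independent $u'_1,\dots,u'_r \in U'$ had dependent projections, a nonzero vector of $U_T$ would intersect $U'$), and adds up dimensions. That is exactly the standard proof of rank--nullity, specialized to the coordinate projection $\pi: U \to \F_2^{\bar S}$, unnamed. You instead identify $\ker\pi = U_T$ and $\operatorname{im}\pi = U_{\downarrow\bar T}$ directly from the definitions and invoke rank--nullity as a black box, which is shorter and avoids re-deriving the injectivity-on-a-complement step. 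Both are fully rigorous; yours is arguably the preferable presentation, since it names the governing principle instead of re-proving it, and the two identifications you perform (kernel vs.\ zeroing-out-blocks, image vs.\ projection) are exactly the bookkeeping one must do in either write-up.
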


\begin{proof}
    Let $U'$ be a suitable subspace of $U$ such that $U = U_T \oplus U'$. It is simple to see that $U_{\downarrow \bar{T}} = (U')_{\downarrow \bar{T}}$. This is because every vector $u \in U$ can be written as $x + u'$, with $x \in U_T$ and $u' \in U'$. But, as $x_{\downarrow \bar{T}} = 0$, we have $u_{\downarrow \bar{T}} = u'_{\downarrow \bar{T}}$. Hence, we conclude that $\text{dim}\big( U_{\downarrow \bar{T}}\big) = \text{dim}\big( (U')_{\downarrow \bar{T}}\big) \le \text{dim}\big(U'\big)$.
    To establish our result, we will simply show that $\text{dim}(U') \le \text{dim}\big( (U')_{\downarrow \bar{T}}\big)$. This follows if we show that whenever $u'_1,\ldots,u'_r \in U'$ are linearly independent vectors, so are $\big(u'_1\big)_{\downarrow \bar{T}}, \ldots, \big(u'_r\big)_{\downarrow \bar{T}}$. If that is not the case then there exists a vector $x \in U_T$ such that $x,u'_1,\ldots,u'_r$ are not linearly independent, contradicting our assumption. 
\end{proof}


Efremenko et al.~\cite{EGI23} showed the following properties of minimal obstructions.
\begin{lemma}
\label{lem:Closure}
    Let $U$ be a subspace of $\F^{mb}_2$.
    Then, a minimal obstruction $T \subseteq [m]$ of $U$ is unique and $|T| \leq \dim(U)$.
\end{lemma}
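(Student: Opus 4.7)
The plan is to introduce the deficit function $f(T) = \dim(U_T) - |T|$ on $2^{[m]}$ and reformulate obstructions in terms of $f$. For any $S \subseteq \bar{T}$, a direct calculation using Observation~\ref{obs:linear-decomposition} applied to the subspace $U_{S \cup T}$, together with the identity $(U_{S \cup T})_T = U_T$, shows $\dim\bigl((U_{\downarrow \bar{T}})_S\bigr) = \dim U_{S \cup T} - \dim U_T$. The spreadness of $U_{\downarrow \bar{T}}$ then becomes the condition $f(T') \le f(T)$ for every $T' = S \cup T \supseteq T$, so $T$ is an obstruction if and only if $f$ attains its maximum over supersets of $T$ at $T$ itself.

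Next, I would note that $\dim U_T$ is supermodular: since $U_A \cap U_B = U_{A \cap B}$ and $U_A + U_B \subseteq U_{A \cup B}$, we get $\dim U_A + \dim U_B \le \dim U_{A \cap B} + \dim U_{A \cup B}$, and hence $f$ is supermodular (the cardinality term being modular). The key structural step is to show that the family of obstructions is closed under intersection. Given obstructions $T_1, T_2$ and any $T'' \supseteq T_1 \cap T_2$, supermodularity applied to $(T'', T_1)$ combined with the obstruction property of $T_1$ (which controls $f(T'' \cup T_1) \le f(T_1)$) gives $f(T'') \le f(T'' \cap T_1)$; then applying supermodularity to $(T'' \cap T_1, T_2)$, using that $T'' \cap T_1 \cap T_2 = T_1 \cap T_2$ because $T'' \supseteq T_1 \cap T_2$, and using the obstruction property of $T_2$, one obtains $f(T'' \cap T_1) \le f(T_1 \cap T_2)$. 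Chaining yields $f(T'') \le f(T_1 \cap T_2)$, as desired. Uniqueness of the minimal obstruction is then immediate: if $T_1, T_2$ are both minimal, then $T_1 \cap T_2$ is an obstruction contained in each, so by minimality $T_1 = T_1 \cap T_2 = T_2$.

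For the size bound, let $T$ be the (now unique) minimal obstruction; the case $T = \emptyset$ is trivial, so assume $T \ne \emptyset$. I would prove $f(T) > f(T')$ for every $T' \subsetneq T$. Since $T$ is minimal, $T'$ is not an obstruction, so there is $T^* \supseteq T'$ maximizing $f$ over supersets of $T'$, with $f(T^*) > f(T')$; such a $T^*$ is itself an obstruction. If $T^* \supseteq T$, then $f(T) \ge f(T^*) > f(T')$. Otherwise $T^* \cap T \subsetneq T$ is an obstruction by the intersection property just proved, contradicting minimality of $T$. Taking $T' = \emptyset$ yields $0 = f(\emptyset) < f(T) = \dim U_T - |T|$, so $|T| < \dim U_T \le \dim U$.

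The main obstacle will be getting the first paragraph's reformulation right, since it requires carefully tracking how $(U_{\downarrow \bar{T}})_S$ relates to $U_{S \cup T}$; after that, the supermodularity framework makes the remaining arguments largely mechanical, though one must be careful about the order in which the two supermodularity inequalities and the two obstruction properties are chained in the intersection-closure step.
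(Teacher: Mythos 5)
The paper quotes this lemma from Efremenko, Garl\'ik and Itsykson~\cite{EGI23} without reproducing the proof, so there is nothing in the text to compare against, but your argument is correct and complete. Your reformulation --- that $T$ is an obstruction iff the deficit $f(T)=\dim(U_T)-|T|$ is maximized over $\{T':T'\supseteq T\}$ at $T$ itself --- is right: it uses the identity $\dim\bigl((U_{\downarrow\bar T})_S\bigr)=\dim U_{S\cup T}-\dim U_T$, which indeed follows from Observation~\ref{obs:linear-decomposition} applied to $U_{S\cup T}$ once one checks $(U_{\downarrow\bar T})_S=(U_{S\cup T})_{\downarrow\bar T}$ and $(U_{S\cup T})_T=U_T$, plus the (easy) equivalence of spreadness of $U_{\downarrow\bar T}$ with $\dim\bigl((U_{\downarrow\bar T})_S\bigr)\le|S|$ for every $S\subseteq\bar T$. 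Supermodularity of $\dim U_T$ follows from $U_A\cap U_B=U_{A\cap B}$, $U_A+U_B\subseteq U_{A\cup B}$, and the rank identity $\dim(U_A)+\dim(U_B)=\dim(U_A\cap U_B)+\dim(U_A+U_B)$; the two-step chain you give for intersection-closure and the strict-increase argument for the size bound both check out, and in fact you obtain the slightly stronger conclusion $|T|<\dim U_T\le\dim U$ whenever $T\neq\emptyset$. This supermodular-deficit route is the standard way to prove uniqueness of such minimal sets and I would expect it to coincide with, or closely track, the argument in~\cite{EGI23}.
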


\begin{definition}
    For an affine space $A= \Space(M,c) \subseteq \F^{mb}_2$ we define its closure $\Cl(A) \subseteq [m]$ to be its unique minimal obstruction.  Also, define $\VarCl(A) \subseteq [mb]$ to be the set of variables that appear in the blocks of $\Cl(A),$ i.e., 
    $$ \VarCl(A)= \text{BLOCK}(\Cl(A)).$$
\end{definition}

Efremenko et al.~\cite{EGI23} proved the following relationship between the closures of two affine spaces when one contains the other. 
\begin{lemma}
\label{lem:MonotoneClosure}
 Let $A \subseteq A'$ be two affine spaces of $\F^{mb}_2$.
 Then, $\Cl(A') \subseteq \Cl(A)$.
\end{lemma}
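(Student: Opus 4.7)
My plan is to reduce the statement to a clean monotonicity principle: if $V \subseteq W$ are subspaces of $\F_2^{mb}$, then every obstruction of $W$ is also an obstruction of $V$. The reduction uses that since $A \subseteq A'$, their associated constraint spaces satisfy the reverse inclusion --- concretely, writing $A = \Space(M,c)$ and $A' = \Space(M',c')$, the translation subspace of $A$ is contained in that of $A'$, hence taking orthogonal complements in $\F_2^{mb}$ yields $U' \coloneqq \Row(M') \subseteq \Row(M) \eqqcolon U$. By the definition of the closure, $\Cl(A)$ and $\Cl(A')$ are the unique minimal obstructions of $U$ and $U'$ respectively, both supplied by Lemma~\ref{lem:Closure}.

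For the inheritance claim, I would verify that for any $T \subseteq [m]$, if $T$ is an obstruction of $W$ --- i.e., $W_{\downarrow \bar T}$ is spread --- and $V \subseteq W$, then $V_{\downarrow \bar T} \subseteq W_{\downarrow \bar T}$ (projections commute with inclusion), and spreadness is inherited by subspaces: any $k$ linearly independent vectors of $V_{\downarrow \bar T}$ are also linearly independent vectors of $W_{\downarrow \bar T}$ with the same block support, so they touch at least $k$ blocks by the spreadness of the latter. Applying this with $V = U'$ and $W = U$, every obstruction of $U$ --- in particular $\Cl(A)$ itself --- is also an obstruction of $U'$.

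To close the argument, I will use that $\Cl(A')$ is the unique minimal obstruction of $U'$, and in a finite inclusion-poset a unique minimal element is automatically a universal minimum: every obstruction of $U'$ contains some minimal one, which by uniqueness must be $\Cl(A')$. Applying this to the obstruction $\Cl(A)$ of $U'$ yields $\Cl(A') \subseteq \Cl(A)$, as required. There is no serious obstacle here; the argument is a short linear-algebraic verification. The only point that warrants a moment of care is the last step --- reading ``unique minimal'' as universal minimum --- which is immediate in this finite setting.
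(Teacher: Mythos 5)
Your proof is correct. The paper itself does not prove Lemma~\ref{lem:MonotoneClosure}; it is cited as a result from Efremenko et al.~\cite{EGI23}, so there is no in-paper argument to compare against. Your self-contained argument is sound: writing $U = \Row(M)$ and $U' = \Row(M')$, the containment $A \subseteq A'$ gives $\ker M \subseteq \ker M'$ and hence $U' \subseteq U$ after taking orthogonal complements in $\F_2^{mb}$; projections commute with inclusion, so $U'_{\downarrow \bar{T}} \subseteq U_{\downarrow \bar{T}}$ for every $T$; spreadness passes to subspaces directly from the definition; therefore every obstruction of $U$ (in particular $\Cl(A)$) is an obstruction of $U'$; and finally, by a finite descending-chain argument, $\Cl(A)$ contains a minimal obstruction of $U'$, which by the uniqueness guaranteed in Lemma~\ref{lem:Closure} must be $\Cl(A')$. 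The only point you flagged as warranting care --- that a unique minimal obstruction is a universal minimum --- is indeed justified exactly as you say: one reaches a minimal obstruction inside any given one by repeatedly deleting elements, and uniqueness forces it to be $\Cl(A')$.
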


A \emph{partial assignment} $\alpha'$ of $m$ variables is a string in $\{0,1,*\}^{m}$. A variable $X \in [m]$ is \emph{assigned} if $\alpha_X \in \{0,1\}$.
For a total assignment $\alpha \in \{0,1\}^m$ and $T \subseteq [m]$ we define the restriction $\alpha_{|T}$ of $\alpha$ to $T$ to be the partial assignment arising from $\alpha$ by unassigning the variables that are not in $T$, i.e., for each $i \in [m]$
\[
\bigl(\alpha_{|T})_i = 
\begin{cases}
    \alpha_i & \text{if $i \in T$}, \\
    * & \text{otherwise.}
\end{cases}
\]

We describe the notion of \emph{stifling} introduced by Chattopadhay et al.~\cite{CMSS23}.  
\begin{definition}
    A Boolean function $g: \{0,1\}^b \to \{0,1\}$ is \emph{stifled}\footnote{1-stifled called by Chattopadhyay et al.~\cite{CMSS23}} if the following holds
    \begin{align*}
        &\forall i \in [b]  \text { and } a \in \{0,1\} ~ \exists \delta \in \{0,1\}^{b}  \\
        & \text{such that for all }\gamma \in \{0,1\}^b \text{ with } \gamma_{|[b]\setminus \{i\}} = \delta_{|[b]\setminus \{i\}} \text{ holds that } g(\gamma) = a.
    \end{align*}
\end{definition}
We call $\delta$ from the previous definition a \emph{stifling assignment} for $i$ and $a$.
The utility of stifling is the following.
An adversary can pick any variable $i \in [b]$ of $g$.
For any $a \in \{0,1\}$, we can pick a partial assignment $\delta_a \in \{0,1,*\}^m$ that assigns a value to all variables except the $i$-th variable.
Now, no matter how the adversary chooses the value for the $i$-th variable to get a total assignment $\gamma_a \in \{0,1\}^b$ from $\delta_a$, the value $g(\gamma_a)$ will be always $a$.

\begin{definition}
A partial assignment $\beta \in \{0,1,*\}^{mb}$ is called \emph{block-respecting} if for each block $j \in [m]$, either all variables or no variables are assigned, i.e.,
$$ (\beta^j)_i \in \{0,1\} \text{ for all } i \in [b] \text{ or } (\beta^j)_i = * \text{ for all } i \in [b].$$
\end{definition}

A block-respecting assignment $\beta \in \{0,1,*\}^{mb}$ naturally gives a partial assignment $\overrightarrow{g}(\beta) \in \{0,1,*\}^m$ by applying the gadget $g$ to the assigned blocks.
Formally, for each $j \in [m]$ we have
\[
\overrightarrow{g}(\beta)_j = 
\begin{cases}
    g(\beta^j_1,\dots,\beta^j_b) & \text{if for all $i \in [b]:\beta^j_i$ are assigned,} \\
    * & \text{otherwise.}
\end{cases}
\]

\begin{definition}
    Let $A \subseteq \F^{mb}$ be an affine space and $\beta \in A$. The \emph{closure-assignment} of $\beta$, $\beta|_{\VarCl(A)}$ is the partial assignment which fixes all coordinates in blocks of $\Cl(A)$ according to $\beta$ and keeps other coordinates free. In other words, 
    \begin{align*}
        (\beta|_{\VarCl(A)})^j = \begin{cases}
            \beta^j & \text { if } j \in \Cl(A), \\
            (*, \dots  , *) & \text{ otherwise}.
          \end{cases} 
    \end{align*}
\end{definition}

\begin{lemma}
\label{lem:StiflingGadget}
Let $A = \Space(M,c) \subseteq \F^{mb}$ be an affine space and let $g: \{0,1\}^b \to \{0,1\}$ be a stifled gadget.
Let $\beta \in A$ be a vector and $\beta' \in \{0,1,*\}^{mb}$ be its closure assignment. Let $\alpha' := \overrightarrow{g}(\beta') \in \{0,1,*\}^m$.
Then, for any extension of $\alpha'$ to a total assignment $\alpha \in \{0,1\}^m$, there exists $\gamma \in A$ such that $\overrightarrow{g}(\gamma) = \alpha$.
\end{lemma}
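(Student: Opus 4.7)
The argument combines Theorem~\ref{thm:SafeBasis} on the spread projected subspace with the stifling property of $g$. Let $T := \overline{\Cl(A)}$. By the definition of closure, the subspace $V := (A_{\text{lin}})_{\downarrow T} \subseteq \F_2^{|T|b}$ is spread, and since $\dim(V) \leq |T| \leq m$, Theorem~\ref{thm:SafeBasis} furnishes a safe basis $v_1,\ldots,v_k$ of $V$ whose pivots $a_1,\ldots,a_k$ lie in distinct blocks $j_1,\ldots,j_k \in T$. Let $p_i \in [b]$ denote the intra-block position of $a_i$ within block $j_i$.

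The plan is to construct $\gamma$ of the form $\gamma := \beta + u$, where $u \in A_{\text{lin}}$ is supported only on blocks in $T$. This choice automatically gives $\gamma^j = \beta^j$ for every $j \in \Cl(A)$, so $g(\gamma^j) = g(\beta^j) = \alpha_j$ holds on the closure blocks. It then remains to choose $u$ so that $g(\gamma^{j_i}) = \alpha_{j_i}$ at every pivot block $j_i \in T$. For this we invoke the stifling property of $g$: for each pivot block $j_i$, applied to the coordinate $p_i$ with target $\alpha_{j_i}$, it supplies a stifling assignment $\delta^{(i)} \in \{0,1\}^b$ with the guarantee that $g(\xi) = \alpha_{j_i}$ for every $\xi$ that agrees with $\delta^{(i)}$ on the $b-1$ coordinates outside $p_i$. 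Thus it suffices to find $u$ such that $(\beta + u)^{j_i}$ agrees with $\delta^{(i)}$ on $[b] \setminus \{p_i\}$ for every $i$.

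Writing $u = \sum_i \lambda_i \tilde{v}_i$ (with $\tilde{v}_i$ the natural lift of $v_i$ to $\F_2^{mb}$, zero outside $\mathrm{BLOCK}(T)$), the constraint becomes an $\F_2$-linear system in the unknowns $\lambda_1,\ldots,\lambda_k$. The echelon structure of the safe basis---namely, that $v_i$ has a $1$ at $a_i$ and every other $v_l$ has a $0$ there, and that the pivots sit in distinct blocks---lets us process the pivot blocks in a triangular order: once $\lambda_1,\ldots,\lambda_{i-1}$ have been fixed, the coefficient $\lambda_i$ can still be used to flip $v_i$'s contribution to block $j_i$ without undoing any earlier correction, since the later basis vectors do not touch the pivot column $a_i$. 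Combined with the minimality of $\Cl(A)$, which prevents blocks of $T$ from being redundantly constrained in ways that would block this triangular process, the system is solvable. The resulting $\gamma := \beta + u$ is in $A$ (since $u \in A_{\text{lin}}$), and the verification $\overrightarrow{g}(\gamma) = \alpha$ is then immediate from the stifling guarantee block-by-block.

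The main obstacle I expect is the linear-algebra step: namely, rigorously showing that the coefficients $\lambda_i$ can be chosen to simultaneously impose the $k$ stifling patterns on the non-pivot coordinates of the $k$ pivot blocks, without any residual obstruction from the way the $v_i$'s interact off their pivots. This is precisely where the structural properties of a safe basis---uniformly guaranteed by Theorem~\ref{thm:SafeBasis}---have been engineered to apply, but the accounting must be done carefully, invoking the minimality of the closure to rule out hidden dependencies between pivot blocks.
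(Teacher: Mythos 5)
Your proposal diverges from the paper's proof in a way that opens several real gaps, and I don't see how to close them without reverting to something essentially like the paper's construction.

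The most basic issue is that you only impose conditions on the closure blocks and the pivot blocks, and never address the remaining ``free'' blocks. You need $g(\gamma^j)=\alpha_j$ for \emph{every} $j\in[m]$, but for a block $j$ outside $\Cl(A)$ that is not a pivot block, you have $\gamma^j=\beta^j+u^j$, and nothing in your construction makes this a preimage of $\alpha_j$ (note $\alpha_j$ can be an arbitrary extension of $\alpha'$ there). The paper's proof does not have this problem because it builds $\tilde\beta$ \emph{from scratch} at every non-closure block, choosing a fresh element of $g^{-1}(\alpha_j)$ --- a stifling one at pivot blocks and an arbitrary one at free blocks --- and only afterwards repairs the constraint by flipping pivot coordinates. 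Your additive ansatz $\gamma=\beta+u$ forecloses that freedom: it forces you to reach the desired preimage by a linear correction from $\beta$, which cannot be done for arbitrary $\alpha$.

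Even restricted to the pivot blocks, the linear system you set up is badly overdetermined. You want $(\beta+u)^{j_i}$ to agree with a stifling pattern $\delta^{(i)}$ on the $b-1$ non-pivot coordinates of block $j_i$, for each of the $k$ pivot blocks; that is roughly $k(b-1)$ linear constraints, but you only have $k$ unknowns $\lambda_1,\ldots,\lambda_k$. The echelon structure of a safe basis gives control over the single pivot coordinate $a_i$ of $v_i$ (where the other basis vectors vanish); it gives no control over the other $b-1$ coordinates of block $j_i$, let alone over other blocks that $v_i$ may touch. The ``triangular process'' you describe would only let you set the $k$ pivot coordinates, not the $k(b-1)$ non-pivot coordinates of the pivot blocks. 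The paper exploits the safe basis in exactly the opposite way: the stifling preimage is written into $\tilde\beta$ directly, so it is the \emph{pivot} coordinate that is then free to be flipped in order to hit the $M_2$ equations --- a square $d'\times d'$ system.

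Finally, there is a subtle mismatch of spaces. The closure $\Cl(A)$, the spread property, and Theorem~\ref{thm:SafeBasis} are all defined in terms of the \emph{row space} $\Row(M)$ of the constraint matrix (this is also what makes the bound $|\Cl(A)|\le\rank(M)$ true, as used elsewhere in the paper). You instead take a safe basis of $(A_{\mathrm{lin}})_{\downarrow T}$ where $A_{\mathrm{lin}}$ appears to be the direction space $\ker(M)$. These are different spaces, and neither is spread a priori for the other's closure. Moreover, the zero-extension $\tilde v_i$ of a projected vector $v_i\in U_{\downarrow T}$ is generally not a member of $U$ (the projection forgets the complementary coordinates), so $\tilde v_i\in\ker(M)$ --- needed for $\gamma\in A$ --- does not follow. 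The paper's proof carefully works with preimages $w'_i\in\Row(M)$ of the safe basis vectors and uses them as \emph{rows} of a new constraint matrix $M'$ with the same row space, rather than as correction directions added to $\beta$.
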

\begin{proof}

    WLOG assume the rows of $M$ are linearly independent. Let $U = \Row(M)$ be the row-space of $M$ and let $T \subseteq [m]$ be the closure of $A$. First, we construct a matrix $M'$ which has the same row-space as $M$. 
    
    \paragraph{Construction of $M'$}
    \begin{enumerate}

   \item Let $(u_1,\dots,u_d)$ be an arbitrary basis of $U_T$ and let $M_1 \in \F^{d \times mb}$ be the matrix whose rows are the vectors $u_1,\dots,u_d$:
\begin{align*}
    M_1 & =  \left[\begin{array}{c}
        u_1 \\
        u_2 \\
        \vdots \\
        u_d
    \end{array}\right]
\end{align*}
 
    \item Let $(w_1,\dots,w_{d'})$ be a safe basis of $U_{\downarrow \bar{T}}$.
    Such a basis exists by the definition of closure and Theorem~\ref{thm:SafeBasis}.
    Let $a_1,\dots,a_{d'}$ be pivots of $w_1,\dots,w_{d'}$.
    Each of these pivots lie in a distinct block.
    Moreover, none of these pivots are in the blocks of $T$.
    
    Let $L: U \to U_{\downarrow \bar{T}}$ be the projection of $U$ to $U_{\downarrow \bar{T}}$.
    Let $w'_i$ be an arbitrary pre-image of $w_i$ according to $L$, i.e., $L(w'_i) = w_i$.
    Since $(w_1,\dots,w_{d'})$ are linearly independent, the vectors $(w'_1,\dots,w'_{d'})$ are linearly independent as well.
    Let $M_2 \in \F^{d' \times mb}$ be the matrix with the vectors $w'_1,\dots,w'_{d'}$ as its rows.
    \begin{align*}
        M_2 & = \left[ \begin{array}{c} 
        w'_1 \\
        w'_2 \\
        \vdots \\
        w'_{d'}
        \end{array} \right]
    \end{align*}
    \item Take $M' \in \F^{(d+d') \times mb} $ to be the matrix obtained by stacking $M_1$ on top of $M_2$:

    \begin{align*}
    M' &= \left[ \begin{array}{c} M_1 \\ M_2 \end{array} \right] 
    \end{align*}

    \end{enumerate}
\begin{claim}
\label{claim:same_row_space}
    The matrices $M$ and $M'$ have the same row-space.
\end{claim}
\begin{proof}
By Observation~\ref{obs:linear-decomposition}, $\dim(U) = \dim(U_T) + \dim(U_{\downarrow\bar{T}}) = d+d'$. No row of $M_2$ can be generated by rows of $M_1$ as the pivots $a_1,\dots,a_{d'}$ of the matrix $M_2$ lie in columns where the matrix $M_1$ has only 0 entries.
Thus, $\rank(M') = \rank(M_1) + \rank(M_2) = d+d'= \dim(U) = \rank(M)$.
Moreover, any row of $M'$ lies in $U = \Row(M)$. It follows that $\Row(M) = \Row(M')$.
\end{proof}

Thus, there is a vector $c' \in \F^{mb}$ such that $A = \Space(M',c')$. 
Now, we prove the lemma. We are given a vector $\beta \in \Space(M',c')$ and a target assignment $\alpha \in \{0,1\}^m$ such that $g(\beta^j)= \alpha_j$ for all $j \in \Cl(A)$. Our goal is to show the existence of a $\gamma \in \F^{mb}$ such that $\overrightarrow{g}(\gamma) = \alpha$ and $M' \gamma= c' $. 
Recall that the tuple of rows of $M_2$, $(w'_1, \dots , w'_{d'})$ is a safe tuple with set of pivots $a_1, \dots , a_{d'}$. Suppose $a_j \in \text{BLOCK}(b_j)$. The blocks $b_1,  \dots , b_{d'}$ are distinct and all of them lie in $[m] \setminus T$. Let $\text{PIVOTS}= \{b_1, b_2, \cdots , b_{d'}\}$ and $\text{FREE}= [m] \setminus (T \cup \text{PIVOTS}).$
We construct $\gamma$ in two steps. In the first step we construct a $\tilde{\beta} \in \F^{mb}$ such that $\overrightarrow{g}(\tilde{\beta}) = \alpha$, but it is not necessarily the case that $M' \tilde{\beta} = c'$. In the second step we modify $\tilde{\beta}$ in the coordinates $a_1,  \dots , a_{d'}$ to get an assignment $\gamma \in \Space(M',c')$.

\begin{description}

    \item[Constructing $\tilde{\beta}$] $ $
        \begin{itemize}
        \item For each $i \in  T= \Cl(A)$, $\tilde{\beta}$ agrees with $\beta$ on $\text{BLOCK}(i)$, i.e., $(\tilde{\beta})^i = \beta^i.$
        \item For each $i \in \text{FREE}$, choose an arbitrary preimage $ u_i \in g^{-1}(\alpha_i)$ and set $(\tilde{\beta})^i = u_i.$
        \item For each $i = b_j \in \text{PIVOT}$: Suppose the pivot $a_j$ is the $\ell$-th coordinate of $\text{BLOCK}(j)$. Pick $u_i \in g^{-1}(\alpha_i)$ to be a stifling assignment for the $\ell$-th coordinate, i.e., $g(u_i) = g(u_i^{(l)})= \alpha_i$ (where $s^{(l)}$ denotes $s$ with $l$-th coordinate flipped). Set $(\tilde{\beta})^i= u_i.$ 
        \end{itemize}
    \item [Constructing $\gamma$:] We modify $\tilde{\beta}$ in the coordinates $a_1,  \dots , a_{d'}$ to get an assignment $\gamma$ in $\Space(M',c')$ as follows. For $1 \leq j \leq d$, let $f_j = \langle w'_j, \tilde{\beta} \rangle + (c')_j$. Let $\gamma \in \F^{mb}$ be the following assignment:
    \begin{align*}
        \gamma_i &= \begin{cases}
            (\tilde{\beta})_i & \text{ if }i \not \in \{a_1,  \dots,  a_{d'} \}, \\
            (\tilde{\beta})_i + f_j & \text{ if } i=a_j.
        \end{cases}
    \end{align*}
    
\end{description}

\begin{claim}
\label{claim:gamma}
    $\overrightarrow{g}(\gamma) = \alpha$ and $\gamma \in \Space(M',c')$.
\end{claim}

\begin{proof}
    We show both points separately. 
    \begin{description}
    \item\textbf{Showing $\overrightarrow{g}(\gamma)=\alpha$:} We argue that $g(\gamma^i)=\alpha_i$ for all $i \in [m]$. 
    \begin{description}
        \item[Case 1,] $i \in T$: We have set $(\tilde{\beta})^i=\beta_i$. Note that $\gamma$ differs from $\tilde{\beta}$ only in coordinates $a_1, a_2, \cdots , a_{d'}$. All these coordinates lie outside $\text{BLOCK}(T)$. Thus, $g(\gamma^i)= g(\beta^i) = \alpha_i$.
        \item[Case 2,] $i \in \text{FREE}$: We have set $(\tilde{\beta})^i = u_i$ where $u_i \in g^{-1} (\alpha_i)$. Again, note that $\gamma$ differs from $\tilde{\beta}$ only in the coordinates $a_1, a_2, \cdots , a_{d'}$, all of which lie outside $\text{BLOCK}(i)$. It follows that $g(\gamma^i)= \alpha_i$. 
        \item[Case 3,] $i \in \text{PIVOTS}$: Let $i= b_j$ and let $a_j \in \text{BLOCK}(b_j)$ be the corresponding pivot variable. Recall that each pivot variable lies in a distinct block. Let $a_j$ be the $\ell$-th coordinate of $\text{BLOCK}(b_j)$. We have set $(\tilde{\beta}^i) = u_i$ where $u_i \in g^{-1}(\alpha_i)$ is a stifling assignment for $\ell$ and $\alpha_i$. This means that $g((u_i)^{(\ell)}) = g(u_i) = \alpha_i$ ($s^{(\ell)}$ denotes $s$ with $\ell$-th coordinate flipped). Notice that $\gamma$ and $\tilde{\beta}$ agree everywhere on $\text{BLOCK}(b_j)$ except possibly $a_j$. This implies $g(\gamma^i)= \alpha_i$. 
    \end{description}

    \item{\textbf{Showing $\gamma \in \Space(M',c')$:} } Note that all equations corresponding to rows in $M_1$ are satisfied by $\gamma$ since they are satisfied by $\beta$ and hence by $\tilde{\beta}$ too. That the equations corresponding to $M_2$ are satisfied by $\gamma$ follows from the row echelon structure of $M_2$, i.e., the fact that after an appropriate permutation of the columns, $M'$ looks as follows:
\begin{align*}
\begin{array}{cc|c|c||c}
\multirow{2}*{M' =} & B_1 & 0   & 0 & = M_1 \\
& B_2   & I_{d'} & B_3 & = M_2 \\
\cline{2-5}
& \text{Closure $T$} & \text{Pivots of $M_2$} & \multicolumn{2}{c}{~} \\
\end{array}
\end{align*}      
\end{description}
        
\end{proof}

Since $\Space(M',c')= \Space(M,c)= A$, Lemma \ref{lem:StiflingGadget} follows immediately from Claim \ref{claim:gamma}.    
\end{proof}

\section{Proof Outline}
\label{sec:Outline}

In this section, we provide an outline of the proof of our main result, Theorem~\ref{thm:MainResult}.
The proof consists of two parts.
The first part shows that the formula $\Stone(G, \rho) \circ g$ has a polynomial length resolution proof for any directed acyclic graph $G$ on $N$ vertices and out-degree 2, any obfuscation map $\rho: [N]^3 \to \V$, and any gadget $g: \cube^b \to \cube$, where $b$ is logarithmic in $N$ (recall that the number of variables $m$ of the formula $\Stone(G, \rho)$ is $2N^2$).
This part of the proof is an adaptation of an analogous proof for the stone formula given by Alekhnovich et al.~\cite{AJPU02}.

The second part establishes that there is a graph $G$ and an obfuscation map $\rho: [N]^3 \to \V$ such that any bottom-regular ResLin proof of $\Stone(G,\rho) \circ \IP$ has exponential length in $m$, where $\IP$ is the inner product function on $b = \Theta(\log m)$ bits.
The proof of this part is involved and non-trivial. We outline the main steps in the figure below, immediately followed by a high-level description of each step depicted. 

\paragraph{Outline of the Lower Bound Proof.}
Our argument is an adaptation of the method presented in Efremenko et al \cite{EGI23} with addition of some new ingredients. See Figure~\ref{fig:ProofOutline}, for depicting the method. Let $\cP(\beta,t)$ be the node that $\cP$ arrives at after making $t$ linear queries on $\beta$.

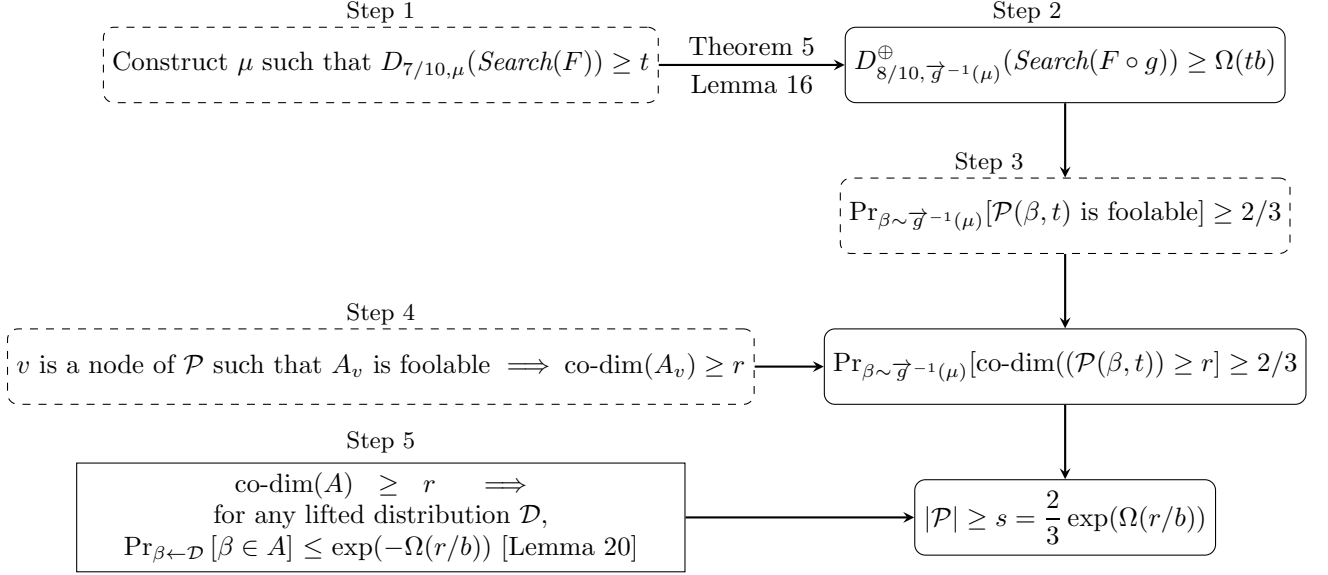
\begin{figure}[ht]
    \centering
\begin{tikzpicture}
    \node (start) [startstopdashed] {   
       Construct $\mu$ such that $D_{7/10, \mu} (\Search(F)) \geq t$};
    \node (captionstart) [abcd, above of=start, yshift= -0.3cm] {
    \small{Step \ref{LBstep:Tree}}
    };
    \node (lift) [startstop, right of=start, xshift= 8cm] {$D_{8/10, \overrightarrow{g}^{-1}(\mu)}^{\oplus} (\Search (F \circ g)) \geq \Omega (tb)$ };
    \node (captionlift) [abcd, above of= lift, xshift= -0.5cm, yshift= -0.3cm]{
    \small{Step \ref{LBStep:RandAssignment}}
    };
    \node (foolable) [startstopdashed, below of=lift, yshift=-1cm] {$ \Pr_{\beta \sim \overrightarrow{g}^{-1}(\mu)} [\mathcal{P} (\beta, t) \text{ is foolable}] \geq 2/3$};
    \node (captionfoolable) [abcd, above of= foolable, yshift= -0.3cm, xshift= -1cm]{
    \small{Step \ref{LBstep:foolable-frequent}}
    };
    
    \node (largerank) [startstopdashed, below of=start, yshift= -3cm] {$v$ is a node of $\mathcal{P}$ such that $A_v$ is foolable
    $\implies \text{co-dim}(A_v) \geq r$};
    \node (reachlargerank) [startstop, below of= foolable, yshift= -1 cm]{ 
    $ \Pr_{\beta \sim \overrightarrow{g}^{-1}(\mu)} [\text{co-dim} ((\mathcal{P}(\beta, t)) \geq r] \geq 2/3$
    };
      \node (captionlargerank) [abcd, above of=largerank, yshift= -0.3cm]{
    \small{Step \ref{LBstep:LargeRank}}
    };

    \node (smallmass) [process, below of= largerank, yshift= -1 cm]{
    $ 
    \text{co-dim}(A) \geq r \implies \text{for any lifted distribution } \mathcal{D},$
    $ \Pr_{\beta \leftarrow \mathcal{D}} \left[ \beta \in A \right] \leq \exp (-\Omega(r/b))$
    [Lemma \ref{lemma:rank-fooling}]
    }; 

    \node (captionsmallmass) [abcd, above of= smallmass]{
    \small{Step \ref{LBStep:PseudoRandom}}
    };
    \node (lowerbound) [startstop, below of= reachlargerank, yshift= -1 cm]{
    $ |\mathcal{P}| \geq s= \dfrac{2}{3} \exp (\Omega(r/b))$
    };
    \node (captionreachlargerank) [abcd, above of=reachlargerank, yshift= -0.3cm, xshift=-1cm]{
    \small{ }
    };
    
    \draw [arrow] (start) -- node[anchor=south] {Theorem \ref{thm:CFKMP-simulation}} node[anchor=north] {Lemma \ref{lemma:PDT-average}}(lift);
    \draw [arrow] (lift) --  (foolable);
    \draw [arrow] (foolable) -- (reachlargerank);
    \draw [arrow] (largerank) -- (reachlargerank);
    \draw [arrow] (reachlargerank) -- (lowerbound);
    \draw [arrow] (smallmass) -- (lowerbound);
\end{tikzpicture}

    \caption{Outline of the proof. The solid boxes refer to parts that are quite general and not specific to a formula, while the dashed boxes contain modules that are more specific to $\SP_n \circ \IP$ and similar formulas.}
    \label{fig:ProofOutline}
\end{figure}

\paragraph{Some Details:}
Let $\cP$ be a bottom-read-once branching program computing $\Search(\Stone(G,\rho) \circ \IP)$ corresponding to a bottom-regular ResLin proof of $\Stone(G,\rho) \circ \IP$ where $G$ is a pyramid graph of $n$ levels and $\rho$ is a carefully chosen obfuscation map.
The proof consists of several steps.
\begin{enumerate}
    \item \label{LBstep:Tree} We design a distribution $\mu$ over the assignment of variables of the base formula $F$ over $m$ variables, typically supported over \emph{critical assignments}, i.e. those which result in the falsification of exactly one clause. This module requires one to show that $\Search(F)$ is average-case hard for deterministic decision trees of small height wrt $\mu$. In particular, our Lemma~\ref{lemma:DT-error} proves that the problem $\Search(\Stone(G, \rho))$ is average-case hard for deterministic decision trees of height at most $O(n^{1/3})$. As the $\mu$ exhibited is formula specific, the box corresponding to this module is dashed.
    \item \label{LBStep:RandAssignment} In this step, we prove that the search problem associated with the lifted formula $F \circ g$ remains average case hard for \emph{parity decision trees} wrt a \emph{lifted distribution} as long as the gadget $g$ has small rectangular discrepancy. More precisely, let $\overrightarrow{g}^{-1}(\mu)$ denote the lifted distribution generated by the following sampling: sample an input $z \in \{0,1\}^m$ according to $\mu$. Then, sample at random an input $\beta \in \{0,1\}^{mb}$, conditioned on $\overrightarrow{g}(\beta) =z$. Using Theorem~\ref{thm:CFKMP-simulation}, implicit in the proof of the main result of Chattopadhyay, Filmus, Koroth, Meir and Pitassi \cite{CFKMP21}, we conclude that $\Search(F\circ g)$ is average-case hard for deterministic parity decision trees of small height, under the lifted distribution $\overrightarrow{g}^{-1}(\mu)$. This step is generic and works for any gadget of size $c\cdot \log(m)$, that has sufficiently small rectangular discrepancy under the uniform distribution over $\{0,1\}^b$. The gadget we use here is $\IP$.
    \item \label{LBstep:foolable-frequent} We then want to define a notion of progress the branching program $\mathcal{P}$ has made on arriving at a node $v$. To do so, consider the affine space $A_v$ that labels the node. $A_v$ may have nearly fixed/exposed the values of some of the blocks of input. These dangerous blocks are precisely $\Cl(A_v)$ as defined in Section~\ref{sec:linear}. They form the minimum obstruction set. Intuitively, the danger is $\mathcal{P}$ may have nearly found out a falsified clause of $F \circ g$ on reaching $v$ if that clause was made up entirely of variables from blocks in $\Cl(A_v)$. However, in this step we observe that the average-case hardness of the Search problem for PDTs proved in the previous step precludes this from happening with appreciable probability, when the input is sampled according to the lifted distribution $\overrightarrow{g}^{-1}(\mu)$. To formalize this idea, we need to concretely say when $A_v$ is (not) dangerous. So far, we have not been able to lay out a general notion of danger, but notions specific to individual formulas have been defined. For $\Stone(G,\rho)$, this notion is captured by Definition~\ref{defn:foolable-space} of \emph{foolable} spaces, provided in Section~\ref{sec:foolable-nodes-frequent}. Theorem~\ref{thm:node-foolability} shows that w.h.p., $\mathcal{P}$ reaches a foolable space on walking for $n^{1/3}$ steps, querying an input sampled according to $\overrightarrow{g}^{-1}(\mu)$.
    
    \item \label{LBstep:LargeRank} In this step, we show that when the affine space $A_v$ is not dangerous, i.e. it is foolable or consistent, the appropriate notion depending on the formula at hand, $A_v$ has large co-dimension. All steps until now held for general branching programs (or equivalently proof DAGs). This step is the only one where the bottom-read-once property is exploited. For $\Stone(G,\rho)$, this is achieved in Section~\ref{sec:foolability-to-rank}, at the end, by Lemma~\ref{lemma:large-rank-stone}. 
    \item \label{LBStep:PseudoRandom} In this step, we prove a general result about lifted distributions. For any affine space $A$ of $\codim(A) = r$ and any distribution $\mu$ on $\{0,1\}^m$, we prove that $\beta$ sampled by $\overrightarrow{g}^{-1}(\mu)$ is in $A$ only with probability $2^{-\Omega(r/b)}$, as long as the gadget $g$ is balanced and stifling. In other words, lifted distributions, even though their support is quite sparse in the ambient space, are pseudo-random for the rank measure. This property, though simple to prove, turns out to be extremely useful, especially for formulas like the stone formulas that are barely hard. 

\end{enumerate}
At this stage we are ready to put together the above steps in the following way.
Let $R$ be a set of nodes $w$ of $\cP$ such that there is a path from the root of $\cP$ to $w$ of length $t$, and $\codim(A_w) \geq t$.
Setting $t := n^{1/3}$ we have the following.
%
\begin{align*}
\frac{7}{10} &\leq \Pr_{\beta \sim \overrightarrow{\IP}^{-1}(\mu)}\bigl[\codim(A_v) \geq t \text{ for } v = \cP(\beta, t)\bigr] \tag{by Step~\ref{LBstep:foolable-frequent} and~\ref{LBstep:LargeRank}} \\
& \leq \sum_{w \in R} \Pr_{\beta \sim \overrightarrow{\IP}^{-1}(\mu)}\bigl[\cP(\beta, t) = w\bigr] \tag{by union bound} \\
& \leq |R|\cdot 2^{\Omega(-t/b)} \tag{by Step~\ref{LBStep:PseudoRandom}}
\end{align*}
By rearranging, we get the lower bound $|R| \geq 2^{\Omega(n^{1/3}/\log n})$. Recall that the number of variables of $\Stone(G,\rho) \circ \IP$ is $M= \Theta (n^4 \log(n))$. In terms of $M$, the lower bound is $2^{\Omega(M^{1/12}/\log^{13/12} M})= 2^{M^{\Omega(1)}}$.


\section{Upper Bound}
\label{sec:UpperBound}
In this section, we show the upper bound part of Theorem~\ref{thm:MainResult}.
\begin{theorem}
\label{thm:UpperBound}
    Let $G = (V,E)$ be an directed acyclic graph with $N$ vertices such that there is exactly one root $r$ (vertex with indegree 0), and each non-sink vertex has outdegree exactly 2.     
    Let $\rho: [N]^3 \to \V$ be any obfuscation map, and $g: \cube^b \to \cube$ be a Boolean function for $b \leq O(\log N)$.
    Then, the formula $\Stone(G, \rho) \circ g$ admits a resolution refutation of length polynomial in $N$.
\end{theorem}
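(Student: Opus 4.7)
The plan is to prove the theorem in two stages: first, exhibit a polynomial-length refutation of the base formula $\Stone(G, \rho)$ in which every line has constant width, and then simulate that refutation clause-by-clause inside the lifted system $\Stone(G, \rho) \circ g$ with only $\poly(N)$ blow-up per base step. The main obstacle to anticipate is the second stage: a naive lift of a base clause of width $w$ produces roughly $2^{wb}$ lifted clauses over $wb$ variables, so any base line of width $\omega(1)$ would already cost $N^{\omega(1)}$ when lifted. Consequently the entire argument is organised around keeping every line of the base proof of width $O(1)$.

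For the base refutation I would proceed by induction on the DAG, showing that the width-$2$ clause $D_{v,k} := \neg P_{v,k} \vee R_k$ (``every stone placed at $v$ is red'') is derivable for every vertex $v$ and every index $k \in [N]$. The base case is precisely a sink axiom. For the inductive step at a vertex $v$ with out-neighbours $u, w$, I would first resolve the two obfuscated induction clauses on $\rho(i,j,k)$ to eliminate the obfuscation literal, and then resolve against the inductive hypotheses $D_{u,i}$ and $D_{w,j}$ to obtain, for every triple $(i,j,k)$, the width-$4$ clause
\[
G_{i,j,k} := \neg P_{u, i} \vee \neg P_{w, j} \vee \neg P_{v, k} \vee R_k.
\]
To eliminate the $\neg P_{u, i}$ literals across $i$, I would not derive the single width-$N$ clause $\bigvee_i P_{u,i}$ as a lemma, since lifting it would be ruinous; instead I would thread through the stone-placement axioms for $u$, alternately resolving the running clause against the chain $P_{u,1} \vee \neg Z_{u,1},\ Z_{u,1} \vee P_{u,2} \vee \neg Z_{u,2},\ \dots$ and against the appropriate $G_{i,j,k}$. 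A direct check shows that every intermediate clause in this sweep has width at most $5$, yet after $O(N)$ steps every $\neg P_{u,i}$ has been removed and we obtain $\neg P_{w,j} \vee \neg P_{v,k} \vee R_k$. An identical chain through the stone-placement axioms for $w$ removes the $\neg P_{w,j}$ literals and delivers $D_{v,k}$. Summed over all $v$ and all $k$, the base refutation uses $O(N^4)$ lines, all of width $O(1)$. At the root $r$ I would combine each $D_{r,j}$ with the root axiom $\neg P_{r,j} \vee \neg R_j$ to obtain $\neg P_{r,j}$, and run one last constant-width stone-placement sweep to derive the empty clause.

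For the lifting step I would simulate each resolution inference of the base proof by a $\poly(N)$-sized block of lifted inferences. Suppose the base proof derives $C$ from $C_1, C_2$; in the lifted proof we have already derived every clause of $C_1 \circ g$ and $C_2 \circ g$, and we wish to derive every clause of $C \circ g$. By Observation~\ref{obs:LiftedClause}, any $Y$-assignment that satisfies every clause of $C_1 \circ g \cup C_2 \circ g$ lifts to an $X$-assignment satisfying $C_1 \wedge C_2$, hence $C$, and hence every clause of $C \circ g$; so $C_1 \circ g \cup C_2 \circ g$ semantically implies each lifted target. Because the base proof keeps widths $O(1)$, the relevant lifted variables at each simulation step span only $O(b) = O(\log N)$ coordinates, so Lemma~\ref{lemma:resolution-complete} derives each target clause of $C \circ g$ in $2^{O(\log N)} = \poly(N)$ steps, and there are only $\poly(N)$ such targets per base step. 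Multiplying by the $O(N^4)$ base steps yields a total lifted refutation of size $\poly(N)$, which is the desired upper bound.
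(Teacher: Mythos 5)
Your proposal is correct and follows essentially the same approach as the paper's own proof: a constant-width, polynomial-size refutation of $\Stone(G,\rho)$ built by inductively deriving $\neg P_{v,k}\vee R_k$ over the DAG, eliminating each batch of $\neg P_{u,i}$ literals by threading a width-$O(1)$ sweep through the stone-placement auxiliary clauses (this is precisely the paper's Lemma~\ref{lem:ResolveP} and Lemma~\ref{lem:RedStonesResolving}), followed by simulating each constant-width base inference in the lifted formula using Lemma~\ref{lemma:resolution-complete} on the $O(b)$ relevant variables (the paper's Lemma~\ref{lemma:lifted-upper-bound}). The only cosmetic difference is that you interleave the $\rho$-elimination and the sweeps inside a single induction rather than factoring them into separate lemmas, but the ideas and the $\poly(N)$ accounting are the same.
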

The proof of Theorem~\ref{thm:UpperBound} is an adaptation of the proof given by Alekhnevich et al.~\cite{AJPU02} for lifted formulas.
We remark that Alekhnevich et al.~\cite{AJPU02} presented a resolution refutation for the stone formulas of constant width.
This allow us to adapt the refutation for the lifted formula.
For the rest of the section, we fix a graph $G$, an obfuscation map $\rho$, and a gadget $g$ satisfying the assumptions of Theorem~\ref{thm:UpperBound}.
First, we prove several auxiliary lemmas about the formula $\Stone(G, \rho) \circ g$.

\begin{lemma}
\label{lem:ResolveP}
    Let $C$ be a clause with width $w$. Suppose we have derived the clauses $C \lor \neg P_{v,j}$ for a fixed $v \in V$ and all $1 \leq j \leq N$. Then, we can derive $C$ in $N$ steps in width $\leq w+2$.
\end{lemma}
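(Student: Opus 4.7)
The plan is to use the stone-placement clauses of vertex $v$, which collectively encode the disjunction $P_{v,1} \lor P_{v,2} \lor \cdots \lor P_{v,N}$ via the auxiliary variables $Z_{v,j}$, and to ``peel off'' these variables one at a time, keeping each intermediate clause of the form $C \lor (\text{one or two auxiliary literals})$ so that the width never exceeds $w + 2$.

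Concretely, I would start from the hypothesis $C \lor \neg P_{v,N}$ and the axiom $Z_{v,N-1} \lor P_{v,N}$; resolving on $P_{v,N}$ produces $C \lor Z_{v,N-1}$, which has width $w + 1$. Then for $j = N-1, N-2, \ldots, 2$, I would iterate the following two-step move: (i) resolve the current clause $C \lor Z_{v,j}$ against the stone-placement axiom $Z_{v,j-1} \lor P_{v,j} \lor \neg Z_{v,j}$ on $Z_{v,j}$ to obtain $C \lor Z_{v,j-1} \lor P_{v,j}$ (width $w+2$), and then (ii) resolve against the hypothesis $C \lor \neg P_{v,j}$ on $P_{v,j}$ to obtain $C \lor Z_{v,j-1}$ (width $w+1$). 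After these iterations I arrive at $C \lor Z_{v,1}$, and I finish by resolving with the axiom $P_{v,1} \lor \neg Z_{v,1}$ on $Z_{v,1}$ (yielding $C \lor P_{v,1}$) and then with the hypothesis $C \lor \neg P_{v,1}$ on $P_{v,1}$ (yielding $C$).

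There is essentially no obstacle: the argument is a routine peeling over the chain of stone-placement clauses. The only points to verify are the step count and the width bound. The peeling contributes $2$ resolution steps per index $j \in \{2, \ldots, N-1\}$, plus constantly many steps at the two ends, giving $O(N)$ total derivation steps (matching the stated ``$N$ steps'' up to a constant factor, which is what is used in later size bounds). Throughout, every clause is a subclause of $C$ together with at most two of the literals $Z_{v,j-1}, P_{v,j}$, so its width is at most $w + 2$, as required.
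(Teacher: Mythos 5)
Your proof is correct and uses essentially the same chain-peeling argument as the paper, only traversing the stone-placement chain in the opposite direction: you start from $C \lor \neg P_{v,N}$ and work down to $j=1$ carrying the positive literal $Z_{v,j}$, whereas the paper starts from $C \lor \neg P_{v,1}$ and works up carrying $\neg Z_{v,j}$. Both versions maintain width $\le w+2$ and take $O(N)$ resolution steps, so they are interchangeable.
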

\begin{proof}
    We derive the clause $C$ in $N$ steps. We will subsequently derive $C \lor \neg Z_{v,j+_1}$ from $C \lor \neg Z_{v,j}$
    \begin{description}
        \item[Base step:] Deriving $C \lor \neg Z_{v,1}$.
         
\begin{forest}
  for tree={
    grow'=90,
    parent anchor=north,
    math content,
    before typesetting nodes={
      if level=0{}{
        if content={}{
          shape=coordinate
        }{
          content/.wrap value={\{#1\}},
        },
      },
    }
  }
  [C \lor \neg Z_{v,1}[
  [C \lor \neg P_{v,1}][P_{v,1} \lor \neg Z_{v,1}]
   ]
  ]
\end{forest}   

\item[Step $j$:] For $j \in [N-2]$, deriving $C \lor \neg Z_{v,j+1}$ from $C \lor \neg Z_{v,j}$.
        
\begin{forest}
  for tree={
    grow'=90,
    parent anchor=north,
    math content,
    before typesetting nodes={
      if level=0{}{
        if content={}{
          shape=coordinate
        }{
          content/.wrap value={\{#1\}},
        },
      },
    }
  }
  [C \lor \neg Z_{v,{j+1}}[
  [C \lor \neg P_{v,j+1}][C \lor P_{v,j+1} \lor \neg Z_{v, j+1} [[C \lor \neg Z_{v,j}] [Z_{v,j} \lor P_{v,j+1} \lor \neg Z_{v,j+1}] ]]
   ]
  ]
\end{forest}  

\item[Final step:] Deriving $C$.

\begin{forest}
  for tree={
    grow'=90,
    parent anchor=north,
    math content,
    before typesetting nodes={
      if level=0{}{
        if content={}{
          shape=coordinate
        }{
          content/.wrap value={\{#1\}},
        },
      },
    }
  }
  [C [ [C \lor \neg P_{v,N}] [ C \lor P_{v,N} [ [C \lor \neg Z_{v,N-1}] [P_{v,N} \lor Z_{v,N-1}]]  ] ]]
\end{forest}  
\end{description}
    
\end{proof}
For a vertex $v$, we define the set of clauses $ S(v) =  \{ \neg P_{v,j} \lor  R_j | 1 \leq j \leq N \}$.

\begin{lemma}
\label{lem:RedStonesResolving}
    Let $v$ be a vertex in $G$ with children $v_0, v_1$. We can derive $S(v)$ from $S(v_0)$, and $S(v_1)$ in constant width and length $O(N^3)$.
\end{lemma}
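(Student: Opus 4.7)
The plan is to derive each clause $\neg P_{v,k}\lor R_k$ of $S(v)$ separately, using the induction clauses at $v$, together with $S(v_0)$, $S(v_1)$, and the stone-placement clauses at $v_0$ and $v_1$ via Lemma~\ref{lem:ResolveP}. The whole derivation will use only constant-width clauses, and the total cost will decompose as $N^3$ preliminary resolutions plus $O(N^2)$ work per output clause, giving $O(N^3)$ overall.

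First I would handle the obfuscation. For each triple $(i,j,k)\in[N]^3$, the two induction axioms at $v$ differ only in the literal $\rho(i,j,k)$ vs.\ $\neg\rho(i,j,k)$; resolving on this variable produces, in one constant-width step, the \emph{clean} implication clause
\[
I_{i,j,k} \;\coloneqq\; \neg P_{v_0,i} \lor \neg R_i \lor \neg P_{v_1,j} \lor \neg R_j \lor \neg P_{v,k} \lor R_k,
\]
at a cost of $N^3$ resolutions. Next I would strip the $R_i$ and $R_j$ literals using the given axioms of $S(v_0)$ and $S(v_1)$: for each $(i,j)$, resolve $I_{i,j,k}$ against $\neg P_{v_0,i}\lor R_i\in S(v_0)$ on $R_i$, and then against $\neg P_{v_1,j}\lor R_j\in S(v_1)$ on $R_j$. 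The resulting clause
\[
J_{i,j,k} \;\coloneqq\; \neg P_{v_0,i}\lor \neg P_{v_1,j}\lor \neg P_{v,k}\lor R_k
\]
has width four and is obtained with two resolutions per $(i,j,k)$.

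With $k$ fixed, the task is reduced to collapsing the $P_{v_0,\cdot}$ and $P_{v_1,\cdot}$ literals. For each fixed pair $(j,k)$ I have $\neg P_{v_0,i}\lor C_{j,k}$ for every $i\in[N]$, where $C_{j,k}=\neg P_{v_1,j}\lor \neg P_{v,k}\lor R_k$ has width three. Lemma~\ref{lem:ResolveP}, applied at vertex $v_0$ (using the stone-placement axioms $\{P_{v_0,1}\lor\neg Z_{v_0,1},\ldots,Z_{v_0,N-1}\lor P_{v_0,N}\}$), then derives $C_{j,k}$ in $N$ steps and width five. Now for this fixed $k$ I have $\neg P_{v_1,j}\lor(\neg P_{v,k}\lor R_k)$ for every $j\in[N]$, and a second application of Lemma~\ref{lem:ResolveP}, this time at $v_1$, yields the target clause $\neg P_{v,k}\lor R_k$ in $N$ more steps and width four.

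Accounting: the preliminary $\rho$-elimination contributes $O(N^3)$ steps in constant width; for each of the $N$ values of $k$ one performs $O(N^2)$ resolutions to produce all $J_{i,j,k}$, followed by $N$ invocations of Lemma~\ref{lem:ResolveP} at $v_0$ ($N$ steps each) and one at $v_1$ ($N$ steps), totaling $O(N^2)$ per $k$. The grand total is $O(N^3)$ steps in constant width, as claimed. The derivation is entirely mechanical; I do not foresee a real obstacle, though one must be careful that Lemma~\ref{lem:ResolveP} is applied to a clause $C$ whose literals do not themselves involve any $P_{v_0,\cdot}$ (resp.\ $P_{v_1,\cdot}$) literal being eliminated, which holds in both invocations above.
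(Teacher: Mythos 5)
Your proof is correct and follows essentially the same route as the paper: resolve away the $R$-literals against $S(v_0)$ and $S(v_1)$, then invoke Lemma~\ref{lem:ResolveP} twice to collapse the $P_{v_0,\cdot}$ and $P_{v_1,\cdot}$ literals, all in constant width and $O(N^3)$ steps. The only cosmetic differences are that the paper performs the $\rho$-elimination once in the enclosing refutation (Lemma~\ref{lem:StoneRefutation}) rather than inside this lemma, and it collapses $P_{v_1,\cdot}$ before $P_{v_0,\cdot}$, the opposite order from yours; neither affects correctness or the cost.
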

\begin{proof}
We derive $S(v)$ in several steps.

    \begin{enumerate}
        \item For every $j, j_0, j_1 \in [N]$, we perform the following sequence of operations: 
        
 \begin{forest}
  for tree={
    grow'=90,
    parent anchor=north,
    math content,
    before typesetting nodes={
      if level=0{}{
        if content={}{
          shape=coordinate
        }{
          content/.wrap value={\{#1\}},
        },
      },
    }
  }
  [{\neg P_{v,j} \lor \neg P_{v_0, j_0} \lor \neg P_{v_1, j_1} \lor R_j} [ [\neg P_{v_1, j_1} \lor R_{j_1}] [\neg P_{v,j} \lor \neg P_{v_0, j_0} \lor \neg P_{v_1, j_1} \lor \neg R_{j_1} \lor R_j  [ [\neg P_{v_0, j_0} \lor R_{j_0}] [\neg P_{v,j} \lor \neg P_{v_0, j_0} \lor \neg R_{j_0} \lor \neg P_{v_1, j_1} \lor \neg R_{j_1} \lor R_j]]]]]
\end{forest}            
    \item For each fixed $j_0, j$, we apply Lemma~\ref{lem:ResolveP} to the clause $C \coloneqq \neg P_{v,j} \lor \neg P_{v_0, j_0} \lor R_j$ and we derive $\neg P_{v,j} \lor \neg P_{v_0,j_0} \lor R_j$.
    \item For each fixed $j$, we apply Lemma~\ref{lem:ResolveP} to the clause $C \coloneqq \neg P_{v,j} \lor R_j$ and we derive $\neg P_{v,j} \lor R_j $.
    
    \end{enumerate}    
\end{proof}
        
\begin{lemma}
\label{lem:StoneRefutation}
    The formula $\Stone(G, \rho)$ has a resolution refutation of width $O(1)$ and size polynomial in $N$.
\end{lemma}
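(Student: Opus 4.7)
The plan is to propagate the invariant ``every stone placed on $v$ is red'' (captured by $S(v) \coloneqq \{\neg P_{v,j} \lor R_j : 1 \le j \le N\}$) from the sinks of $G$ upward to the root $r$, and then derive a contradiction at $r$ using the root clauses. This mirrors the refutation of the classical stone formula of Alekhnovich et al.~\cite{AJPU02}; the only new ingredient is a cheap preprocessing that removes the obfuscation variable.

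\emph{Step 1: strip the obfuscation.} For every non-sink $v$ with out-neighbors $u, w$ and every $(i, j, k) \in [N]^3$, resolve the two induction clauses on $\rho(i, j, k)$ to obtain the ``clean'' implication clause
\[
\neg P_{u, i} \lor \neg R_i \lor \neg P_{w, j} \lor \neg R_j \lor \neg P_{v, k} \lor R_k.
\]
This preprocessing uses $O(N^4)$ resolution steps overall, each of constant width, and is precisely what makes the clause-base handed to Lemma~\ref{lem:RedStonesResolving} look like the non-obfuscated stone formula.

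\emph{Step 2: bottom-up induction along $G$.} Process the vertices of $G$ in reverse topological order. For each sink $s$, the set $S(s)$ is literally part of the axioms (the sink clauses). For a non-sink $v$ with out-neighbors $v_0, v_1$, assuming $S(v_0)$ and $S(v_1)$ have already been derived, invoke Lemma~\ref{lem:RedStonesResolving} using $S(v_0)$, $S(v_1)$, the clean induction clauses from Step~1, and the stone-placement clauses at $v$; this derives $S(v)$ in $O(N^3)$ steps and constant width. Summed over the at most $N$ non-sink vertices, Step~2 contributes $O(N^4)$ further steps.

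\emph{Step 3: contradiction at the root.} After Step~2 we have $S(r) = \{\neg P_{r,j} \lor R_j\}_{j=1}^N$. For each $j$, resolve $\neg P_{r,j} \lor R_j$ against the corresponding root clause $\neg P_{r,j} \lor \neg R_j$ on $R_j$ to get $\neg P_{r, j}$; these are $N$ additional steps in constant width. Finally, apply Lemma~\ref{lem:ResolveP} with $C = \emptyset$ at the vertex $r$: given $\{\neg P_{r,j}\}_{j=1}^N$ in hand and the stone-placement clauses at $r$ available as axioms, this derives the empty clause in $O(N)$ more constant-width steps. The total length is $O(N^4)$ and the total width is $O(1)$, as claimed. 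Since Steps~1--3 only reassemble derivations already licensed by Lemmas~\ref{lem:ResolveP} and~\ref{lem:RedStonesResolving}, no genuine obstacle arises; the only conceptual point worth emphasizing is that the obfuscation variables $\rho(i,j,k)$ must be eliminated \emph{before} Lemma~\ref{lem:RedStonesResolving} is invoked, which Step~1 handles uniformly in constant width.
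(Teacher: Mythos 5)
Your proposal is correct and follows essentially the same three-phase structure as the paper's proof: first resolve away the obfuscation variables $\rho(i,j,k)$ from the induction clauses, then propagate $S(v)$ from sinks to root via Lemma~\ref{lem:RedStonesResolving}, and finally combine $S(r)$ with the root clauses and Lemma~\ref{lem:ResolveP} to derive $\perp$. One tiny imprecision: in Step~2 the stone-placement clauses consumed inside Lemma~\ref{lem:RedStonesResolving} (through its calls to Lemma~\ref{lem:ResolveP}) are those at the children $v_0, v_1$, not at $v$ itself, but this does not affect the correctness of the overall argument.
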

\begin{proof}
The refutation proceeds in the following steps. 

\begin{description}

        \item[Elimination of the $\rho$'s]: For every induction clause $C$, we resolve the appended $\rho$-variable. 
        
 \begin{forest}
  for tree={
    grow'=90,
    parent anchor=north,
    math content,
    before typesetting nodes={
      if level=0{}{
        if content={}{
          shape=coordinate
        }{
          content/.wrap value={\{#1\}},
        },
      },
    }
  }
  [C [ [C \lor \rho] [C \lor \neg \rho]] ]
\end{forest}        

    \item[Derivation of $S(r)$:] For each sink $s$ of $G$, the clauses $S(s)$ are present in the axioms of $\Stone(G,\rho)$. By Lemma~\ref{lem:RedStonesResolving}, we subsequently derive the set $S(r)$ for the root $r$ of $G$.
    \item[Empty clause derivation:]  For each $1 \leq j \leq N,$ we derive $\neg P_{r,j}$.
    
 \begin{forest}
  for tree={
    grow'=90,
    parent anchor=north,
    math content,
    before typesetting nodes={
      if level=0{}{
        if content={}{
          shape=coordinate
        }{
          content/.wrap value={\{#1\}},
        },
      },
    }
  }
  [\neg P_{r,j} [ [\neg P_{r,j} \lor \neg R_j] [\neg P_{r,j} \lor R_j]]]
\end{forest}      

    Now by applying Lemma~\ref{lem:ResolveP} for $C$ being an empty clause $\perp$, we derive $\perp$, that concludes the proof.
    \end{description}
\end{proof}

Now, from constant-width polynomial-length refutation of $\Stone(G, \rho)$ we derive a polynomial-length refutation of the lifted formula $\Stone(G,\rho) \circ g$.

\begin{lemma} \label{lemma:lifted-upper-bound}
    Let $g: \cube^b \to \cube$ be a Boolean function and $\Phi$ be a CNF unsatisfiable formula over $n$ variables containing only constant width clauses.
    Suppose $\Phi$ has a resolution refutation of length $\ell$ and constant width.
    Then, $\Phi \circ g$ contains clauses of width $O(b)$ and admits a resolution refutation of size $\ell \cdot 2^{O(b)}$.
\end{lemma}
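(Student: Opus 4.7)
The plan is to simulate the given resolution refutation $\pi$ of $\Phi$ step by step in the lifted system, maintaining the invariant that for every clause $C$ derived in $\pi$ we have derived the entire set $C \circ g$ in the lifted proof. Processed in topological order, the base case is immediate: the axioms of $\pi$ are clauses of $\Phi$, and by definition their lifts are precisely the axioms of $\Phi \circ g$; at the end, the empty clause of $\pi$ lifts to the empty clause, giving a refutation of $\Phi \circ g$.

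The heart of the argument is simulating a single resolution step $A \vee x,\ B \vee \neg x \vdash A \vee B$ of $\pi$. Inductively the clauses of $(A \vee x) \circ g$ and $(B \vee \neg x) \circ g$ are available, and I need to derive every clause of $(A \vee B) \circ g$. Applying Observation~\ref{obs:LiftedClause} in both directions, any total assignment satisfying both hypothesis sets makes the base clauses $A \vee x$ and $B \vee \neg x$ true under the substitution $x = g(Y)$, where $Y$ is the block of lifted variables for $x$; resolving in the base system forces $A \vee B$ to hold, and hence every clause of $(A \vee B) \circ g$ is satisfied. Thus $(A \vee x) \circ g \cup (B \vee \neg x) \circ g$ semantically implies each target clause. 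The variables appearing in any of these lifted clauses come only from the $O(1)$ blocks corresponding to the literals of $A \vee B \vee x$, so only $O(b)$ variables are relevant. Lemma~\ref{lemma:resolution-complete} then produces a resolution derivation of each target clause of size $2^{O(b)}$, and since $|(A \vee B) \circ g| \leq 2^{O(b)}$, the total cost of simulating the step is $2^{O(b)}$.

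Summing over the $\ell$ steps of $\pi$ yields total size $\ell \cdot 2^{O(b)}$; every lifted clause has width at most $wb = O(b)$, where $w = O(1)$ bounds the widths in $\pi$. The one mild subtlety I anticipate is justifying the invocation of Lemma~\ref{lemma:resolution-complete} on just the local $O(b)$ variables rather than on all $nb$ variables of $\Phi \circ g$; this is fine because the lemma applies to any polynomial-size set of clauses semantically implying the target, and the local hypothesis set has constant size before lifting (hence $2^{O(b)}$ size after). I do not foresee any genuine difficulty beyond bookkeeping, since the per-step simulation is essentially a brute-force local derivation justified by completeness.
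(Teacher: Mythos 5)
Your proof is correct and follows essentially the same route as the paper: simulate the base refutation step by step, use Observation~\ref{obs:LiftedClause} to argue that $(A \lor x) \circ g \cup (B \lor \neg x) \circ g$ semantically implies $(A \lor B) \circ g$, note that the relevant clauses involve only $O(b)$ variables, and invoke Lemma~\ref{lemma:resolution-complete} to close each step in $2^{O(b)}$ size. The "mild subtlety" you flag — restricting the completeness argument to the $O(b)$ locally relevant variables — is exactly the point the paper also relies on, so your treatment matches the paper's in substance and detail.
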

\arkadev{We need to reference some basic result about resolution here. Put the above to preliminaries about resolution. }

\begin{proof}
    By construction, if $C$ is a clause of width $k$, then $|C\circ g| \leq 2^{bk}$. 
    If $k$ is constant, this is $2^{O(b)}$. 
    We show that, for every derivation step $(A \lor x), (B \lor \neg x) \rightarrow (A \lor B)$ in a proof for $\Phi$, we can derive all clauses of $(A \lor B) \circ g$ from the clauses of $(A \lor x) \circ g$ and $(B \lor \neg x) \circ g$ in polynomial size, assuming each of $A,B$ has constant width.
    This follows from the fact that $(A \lor x) \circ g$ and  $(B \lor \neg x) \circ g$  semantically imply $(A \lor B)\circ g$: an assignment $(x_{i,1}, \dots , x_{i,b})_{i \in [M]}$ satisfies formula $C \circ g$ if and only if the assignment $(g(x_{i,1}, \dots , x_{i,b}))_{i \in [n]}$ satisfies clause $C$.
    And since clauses $A \lor x, B \lor \neg x$ semantically imply $A \lor B$, it follows that the formulas $(A \lor x) \circ g$ and  $(B \lor \neg x) \circ g$ semantically imply the formula $(A \lor B) \circ g$.

    As both $A$ and $B$ are constant-width clauses, each of the formulae $(A \lor x) \circ g$ and $(B \lor \neg x) \circ g$  are defined on at most $O(b)$ variables. By Lemma~\ref{lemma:resolution-complete}, we can derive each clause in $(A \lor B) \circ g$ from $(A \lor x) \circ g$  and $(B \lor \neg x) \circ g$ in at most $2^{O(b)}$ resolution steps. 

    Using this fact, we can mimic the resolution refutation of $\Phi$.
    For each intermediate clause $C$ derived in the resolution refutation for $\Phi$, we can derive all clauses in $C \circ g$. 
    In the end, we derive $\perp \circ g = \{\perp\}$, i.e., the empty clause. 
    Assuming the width of the resolution refutation for $\Phi$ is bounded by some constant, the total length of our simulation is at most $\ell \cdot 2^{O(b)}$.
\end{proof}

Now, Theorem~\ref{thm:UpperBound} is a corollary of Lemma~\ref{lem:StoneRefutation}, and~\ref{lemma:lifted-upper-bound}.

\section{Lower Bound}
\label{sec:LowerBound}
In this subsection, we prove the lower bound part of Theorem~\ref{thm:MainResult} following the outline given in Section~\ref{sec:Outline}.

\begin{theorem}
\label{thm:LowerBound}
    There is an obfuscation map $\rho:[N]^3 \to \V$ such that any bottom-regular ResLin refutation of $\SP_{n,\rho} \circ \IP$ must have length at least $2^{\Omega(n^{1/3}/\log n)}$.
\end{theorem}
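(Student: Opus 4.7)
The plan is to follow the five-step template of Section~\ref{sec:Outline}, fleshing in the formula-specific modules for $\SP_{n,\rho}\circ\IP$. Fix a bottom-read-once linear branching program $\cP$ solving $\Search(\SP_{n,\rho}\circ\IP)$ and set $t \coloneqq n^{1/3}$. Let $\cP(\beta,t)$ denote the node reached on input $\beta$ after $t$ query steps. The goal is to exhibit a distribution $\mu'$ on the $mb$-bit inputs, and a set $R$ of nodes $v$ at path-depth exactly $t$ with $\codim(A_v)\geq t$, such that $\Pr_{\beta\sim\mu'}[\cP(\beta,t)\in R]\geq 7/10$ while each node in $R$ is reached with probability at most $2^{-\Omega(t/b)}$. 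A union bound then forces $|\cP|\geq 2^{\Omega(n^{1/3}/\log n)}$.

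First I would design a distribution $\mu$ on assignments to the variables of $\SP_{n,\rho}$, supported on critical (exactly-one-falsified-clause) assignments, and prove that every deterministic decision tree of depth $O(n^{1/3})$ solving $\Search(\SP_{n,\rho})$ errs with probability at least $3/10$ under $\mu$. A natural candidate samples a random root-to-sink path in $G_n$ together with a consistent red/blue frontier along it, colors the remaining stones randomly, and extends to the $Z$- and $\rho$-variables by an appropriate rule so that the falsified axiom lies ``deep'' in the structure. The obfuscation map $\rho$ is chosen by a probabilistic/union-bound argument so that its $\rho$-literals in the induction axioms do not give a shallow decision tree a cheap shortcut to the critical clause. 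Passing to the uniform lift $\mu' \coloneqq \overrightarrow{\IP}^{-1}(\mu)$, Theorem~\ref{thm:CFKMP-simulation} (the lifting theorem of Chattopadhyay, Filmus, Koroth, Meir and Pitassi) together with the small rectangular discrepancy of $\IP$ upgrades this to a parity-decision-tree hardness statement for $\Search(\SP_{n,\rho}\circ\IP)$ of depth $\Omega(n^{1/3}\log n)$ with error at least $2/10$ under $\mu'$.

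The heart of the argument is the notion of a \emph{foolable} node, and the assertion that most reachable nodes at depth $t$ are foolable. Informally, $v$ is foolable if the closure block set $\Cl(A_v)$ is small and unstructured enough that no axiom of $\SP_{n,\rho}$ is forced to be falsified by the partial assignment $\overrightarrow{\IP}$ applied to the closure-assignment of any $\beta\in A_v$. If $\cP$ reached non-foolable nodes at depth $t$ with probability $>3/10$ over $\mu'$, then Lemma~\ref{lem:StiflingGadget}, together with the balancedness and stifling property of $\IP$, would allow us to simulate the first $t$ queries of $\cP$ by a parity decision tree of depth $O(tb)=O(n^{1/3}\log n)$ that outputs a falsified axiom with probability $>7/10$ under $\mu'$, contradicting the previous step. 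Next, for any foolable node $v$ reached after exactly $t$ steps, the bottom-read-once property together with Lemma~\ref{lem:PostDim} will yield $\codim(A_v)\geq t$; intuitively, foolability prevents $A_v$ from collapsing into a small-closure subspace, while bottom-read-onceness quantitatively lower-bounds the number of independent constraints that had to be queried along the path.

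The last ingredient is a general pseudorandomness lemma for lifted distributions: for every affine space $A\subseteq\F_2^{mb}$ with $\codim(A)=r$ and every distribution of the form $\overrightarrow{g}^{-1}(\mu)$ with $g$ balanced and stifled, $\Pr_{\beta\sim\overrightarrow{g}^{-1}(\mu)}[\beta\in A]\leq 2^{-\Omega(r/b)}$. The proof combines the safe-basis structure of Theorem~\ref{thm:SafeBasis} on the blocks outside $\Cl(A)$ (each pivot block contributes an independent balanced coin flip via stifling and Lemma~\ref{lem:StiflingGadget}) with the observation that the $\Cl(A)$-blocks cannot boost the mass beyond the ambient sampling. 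Assembling the three pieces via the union-bound calculation already displayed in Section~\ref{sec:Outline} yields the claimed $2^{\Omega(n^{1/3}/\log n)}$ size lower bound. I expect the main obstacle to be Step~\ref{LBstep:foolable-frequent}: because $\SP_{n,\rho}$ has only constant-width clauses, a naive analogue of property $(\ast)$ fails, and the notion of foolability must be tuned so that simultaneously (i) non-foolability of $\cP(\beta,t)$ yields enough exposed structure to reconstruct a falsifying axiom by a short parity decision tree, and (ii) foolability still forces $\codim(A_v)\geq t$ under the bottom-read-once hypothesis. Getting the obfuscation $\rho$, the distribution $\mu$, and the definition of foolability to cooperate with the decision-tree lower bound of Step~\ref{LBstep:Tree} is where the construction genuinely departs from the BPHP analysis of Efremenko, Garl{\'i}k and Itsykson.
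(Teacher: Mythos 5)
Your outline is essentially the paper's, and Steps 1, 2, and 5 of your sketch are heading in the right direction. The genuine gap is your Step 4 (foolability forces $\codim(A_v)\geq t$), which is precisely the step where the bottom-read-once hypothesis, the obfuscation map, and the stifling gadget all earn their keep, and which your proposal reduces to the assertion that ``foolability prevents $A_v$ from collapsing into a small-closure subspace.'' That intuition is both vague and in the wrong direction. The actual argument proceeds by contradiction: \emph{assume} $\rank(M)$ is small (so by Lemma~\ref{lem:Closure} the closure $T=\Cl(A_p)$ occupies fewer than $N/800$ blocks, marking fewer than $N/400$ stones), and then show that \emph{every} variable $Y$ of $\SP_{n,\rho}\circ\IP$ must appear in the clause at some sink reachable from $p$. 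To produce a critical input in $A_p$ whose unique falsified clause contains a prescribed $Y\in\mathrm{BLOCK}(Z)$, one needs (a) an obfuscation map $\rho$ with the covering property of Lemma~\ref{lem:RhoFunction}, i.e.\ for every small $Q\subseteq[N]$ and every target base variable $Z$ there is a triple $(i,j,k)$ disjoint from $Q$ with $\rho(i,j,k)=Z$; (b) a combinatorial extension lemma (Lemma~\ref{lem:Fooling}) producing a total base assignment extending $\alpha|_T$ whose unique falsified clause is one of the two induction clauses carrying the literal $Z$; and (c) the stifling lift (Lemma~\ref{lem:StiflingGadget}) to pull that base assignment back to some $\beta'\in A_p$ with $\overrightarrow{\IP}(\beta')$ equal to it, so that the unique falsified lifted clause (which contains $Y$ by Observation~\ref{obs:AllVariablesInLift}) labels a sink reachable from $p$. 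With every variable touched at some reachable sink, the span of the sinks' constraint rows is all of $\F_2^{mb}$, and by Lemma~\ref{lem:BPPath} this span lies inside $\Span(\Row(M)\cup\Post(p))$; Lemma~\ref{lem:PostDim} caps $\dim(\Post(p))$ at $mb-t$, forcing $\rank(M)\geq t$. None of this chain appears in your proposal, and without it the union-bound calculation has nothing to sum over.

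Two smaller but non-cosmetic misplacements compound the gap. First, you attribute the role of $\rho$ to Step 1, saying it is chosen so that its literals ``do not give a shallow decision tree a cheap shortcut.'' In fact $\rho$ is irrelevant to the decision-tree hardness of $\Search(\SP_{n,\rho})$: under $\mu$ the stone placement and all $P,Z$ variables are deterministic, the tree may as well query only the $R_j$'s, and the hardness comes purely from the anti-concentration of the blue random walk (Lemmas~\ref{lemma:anti-concentration} and~\ref{lemma:forbidden}); the obfuscation is used exclusively in Step 4 as described above. Second, you invoke Lemma~\ref{lem:StiflingGadget} inside Step 3 (foolable nodes are frequent). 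That step needs no stifling at all: one simply turns the first $t$ linear queries of $\cP$ into a parity decision tree, appends a local brute-force query of the color blocks around any pyramid vertex whose variables enter $\Cl(A)$, and observes that non-foolability of $\cP(\beta,t)$ forces this PDT to correctly identify the falsified induction clause, contradicting Lemma~\ref{lemma:PDT-average}. Stifling enters only in Step 4 (to realize fooling assignments inside $A_p$) and, in the weaker form of the $\varepsilon$-balanced-stifled property, in Step 5's rank-fooling Lemma~\ref{lemma:rank-fooling} (which does not use Lemma~\ref{lem:StiflingGadget} itself).
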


Recall that number of variables of $\SP_{n,\rho} \circ \IP$ is $\Theta(n^4 \log n)$.
Thus, the lower bound given by Theorem~\ref{thm:LowerBound} yields the lower bound claimed in Theorem~\ref{thm:MainResult}.
For the rest of this section, we fix $G = (V,E)$ to be the pyramid graph of $n$ levels, and $N = n(n+1)/2$ vertices.

\subsection{The Stone Formula is Average-Case Hard for Decision Trees}

We shall construct a distribution $\mu$ on $\{0,1\}^m$ such that for any obfuscation map $\rho: [N]^3 \to \V$, the search problem $\Search(\SP_{n,\rho})$ is hard on average w.r.t. $\mu$ for deterministic decision trees of sufficiently small height (around $n^{1/3}$). 

First, we fix an arbitrary bijection $f:[N] \to V$ between stones and vertices of the pyramid. All assignments in $\Supp(\mu)$ will place the stone $i$ on vertex $f(i)$.
The distribution $\mu$ samples the assignments as follows.
\begin{enumerate}
    \item Assign stone $i$ to vertex $f(i)$. Formally for each $v \in V$, $i \in [N]$, and $j \in [N - 1]$, we set:
        \begin{align*}
            & P_{v, i} =  \begin{cases}
            1 & \text{ if } f(i)=v\\
            0 & \text{ otherwise}
            \end{cases}
            \\
            &Z_{v,j} = \begin{cases}
                0 & \text{ if }j < f^{-1}(v) \\
                1 & \text{ otherwise}
            \end{cases}
        \end{align*}
    \item Sample $n - 2$ independent uniform bits $B_2, \dots , B_{n - 1} \in \{0,1\}$
    \item Let $X_1 =1$, and for $2 \leq j \leq n - 1$, let $X_j = X_{j - 1} + B_j$. Color the vertices $(j, X_j)$ blue for $1 \leq j \leq n - 1$ and other vertices red, i.e., for each stone $i \in [N]$, we set:
        \[
            R_i= \begin{cases}
                0 & \text{ if } j \leq n - 1 \text{ and } (j,X_j) = f(i)\\
                1 & \text{ otherwise}
            \end{cases}
        \]
\end{enumerate}

Let $\alpha \in \Supp(\mu)$.
The assignment $\alpha$ corresponds to the following stone placement.
It places a different stone on each vertex.
There is a path $P$ from the root $r = (1,1)$ to a vertex $v$ in the level $n - 1$ given by the random variables $X_1,\dots,X_{n-1}$, i.e. the vertices of the path are $\{(1,X_1),\dots,(n-1,X_{n - 1})\}$.
The stones on the vertices of $P$ are colored blue, all other stones are colored red. An example of such a coloring is shown in Figure \ref{fig:alpha}.

\begin{figure}[ht]
    \centering
    
\begin{tikzpicture}
\node (1001)[bluecircle]{};
\node (2001)[redcircle, below of = 1001]{};
\node (2002)[bluecircle, right of = 2001]{};
\node (3001)[redcircle, below of = 2001]{};
\node (3002)[bluecircle, right of = 3001]{};
\node (3003)[redcircle, right of = 3002]{};
\node (4001)[redcircle, below of = 3001]{};
\node (4002)[redcircle, right of = 4001]{};
\node (4003)[bluecircle, right of = 4002]{};
\node (4004)[redcircle, right of = 4003]{};
\node (5001)[redcircle, below of = 4001]{};
\node (5002)[redcircle, right of = 5001]{};
\node (5003)[redcircle, right of = 5002]{};
\node (5004)[bluecircle, right of = 5003]{};
\node (5005)[redcircle, right of = 5004]{};
\node (6001)[redcircle, below of = 5001]{};
\node (6002)[redcircle, right of = 6001]{};
\node (6003)[redcircle, right of = 6002]{};
\node (6004)[redcircle, right of = 6003]{};
\node (6005)[bluecircle, right of = 6004]{};
\node (6006)[redcircle, right of = 6005]{};
\node (7001)[redcircle, below of = 6001]{};
\node (7002)[redcircle, right of = 7001]{};
\node (7003)[redcircle, right of = 7002]{};
\node (7004)[redcircle, right of = 7003]{};
\node (7005)[redcircle, right of = 7004]{};
\node (7006)[redcircle, right of = 7005]{};
\node (7007)[redcircle, right of = 7006]{};
\draw [arrow](1001) -- (2002);
\draw [arrow](1001) -- (2001);
\draw [arrow](2001) -- (3002);
\draw [arrow](2001) -- (3001);
\draw [arrow](2002) -- (3003);
\draw [arrow](2002) -- (3002);
\draw [arrow](3001) -- (4002);
\draw [arrow](3001) -- (4001);
\draw [arrow](3002) -- (4003);
\draw [arrow](3002) -- (4002);
\draw [arrow](3003) -- (4004);
\draw [arrow](3003) -- (4003);
\draw [arrow](4001) -- (5002);
\draw [arrow](4001) -- (5001);
\draw [arrow](4002) -- (5003);
\draw [arrow](4002) -- (5002);
\draw [arrow](4003) -- (5004);
\draw [arrow](4003) -- (5003);
\draw [arrow](4004) -- (5005);
\draw [arrow](4004) -- (5004);
\draw [arrow](5001) -- (6002);
\draw [arrow](5001) -- (6001);
\draw [arrow](5002) -- (6003);
\draw [arrow](5002) -- (6002);
\draw [arrow](5003) -- (6004);
\draw [arrow](5003) -- (6003);
\draw [arrow](5004) -- (6005);
\draw [arrow](5004) -- (6004);
\draw [arrow](5005) -- (6006);
\draw [arrow](5005) -- (6005);
\draw [arrow](6001) -- (7002);
\draw [arrow](6001) -- (7001);
\draw [arrow](6002) -- (7003);
\draw [arrow](6002) -- (7002);
\draw [arrow](6003) -- (7004);
\draw [arrow](6003) -- (7003);
\draw [arrow](6004) -- (7005);
\draw [arrow](6004) -- (7004);
\draw [arrow](6005) -- (7006);
\draw [arrow](6005) -- (7005);
\draw [arrow](6006) -- (7007);

\end{tikzpicture} 
    \caption{An example of a pyramid graph with coloring giving by an assignment sampled by the hard distribution $\mu$. }
    \label{fig:alpha}
\end{figure}
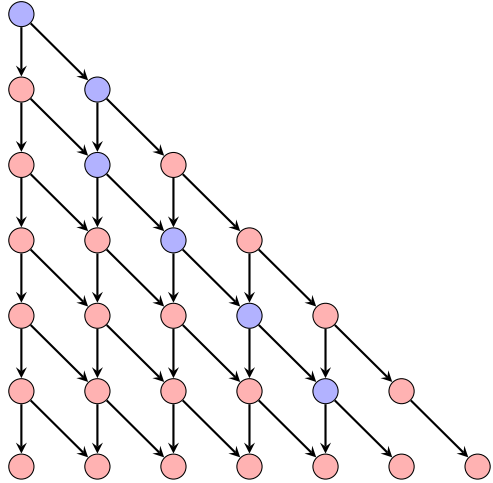

We call the path $P$ as the \emph{blue path induced by $\alpha$} and the vertex $v$ as the \emph{end} of $P$.
Note that the only clause falsified by the assignment $\alpha$ is one of the induction clauses for the vertex $v$ and its children $u$ and $w$.
Formally, for the stones $i = f^{-1}(u), j = f^{-1}(w)$, and $k = f^{-1}(v)$, the assignment $\alpha$ falsifies exactly one of the following two induction clauses (depending how $\alpha$ sets the value of $\rho(i,j,k)$):
\[
D_1(v) := \neg P_{u, i} \lor \neg R_{i} \lor \neg P_{w, j} \lor \neg R_{j} \lor \neg P_{v,k} \lor R_k \lor \rho(i, j, k)
\] 
\[
D_0(v) := \neg P_{u, i} \lor \neg R_{i} \lor \neg P_{w, j} \lor \neg R_{j} \lor \neg P_{v,k} \lor R_k \lor \neg \rho(i, j, k)
\]


Consider a random walk $Y_1,\dots,Y_k$ on a number line starting at $q \in \N$ distributed as follows: $Y_1 = q$, and for $i > 1$
\[
Y_i = 
\begin{cases}
    Y_{i-1} + 1 & \text{with probability $\frac{1}{2}$} \\
    Y_{i-1} & \text{with probability $\frac{1}{2}$} \\
\end{cases}
\]
Note that the random variables $X_1,\dots,X_{n-1}$ used in the construction of $\mu$ are distributed as $Y_1,\dots,Y_{n-1}$ for $Y_1 = 1$.

\begin{lemma}  \label{lemma:anti-concentration}
    There exists a constant $c_1 \geq 0$ such that for any $p \in \{q,\dots,q + k - 1\}$ and $t \in \{2,\dots,k\}$, we have $\Pr [Y_t = p] \leq c_1/\sqrt{t}$.
\end{lemma}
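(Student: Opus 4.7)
The plan is to observe that $Y_t - q$ is nothing but a sum of $t-1$ i.i.d.\ Bernoulli($1/2$) increments. Indeed, by definition $Y_i - Y_{i-1} \in \{0,1\}$ with equal probability, independently, so
\[
Y_t - q = \sum_{i=2}^{t}(Y_i - Y_{i-1}) \sim \mathrm{Binomial}(t-1, 1/2).
\]
Consequently, for any $p$ with $p-q \in \{0,1,\ldots,t-1\}$,
\[
\Pr[Y_t = p] = \binom{t-1}{p-q} 2^{-(t-1)},
\]
and for $p-q$ outside this range the probability is $0$; in either case the bound we want trivially reduces to bounding this binomial mass.

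Next, I would use the standard fact that the central binomial coefficient dominates all other coefficients of the same row, i.e., $\binom{t-1}{j} \le \binom{t-1}{\lfloor (t-1)/2 \rfloor}$ for all $j$. A one-line Stirling estimate (or the well-known bound $\binom{2n}{n} \le 4^n/\sqrt{\pi n}$) gives
\[
\binom{t-1}{\lfloor (t-1)/2 \rfloor} 2^{-(t-1)} \;\le\; \frac{C}{\sqrt{t-1}}
\]
for some absolute constant $C > 0$ and every $t \ge 2$. Since $t \ge 2$ implies $\sqrt{t-1} \ge \sqrt{t/2}$, this yields $\Pr[Y_t = p] \le c_1/\sqrt{t}$ with $c_1 := C\sqrt{2}$, which is the desired bound. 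The hypothesis $t \le k$ and the range restriction on $p$ play no role in the proof; they are only there because the lemma is stated in the form needed by the later application to the blue-path endpoint.

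There is no real obstacle here — the lemma is a textbook anti-concentration estimate for the binomial distribution, and the only minor care needed is handling the boundary case $t = 2$, where $\binom{1}{0} 2^{-1} = 1/2$ trivially satisfies $1/2 \le c_1/\sqrt{2}$ for any $c_1 \ge 1$. One could also cite any standard reference for the central binomial bound in lieu of reproducing Stirling, keeping the argument to essentially three lines.
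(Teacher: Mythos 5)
Your proof is correct and takes essentially the same route as the paper: write $Y_t - q$ as a Binomial$(t-1,1/2)$ sum, bound the point mass by the central binomial coefficient, and invoke Stirling. The only (harmless) elaboration you add is the explicit $\sqrt{t-1}\ge\sqrt{t/2}$ conversion and the $t=2$ boundary check, which the paper folds silently into its constant.
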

\begin{proof}
   Note that $Y_t = q + \sum_{i = 2}^t B_i$, where each $B_i$ is an independent uniform random bit. Now, $Y_t = p = p' + q$ for $p' \in \{0,\dots,k-1\}$ if and only if $\sum_{i = 2}^t B_i = p'$.
   \[
   \Pr [Y_t=p] = \Pr[\sum^t_{i=2} B_i = p'] =\binom{t-1}{p'}\cdot 2^{-t + 1} \leq \binom{t - 1}{\lfloor \frac{t-1}{2} \rfloor}\cdot 2^{-t + 1} \leq \frac{c_1}{\sqrt{t}}
   \]
   For an appropriate constant $c_1 > 0$, the last inequality is implied by Stirling's formula.
\end{proof}

For a set $S \subseteq [k] \times \N$, we say the random walk $W$ avoids $S$ if for all $(i,j) \in S$ it holds that $Y_i \neq j$.
\begin{lemma}  \label{lemma:forbidden}
    Let $c_2 \geq 1$ be a constant and $S \subseteq [k] \times \mathbb{N}$ be a set of forbidden points with $|S| \leq t$.
    Suppose there exists an interval $I = [L,R] \subseteq [k]$ with $|I| \geq c \cdot t^2$ for sufficiently large constant $c$ depending on $c_2$, such that no point of $S$ has the first coordinate in $I$, i.e., for all $(i,j) \in S: i<L$ or $i>R$.
    If the random walk $W = Y_1,\dots,Y_k$ avoids $S$ with non-zero probability, then for any $z \in \{q,\dots,q + k - 1\}$ it holds that 
    \[
    \Pr \bigl[Y_k = z \mid W \text{ avoids } S\bigr] \leq \frac{1}{c_2 t}.
    \] 
\end{lemma}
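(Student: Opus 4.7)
The plan is to recast the conditional probability as a combinatorial ratio by parameterizing the walk via its underlying random bits. Writing $Y_i = q + \sum_{j=2}^i B_j$ for i.i.d.\ uniform bits $B_2, \dots, B_k \in \{0,1\}$, I would split the bits into \emph{outer} bits $B_{\text{out}} := (B_2, \dots, B_L, B_{R+1}, \dots, B_k)$ and \emph{inner} bits $B_{\text{in}} := (B_{L+1}, \dots, B_R)$, and set $u := \sum_{j=L+1}^R B_j = Y_R - Y_L$. Since $S = S_1 \sqcup S_2$ with first coordinates in $[1,L-1]$ and $[R+1,k]$ respectively, the event $E_1$ of avoiding $S_1$ depends only on $B_{\text{out}}$. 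Given $B_{\text{out}}$, writing $Y_{R+m} = Y_R + c_m$ with $c_m := \sum_{j=R+1}^{R+m} B_j$ already determined, the event $E_2$ of avoiding $S_2$ reduces to $Y_R \notin F_2$ for some set $F_2$ of size at most $t$, and hence to $u \notin F$ where $F := F_2 - Y_L$ satisfies $|F \cap [0, R-L]| \le t$. Likewise, $Y_k = z$ becomes $u = u^*$ for a single $u^*$ determined by $B_{\text{out}}$.

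Next, I would count: for each outer configuration satisfying $E_1$, the number of inner bit-strings producing $u = v$ equals $\binom{R-L}{v}$, giving
\begin{align*}
|E| &= \sum_{B_{\text{out}} \,:\, E_1}\ \sum_{u \in [0,R-L] \setminus F} \binom{R-L}{u}, \\
|E \cap \{Y_k = z\}| &= \sum_{B_{\text{out}} \,:\, E_1} \binom{R-L}{u^*} \cdot \mathbf{1}\bigl[u^* \in [0,R-L] \setminus F\bigr].
\end{align*}
The heart of the argument is a uniform per-outer-configuration bound on the ratio of the summands. By Lemma~\ref{lemma:anti-concentration} in its form $\binom{R-L}{v} \le c_1 \cdot 2^{R-L}/\sqrt{R-L}$, the per-outer numerator is at most $c_1 \cdot 2^{R-L}/\sqrt{R-L}$, while the total mass $\sum_{u \in F} \binom{R-L}{u}$ removed from the denominator is at most $t \cdot c_1 \cdot 2^{R-L}/\sqrt{R-L}$. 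Hence the per-outer denominator is at least $2^{R-L}\bigl(1 - c_1 t / \sqrt{R-L}\bigr)$; choosing $c$ so that $R - L \ge c t^2 \ge (2 c_1 c_2 t)^2$ makes this at least $2^{R-L}/2$, and the per-outer ratio at most $2 c_1/\sqrt{R-L} \le 1/(c_2 t)$.

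Finally, since $\Pr[Y_k = z \mid W \text{ avoids } S] = |E \cap \{Y_k = z\}|/|E|$ is a nonnegatively weighted average of the per-outer ratios, it inherits the same upper bound $1/(c_2 t)$; taking $c = 4 c_1^2 c_2^2$ (or any larger constant) suffices. The hypothesis that the walk avoids $S$ with nonzero probability guarantees that at least one outer configuration contributes a positive denominator, while outer configurations violating $E_1$ contribute $0$ to both sums and are harmless. The one place requiring care is the reduction of $E_2$ to the single-variable condition ``$u \notin F$'': it needs the explicit shift $F = F_2 - Y_L$ together with the identifications $Y_{R+m} = Y_R + c_m$, from which $|F \cap [0,R-L]| \le |S_2| \le t$ falls out cleanly. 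Once this reduction is in place, the rest is bookkeeping around the central-binomial estimate supplied by Lemma~\ref{lemma:anti-concentration}.
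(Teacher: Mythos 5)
Your argument is correct; it reaches the same bound via a different (and arguably cleaner) conditioning. The paper conditions only on $Y_L = p$: given $Y_L = p$, avoiding $S_1$ is independent of the rest of the walk, and it then handles $S_2$ probabilistically, showing by a union bound plus Lemma~\ref{lemma:anti-concentration} that $\Pr[\text{hit } S_2 \mid Y_L = p, W \text{ avoids } S_1] \le 1/2$, which lets it pay a factor of two to drop the $S_2$-avoidance from the conditioning; a second application of Lemma~\ref{lemma:anti-concentration} then bounds $\Pr[Y_k = z \mid Y_L = p]$. You instead condition on the full outer bit vector (everything before $L$ and after $R$), which makes the $S_2$ constraint a deterministic exclusion $u \notin F$ with $|F| \le t$ on the inner sum $u = Y_R - Y_L$, and reduces the whole claim to a ratio of binomial-coefficient sums over $u$. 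Both proofs split $S$ around the gap and invoke the same central-binomial anti-concentration twice (once for the target event, once for the mass excised by $S_2$); the advantage of your version is that the conditional-independence observation the paper needs becomes automatic and the bookkeeping is fully combinatorial, while the paper's is marginally shorter by averaging over the single value $Y_L = p$ rather than over full outer configurations. One minor point: since $|I| = R - L + 1 \ge c t^2$ gives $R - L \ge c t^2 - 1$, you should take $c$ slightly larger than $4 c_1^2 c_2^2$ (say $8 c_1^2 c_2^2$) so that $R-L \ge 4 c_1^2 c_2^2 t^2$ certainly holds, but this is cosmetic.
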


\begin{proof}
    We partition $S$ into two subsets $S_1$ and $S_2$ of points before and after the interval $I$,   $S_1 = \{(i,j) \in S \mid i < L \}$,  and $S_2 = \{(i,j) \in S \mid i > R \}$.
    Note that by the assumption, we have $S = S_1 \dot \cup S_2$.
    
    We show that for all $p$ such that $\Pr \bigl[Y_L = p \mid W \text{ avoids } S \bigr] > 0$, we have $\Pr \bigl[Y_k=z \mid Y_L=p, W \text{ avoids } S\bigr] \leq  1/ c_2 t $. 
    We set $c := 4c^2_1c^2_2$ where $c_1$ is the constant given by Lemma~\ref{lemma:anti-concentration}.
    We have by Lemma~\ref{lemma:anti-concentration} that
    \begin{equation}
        \label{eq:ReachingZ}
        \Pr \bigl[Y_k= z \mid Y_L= p, W \text{ avoids } S_1\bigr] = \Pr\bigl[Y_k= z \mid Y_L= p\bigr] \leq \frac{1}{2c_2t},
    \end{equation}
    since $z - L \geq |I| \geq c \cdot t^2$.
    Again, for all $(i,j) \in S_2$ (i.e., $i>R$) we have by Lemma~\ref{lemma:anti-concentration} that
    \[
    \Pr \bigl[ Y_i=j \mid Y_L=p, W \text{ avoids } S_1 \bigr] \leq \frac{1}{2c_2t} \leq \frac{1}{2t}.
    \]
    By union bound,
    \begin{equation}
    \label{eq:AvoidingProbability}
    \Pr \bigl[\exists (i,j) \in S_2 \text{ such that } Y_i=j  \mid Y_L = p, W \text{ avoids } S_1 \bigr] \leq \frac{1}{2}.    
    \end{equation}
    Therefore,
    \begin{align*}
        \Pr[Y_k = z \mid Y_L = p, W \text{ avoids } S ] &= \frac{\Pr\bigl[Y_k = z, W \text{ avoids } S_2 \mid Y_L = p, W \text{ avoids } S_1\bigr]}{\Pr\bigl[W \text{ avoids } S_2 \mid Y_L = p, W \text{ avoids } S_1\bigr]} \\
    &\leq 2 \cdot \Pr\bigl[Y_k = z \mid Y_L = p, W \text{ avoids } S_1\bigr] \tag{by~(\ref{eq:AvoidingProbability})} \\
    & \leq \frac{1}{c_2t}  \tag{by~(\ref{eq:ReachingZ})}
    \end{align*}
    Now, we are ready to finish the proof.
    \begin{align*}
        \Pr \bigl[Y_k= z | W \text{ avoids } S\bigr] &= \sum_p \Pr\bigl[Y_L = p \mid W \text{ avoids } S] \cdot \Pr\bigl[Y_k = z \mid Y_L = p, W \text{ avoids } S \bigr] \\
        &\leq  \sum_p \Pr\bigl[Y_L = p \mid W \text{ avoids } S] \cdot\frac{1}{c_2 t}  =  \frac{1}{c_2 t} 
    \end{align*}

    \end{proof}

Now we show that when inputs are sampled according to $\mu$, any deterministic decision tree for $\Search(\SP_{n,\rho})$ with small height makes an error with high probability. Note that for each $v \in V, i \in [N], j \in [N-1]$, the assignment to the variables $P_{v,i}$ and $Z_{v,j}$ are fixed by any assignment in $\Supp(\mu)$ (in other words, for each vertex the stone placed on it is fixed). 
Thus, we can assume WLOG the decision tree only queries the variables $R_j$. 
We say a decision tree queries the color of a vertex $v$ of $G$ if it queries the variable $R_{f^{-1}(v)}$. (Recall that the stone $f^{-1}(v)$ is placed on the vertex $v$ by assignments in $\Supp(\mu)$.)

Note that there is a simple, even non-adaptive, decision tree of height $O(\sqrt{n})$ that makes few errors. 
It simply queries the colours of $O(\sqrt{n})$ nodes of the pyramid graph, centered around the $(n-1)/2$-th node at level $n-1$. 
With very high probability, there is a blue node among the queried ones which uniquely identifies the falsified induction clause. 
Nevertheless, we will show that all decision trees of height at most $n^{1/3}$, will make errors with large probability to identify a falsified clause.

Consider a decision tree $\cT$ for $\Search(\SP_{n,\rho})$.
We transform $\cT$ into a decision tree $\cT'$ in a canonical form:
\begin{itemize}
    \item Initially, $\cT'$ always queries the color of the root $r$ of $G$.
    \item Suppose $\cT$ outputs an induction clause $D_0(v)$ or $D_1(v)$ for a vertex $v$ of $G$. Then, $\cT'$ queries the color of the vertex $v$ first. If the color of $v$ is red (i.e., $R_{f^{-1}(v)} = 1$), then $\cT'$ outputs an error symbol. Otherwise it outputs the same induction clause that $\cT$ outputs.
    \item If $\cT$ outputs any other clause, then $\cT'$ outputs an error symbol.
\end{itemize}

We remark this modification increases the height of the tree by at most two. 
Given any assignment in $\Supp(\mu)$, a decision tree $\cT'$ in a canonical form can output either an induction clause from $\{D_0(v), D_1(v) \mid v \in V(G)\}$ or an error symbol. The probability of $\cT$ making an error is precisely the probability of reaching a leaf node of $\cT'$ labeled with an error symbol.

Note that for each cube $C \subseteq \{0,1\}^m$ there is a corresponding partial assignment $\alpha_C \in \{0,1,*\}^m$ such that the cube $C$ is exactly the set of total extensions of $\alpha_C$, i.e., $C = \{\alpha \in \cube^m \mid \alpha \text{ extends } \alpha_C \}$.
We say a cube $C \subseteq \cube^{m}$ fixes a vertex $v \in V(G)$ to red (or blue) if the corresponding partial assignment $\alpha_C$ assigns a value 1 (or 0) to the variable $R_{f^{-1}(v)}$.

Now, fix a decision tree $\cT$ for $\Search(\SP_{n, \rho})$ in a canonical form and let $h := \gamma \cdot n^{1/3}$ be the height of $\cT$, where $\gamma > 0$ is sufficiently small constant.

\begin{definition}
\label{defn:useful_node}
    We say a cube $C \subseteq \{0,1\}^m$ is \emph{useful} if there exist $1 \leq L_1 \leq L_2 \leq L_3 \leq L_4 \leq N$ such that:
    \begin{enumerate}
        \item The cube $C$ fixes the some vertex in level $L_1$ to blue and some vertex in level $L_4$ to blue.
        \item For all $L_2 \leq \ell \leq L_3$, the cube $C$ does not fix the color of any vertex in level $\ell$.
        \item $L_3 - L_2 \geq \dfrac{n}{2h}$
    \end{enumerate}
    A node $p$ of $\cT$ is called \emph{useful} if the cube $C_p$ associated with it is useful.
\end{definition}
Clearly, the root of $\cT$ is not useful.

\begin{lemma}  \label{lemma:leaf-bad}
    Let $\alpha \in \Supp(\mu)$ be an assignment on which $\cT$ reaches the leaf $p$. If $p$ does not output an error symbol, then $p$ is useful.
\end{lemma}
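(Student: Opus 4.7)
The plan is to leverage the canonical form of $\cT$: if the leaf $p$ does not output an error symbol, then by construction $p$ outputs an induction clause $D_0(v)$ or $D_1(v)$, and moreover the last query made before reaching $p$ was the color of $v$, which was observed to be blue. First I would argue that the vertex $v$ in question must lie in level $n-1$. The assignments in $\Supp(\mu)$ are critical assignments that falsify exactly one induction clause, and that clause must be attached to a blue vertex whose two children are both red. But under $\mu$ the coloring follows the blue path $(1, X_1), \ldots, (n-1, X_{n-1})$, so the only blue vertex with both children red is the endpoint $(n-1, X_{n-1})$. Hence $v$ sits at level $n-1$, and the cube $C_p$ fixes both the root at level $1$ (queried first in canonical form and always blue under $\mu$) and $v$ at level $n-1$ to blue.

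Next I would extract a large unoccupied block of levels via a pigeonhole argument on the queries made along the root-to-$p$ path. Since $\cT$ has height $h = \gamma n^{1/3}$ and queries only color variables $R_{f^{-1}(\cdot)}$, the cube $C_p$ fixes colors of at most $h$ vertices, so at most $h$ levels contain at least one vertex whose color is fixed by $C_p$. Discarding the (at most $h-2$) occupied levels from the open interval $[2, n-2]$ partitions it into at most $h-1$ maximal unoccupied sub-intervals whose total length is at least $n - h - 1$. By averaging, at least one such sub-interval $[L_2, L_3]$ has length at least $(n - h - 1)/(h - 1)$, which for sufficiently small $\gamma$ and large $n$ exceeds $n/(2h) + 1$; hence $L_3 - L_2 \geq n/(2h)$.

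To conclude I would set $L_1 := 1$ and $L_4 := n-1$, so that $C_p$ fixes a vertex in level $L_1$ to blue (the root) and a vertex in level $L_4$ to blue (the endpoint $v$), while no vertex in any level $\ell \in [L_2, L_3] \subseteq [2, n-2]$ has its color fixed by $C_p$. This exhibits the quadruple $L_1 \leq L_2 \leq L_3 \leq L_4$ required by Definition~\ref{defn:useful_node} and thereby establishes that $p$ is useful. The only subtle step I anticipate is the structural observation identifying the falsified induction clause with the endpoint of the blue path; once that is in hand, the rest is a clean pigeonhole calculation on the shallow decision tree.
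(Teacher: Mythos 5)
Your proposal is correct and follows essentially the same argument as the paper's proof: identify $v$ (the vertex whose induction clause is output) as the endpoint of the blue path at level $n-1$, then pigeonhole over the at most $h$ levels whose colors $C_p$ fixes to find an unoccupied interval of levels of width at least $n/(2h)$. The only cosmetic difference is that you take $(L_1, L_4) = (1, n-1)$ directly, whereas the paper selects the blue-fixed levels tightly flanking the chosen gap; both choices satisfy Definition~\ref{defn:useful_node}.
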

\begin{proof}
    Let $v$ be the endpoint of the blue path induced by $\alpha$. Then, $p$ outputs one of the induction clauses $D_0(v)$ or $D_1(v)$ for $v \in V(G)$. Since $\cT$ is in the canonical form, the cube $C_p$ fixes the vertex $v$ and the root $r$ of $G$ to blue.
    Recall that the vertex $v$ is in level $n-1$. 
    Let $1 = \ell_1 < \cdots < \ell_d=n-1$ be the levels where $C_p$ fixes some vertices.
    Since $d \leq h$, there must exist an $i$ such that $\ell_{i+1} - \ell_i \geq \dfrac{n-1}{h+1} > \dfrac{n}{2h}. $.
    There is no fixed vertex on levels $\ell_i + 1,\dots,\ell_{i+1} - 1$. 
    
    We take largest the $\ell_1$ such that $\ell_1 \leq \ell_i$ and $C_p$ fixes a vertex on $\ell_1$ to blue.
    Similarly, we take the smallest $\ell_2$ such that $\ell_2 \geq \ell_{i + 1}$ and $C_p$ fixes a vertex on $\ell_2$ to blue.
    The cube $C_p$ satisfies the conditions of being a useful node in Definition \ref{defn:useful_node} by taking $(L_1, L_2, L_3, L_4)= (\ell_1, \ell_i+1, \ell_{i+1}-1, \ell_2)$.
\end{proof}

\begin{lemma}  \label{lemma:reaching-bad-node}
For any $\varepsilon > 0$ there exists $\gamma > 0$ such that 
    \[
    \Pr_{\alpha \sim \mu} [\text{The computation path of $\cT$ on } \alpha \text{ reaches a useful node}] \leq \varepsilon.
    \] 
\end{lemma}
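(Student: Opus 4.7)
The plan is to bound $\Pr_{\alpha \sim \mu}[\cT\text{ on }\alpha\text{ reaches a useful node}]$ by a quantity of order $\sqrt{h^3/n} = \gamma^{3/2}$, which can be made at most $\varepsilon$ by choosing $\gamma$ small. The argument reduces ``reaching a useful node'' to the existence of a single high-level blue-answered query, and then exploits the random-walk anti-concentration property of $\mu$ via Lemma~\ref{lemma:anti-concentration} and Lemma~\ref{lemma:forbidden}.

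By Definition~\ref{defn:useful_node}, a useful cube fixes two blue vertices at levels $L_1 < L_4$ with $L_4 - L_1 \geq n/(2h) + 2$, because $L_1 < L_2 \leq L_3 < L_4$ and $L_3 - L_2 \geq n/(2h)$. In the canonical form, $\cT$ always queries the root $(1,1)$ first, and this query is always blue under $\mu$. Hence reaching a useful node along the computation path forces a \emph{high blue} query (one that returns blue at a level $\geq n/(2h) + 3$) somewhere on the path. Writing $B_t$ for the blue-indicator and $I_t$ for the level of the $t$-th query, Markov's inequality gives
\[
\Pr_\mu[\text{useful reached}] \leq \sum_{t=2}^{h} \Pr_\mu\bigl[B_t = 1,\; I_t \geq n/(2h) + 3\bigr].
\]

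I focus on the first time $\tau^* \in \{2,\ldots,h\}$ at which the path contains a high blue. The path prefix of length $\tau^*-1$ contains at most $h$ queries whose blue answers (the ``low blues'') all lie at levels $\leq n/(2h) + 2$. Conditioning on the history $H_{\tau^*-1}$ of the first $\tau^*-1$ queries and their answers, the task reduces to bounding $\Pr_\mu[X_{I_{\tau^*}} = J_{\tau^*} \mid H_{\tau^*-1}]$, where $X$ denotes the random walk generating the blue path. In the benign case where all low blues lie at levels $\leq n/(4h)$, the nearest prior blue is at distance at least $n/(4h)$ from $I_{\tau^*}$, and Lemma~\ref{lemma:anti-concentration} applied to this walk segment---combined with Lemma~\ref{lemma:forbidden} to absorb the red-avoidance conditioning by a constant factor (its pigeonhole hypothesis is satisfied for $\gamma$ small, since the at most $h$ red levels leave a sub-interval of the walk segment of length $\Omega(n/h^2)$ free of red points, dominating $c(h-1)^2$)---yields the per-term bound $O(\sqrt{h/n})$. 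Summing over $\tau^*$ contributes $O(\sqrt{h^3/n}) = O(\gamma^{3/2})$ to the total probability.

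The main obstacle is the bad case in which some prior low blue lies in the intermediate range $(n/(4h), n/(2h) + 2]$, so the walk-segment distance may be as small as $1$ and the anti-concentration bound degrades. The plan to handle this is to observe that the existence of such an intermediate blue is itself rare: its probability is controlled by the same analysis applied with the threshold $n/(2h) + 3$ replaced by $n/(4h)$, since the walk segment from the root to any intermediate-level query still has length $\Omega(n/h)$, again giving an $O(\gamma^{3/2})$ bound. Combining the two contributions yields $\Pr_\mu[\text{useful reached}] \leq O(\gamma^{3/2})$, which is at most $\varepsilon$ for $\gamma$ chosen sufficiently small. I expect the hardest step to be justifying the conditional anti-concentration bound uniformly in histories with many negative (red-avoidance) constraints, which is precisely where Lemma~\ref{lemma:forbidden} becomes essential.
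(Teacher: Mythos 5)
Your proposal misses the central structural feature of the definition of a useful node that the paper's proof exploits, and the substitute you use (a pigeonhole gap inside the blue-free segment) is quantitatively too weak at the operating height $h=\gamma n^{1/3}$.

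Definition~\ref{defn:useful_node} does not merely require a blue query at a high level; it requires $n/(2h)$ \emph{consecutive levels with no fixed vertex at all} (neither blue nor red) sandwiched between two blue queries. This clean, unqueried interval is exactly the interval $I$ that Lemma~\ref{lemma:forbidden} needs, and its length $n/(2h)=\Theta(n^{2/3}/\gamma)$ dominates $ch^2=\Theta(\gamma^2 n^{2/3})$ precisely when $\gamma$ is small, independent of $n$. In your argument you never invoke this unqueried interval. Instead you work with the blue-free segment of length $\Theta(n/h)$ preceding the first ``high blue,'' observe that the at most $h$ red queries cut it into pieces, and take the largest such piece, of length $\Omega(n/h^2)$, as the red-free interval for Lemma~\ref{lemma:forbidden}. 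But $\Omega(n/h^2)=\Omega(n^{1/3}/\gamma^2)$ while $c(h-1)^2=\Theta(\gamma^2 n^{2/3})$, so your claimed inequality $\Omega(n/h^2)\ge c(h-1)^2$ reduces to $1/\gamma^4 \gtrsim n^{1/3}$, which fails for every fixed $\gamma>0$ once $n$ is large. The hypothesis of Lemma~\ref{lemma:forbidden} is therefore not satisfied in your ``benign'' case, and the $O(\sqrt{h/n})$ per-step bound you rely on is not justified.

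There is a second, related issue: your ``bad'' case (an intermediate blue in $(n/(4h), n/(2h)+2]$) is handled by recursing to the first such intermediate blue, but the segment from the last prior low blue (level $\le n/(4h)$) to an intermediate blue (level just above $n/(4h)$) can have length as small as $1$, so the recursion does not bottom out with a usable anti-concentration bound. The paper avoids both problems by conditioning on the step $k$ at which the path first becomes useful, noting that the newly-created useful cube must, by definition, already contain a length-$n/(2h)$ block of unqueried levels strictly between the lowest prior blue and the new blue query, and feeding that block directly to Lemma~\ref{lemma:forbidden} with $c_2=\varepsilon^{-1}$ to get a per-step bound of $\varepsilon/h$; a union bound over $k\in[h]$ then gives $\varepsilon$. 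The fix to your argument is to abandon the ``high blue'' reduction and instead track the first useful step, reading off the unqueried interval directly from Definition~\ref{defn:useful_node}.
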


\begin{proof}
    Let $\mathcal{T}(\alpha, k)$ denote the node of $\mathcal{T}$ reached by $\alpha$ after $k$ queries.
    For each $1 \leq k \leq h,$ we upper bound the probability that the computation path of $\alpha$ reaches a useful node for the first time at step $k$. 
    Then, we shall use union bound on $k$.
    Formally, we bound the probability as follows.
    \begin{align*}
        \Pr_{\alpha \sim \mu} & [\text{computation path of } \alpha \text{ reaches a useful node}] \\
        &= \Pr_{\alpha \sim \mu} \bigl[ \exists k \in [h]: \cT(\alpha,k) \text{ is useful and } \cT(\alpha,k-1) \text{ is not useful}\bigr] \\
        &\leq \sum^h_{k = 1} \Pr_{\alpha \sim \mu} \bigl[ \cT(\alpha,k) \text{ is useful and } \cT(\alpha,k-1) \text{ is not useful}\bigr] \\
        &\leq h \cdot \max_{k \in [h]} \Pr_{\alpha \sim \mu} \bigl[ \cT(\alpha,k) \text{ is useful} ~|~ \cT(\alpha,k-1) \text{ is not useful}\bigr]
    \end{align*}

    We bound the last probability for any $k \in [h]$.
    Let $p = \cT(\alpha,k-1)$.
    We assume the node $p$ is not useful.
    Let $(i,j) \in V(G)$ be the lowest vertex that is fixed by $C_p$ to blue. 
    Suppose that in the the next step $\cT$ queries a color of the vertex $(i', j')$. 
    If the next node has to be useful, the response to the query has to be blue. 
    Moreover, there have to be $n/2h$ consecutive layers $i < \ell',\dots,\ell' + \frac{n}{2h} - 1 < i'$ such that $C_p$ does not fix any vertex on those layers.
    The probability that the response to the query is blue is
    \[
    \Pr_{\alpha \sim \mu|C_p} \left[ \text{The blue path induced by  } \alpha \text{ visits }(i',j') \right].
    \] 
    
    Consider the random walk $X_1,\dots,X_{n-1}$ that determines the blue path $P$ induced by $\alpha \sim \mu$.
    Recall that the vertices of $P$ are $\{(1,X_1),\dots,(n-1,X_{n-1})\}$.
    The cube $C_p$ fixes colors of some vertices of $G$. Let $B$ and $R$ be the set of vertices whose colors are fixed by $C_p$ to blue and red respectively. 
    
    Conditioning on the cube $C_p$ restricts the random walk $X_1,\dots,X_{n-1}$ that it must visit the points in $B$ and must not visit the points in $R$.
    Formally, for any $(q,y) \in B$ it holds that $X_q = y$ and for any $(q',y') \in R$ it holds that $X_{q'} \neq y'$.
    We know there is at least one walk that avoids $R$ and visits $B$ (the walk corresponding to $P$).
    Moreover, we have $|R| \leq h$ and there is a large "gap" in $R$, i.e., for each $(i_1,j_1) \in R$ it holds that $i_1 < \ell'$ or $i_1 > \ell' + \frac{n}{2h} - 1$.
    Recall that we set $h = \gamma n^{1/3}$. Set $\gamma$ to a sufficiently small constant so that $\dfrac{n}{2h} \geq c h^2$, where $c$ is a sufficiently large constant for which we can apply Lemma \ref{lemma:forbidden} with $c_2= \varepsilon^{-1}$.
    By applying an appropriate time shift, we have by Lemma~\ref{lemma:forbidden} that
    \[
    \Pr_{\alpha \sim \mu|C_p} \bigl[  \text{The blue path induced by  }  \alpha \text{ visits }(i',j') \bigr] = \Pr_{\mu|C_p}[X_{i'} = j'] \leq \frac{1}{c_2 h}  \leq \frac{\varepsilon}{h}.
    \]
    Thus, we conclude that for all $k \in [h]$, 
    $$ \prob_{\mu \sim \alpha} \bigl[ \cT(\alpha,k) \text{ is useful } \mid \cT(\alpha,k-1) \text{ is not useful}\bigr] \leq \dfrac{\varepsilon}{h}. $$
    Therefore, the probability that the computation path ever reaches a useful node is at most $\varepsilon$.

\end{proof}

We end this subsection with by showing that the formula $\SP_{n,\rho}$ is average-case hard for decision trees.
\begin{lemma}   \label{lemma:DT-error}
    For any $\varepsilon > 0$, there exists $\gamma >0$ such that every deterministic decision tree of height at most $\gamma\cdot n^{1/3}$ for $\Search(\SP_{n,\rho})$ makes error with probability $\geq 1-\varepsilon$ w.r.t. the distribution $\mu$.
\end{lemma}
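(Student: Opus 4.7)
The plan is to combine Lemma~\ref{lemma:leaf-bad} and Lemma~\ref{lemma:reaching-bad-node} by passing through the canonical-form transformation. Given any deterministic decision tree $\cT$ for $\Search(\SP_{n,\rho})$ of height at most $\gamma \cdot n^{1/3}$, I will first convert it to its canonical form $\cT'$, which increases the height by at most two. Choosing $\gamma$ slightly smaller than the parameter supplied by Lemma~\ref{lemma:reaching-bad-node} for the target $\varepsilon$, this additive slack is absorbed, and $\cT'$ still falls within the regime where that lemma applies.

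Next, I will verify that $\cT$ and $\cT'$ have the same error probability under $\mu$. The key observation is that under $\mu$ only the induction clauses $D_0(v), D_1(v)$ at the blue-path endpoint $v$ can be falsified: the root clauses, sink clauses, and stone-placement clauses are all satisfied by the rigid placement that $\mu$ prescribes (every vertex receives its designated stone, the root is on the blue path, and every sink lies strictly below level $n-1$ and therefore off the blue path). Consequently, whenever $\cT$ outputs a non-induction clause, $\cT$ is already wrong, and the canonical substitution by \textit{error} loses nothing. Whenever $\cT$ outputs some $D_b(v)$, the extra query $R_{f^{-1}(v)}$ made by $\cT'$ either confirms that $v$ is blue (in which case $\cT'$ outputs the same clause $D_b(v)$ as $\cT$, so their correctness coincides) or detects that $v$ is red (in which case both trees are wrong, since an induction clause at $v$ can be falsified by $\alpha \in \Supp(\mu)$ only when $v$ is the blue-path endpoint, and in particular only when $v$ is blue). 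Thus the sets of inputs on which $\cT$ and $\cT'$ output a correct clause coincide.

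Finally, chaining the two ingredients: by Lemma~\ref{lemma:leaf-bad}, a correct output of $\cT'$ on $\alpha \in \Supp(\mu)$ can occur only at a useful leaf, and by Lemma~\ref{lemma:reaching-bad-node} the probability that $\cT'$ ever reaches a useful node is at most $\varepsilon$. Putting these together,
\[
\Pr_{\alpha \sim \mu}\bigl[\cT \text{ correct}\bigr] \;=\; \Pr_{\alpha \sim \mu}\bigl[\cT' \text{ correct}\bigr] \;\leq\; \Pr_{\alpha \sim \mu}\bigl[\cT' \text{ reaches a useful leaf}\bigr] \;\leq\; \varepsilon,
\]
so $\cT$ errs with probability at least $1 - \varepsilon$, as required. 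The heavy lifting has already been done in the two preceding lemmas, so no real obstacle remains; the one detail to check carefully is the correctness-equivalence between $\cT$ and $\cT'$, making sure that replacing certain outputs by \textit{error} never converts a correct answer into a wrong one — and this is precisely what the support structure of $\mu$ guarantees.
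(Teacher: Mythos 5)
Your proposal is correct and follows essentially the same approach as the paper: the paper's own proof of this lemma is exactly the two-line chaining of Lemma~\ref{lemma:leaf-bad} and Lemma~\ref{lemma:reaching-bad-node}, with the canonical-form reduction and the error-probability-preservation claim handled implicitly in the preamble that precedes Definition~\ref{defn:useful_node}. Your write-up merely makes that preamble explicit — verifying the height slack of $+2$ and checking that passing to $\cT'$ does not change the success probability under $\mu$ — which is careful bookkeeping rather than a new route.
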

\begin{proof}
    If the decision tree answers correctly, by Lemma~\ref{lemma:leaf-bad} it must reach a useful node at some point. 
    By Lemma~\ref{lemma:reaching-bad-node}, the probability of this ever happening is at most $\varepsilon$ if $\gamma$ is small enough.
\end{proof}

\subsection{Lifting the Average-Case Hardness to Parity Decision Trees}

We lift the distribution $\mu$ to a distribution $\mu'$ of variables of $\SP_{n,\rho} \circ \IP$ as follows:
\begin{enumerate}
    \item Sample an assignment $\alpha$ according to $\mu$.
    \item Sample a uniformly random assignment from $\overrightarrow{\IP}^{-1}(\alpha)$.
\end{enumerate}
We remark that an assignment $\beta$ sampled by $\mu'$ falsifies exactly one clause of $\SP_{n,\rho} \circ \IP$, in particular one clause that arises by a lifting clause $C$ of $\SP_{n,\rho}$ where $C$ is the unique clause falsified by the assignment $\overrightarrow{\IP}(\beta)$.

In this section, we prove $\Search(\SP_{n,\rho} \circ \IP)$ is average-case hard for parity decision trees of small height under the lifted distribution. To do so, we shall use a result of Chattopadhyay et al.~\cite{CFKMP21}, that built upon the earlier work of G{\"o}{\"o}s, Pitassi and Watson \cite{GPW20}.

We will need to consider randomized decision trees that output Boolean strings in $\{0,1\}^t$, rather than $0/1$. For a given deterministic 2-party communication protocol $\Pi$, let $\Pi(x,y)$ denote the transcript generated by $\Pi$ on input $(x,y)$.

\begin{theorem}[Implicit in \cite{CFKMP21}]  \label{thm:CFKMP-simulation}
Assume $b \ge 50 \log(m)$. Let $\Pi$ be any deterministic 2-party communication protocol of cost $c$, where Alice and Bob each get inputs from $\{0,1\}^{mb}$. For any $z \in \{0,1\}^m$, let $(X_z,Y_z)$ denote the distribution on pairs obtained by sampling from $\overrightarrow{\IP}^{-1}(z)$ uniformly at random.  Then, there exists a randomized decision tree $\cT$ of cost $O(c/b)$ such that the following holds for every $z \in \{0,1\}^m$:
$$\TV \big(\cT(z),\Pi(X_z,Y_z) \big)\leq 1/10.$$
\end{theorem}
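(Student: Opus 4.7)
The plan is to build a randomized decision tree $\cT(z)$ that simulates the communication protocol $\Pi$ by maintaining a rectangle $X \times Y \subseteq \{0,1\}^{mb} \times \{0,1\}^{mb}$, refining it one protocol bit at a time, and querying coordinates of $z$ only when the rectangle would otherwise lose the ``density'' required for $\overrightarrow{\IP}$ to look uniform on it. This follows the standard simulation paradigm of Raz--McKenzie and G\"o\"os--Pitassi--Watson specialized to the inner-product gadget, where the Lindsey/discrepancy bound of $\IP$ drives the approximation.

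First I would fix a density threshold $\tau = \Theta(\log m + \log c)$ (compatible with $b \geq 50 \log m$) and call a block $i \in [m]$ \emph{$\tau$-dense} in the current rectangle if both $|X^i| \geq 2^{b/2 + \tau}$ and $|Y^i| \geq 2^{b/2 + \tau}$, where $X^i$ denotes the projection of $X$ onto block $i$. The simulator maintains the invariant that every unqueried block is $\tau$-dense. At each step of $\Pi$, it samples the next transcript bit from the conditional distribution induced by the uniform measure on $X \times Y$ restricted to be consistent with already-revealed coordinates of $z$, then refines $X \times Y$ accordingly. If this refinement causes some block $i \notin Q$ to violate the density invariant, the simulator queries $z_i$, adds $i$ to $Q$, and restores the invariant by ``freezing'' block $i$: it samples $(x^i, y^i)$ from its current conditional distribution subject to $\IP(x^i, y^i) = z_i$ and passes to the corresponding slice of $X \times Y$.

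Next I would bound the number of queries by a potential argument on $\Phi := \sum_{i \notin Q} \bigl[(b - \log|X^i|) + (b - \log|Y^i|)\bigr]$. Each protocol bit decreases $\log|X| + \log|Y|$ by at most $1$ and hence raises $\Phi$ by at most $1$; freezing a block removes its contribution from $\Phi$ entirely. A block triggers a query only when its contribution exceeds $b - 2\tau$, so each query ``consumes'' $\Omega(b)$ units of deficiency budget, giving a total of $O(c/b)$ queries as required.

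Finally, the TV bound comes from the discrepancy of $\IP$. On any $\tau$-dense rectangle, Lindsey's lemma applied to the $b/2$-bit inner product gives $\bigl|\Pr_{(x,y) \sim X \times Y}[\IP(x^i, y^i) = 1] - 1/2\bigr| \leq 2^{-\tau}$ for every unqueried block $i$; a hybrid argument across blocks shows that the projection $\overrightarrow{\IP}(x,y)_{\overline{Q}}$ of the uniform distribution on $X \times Y$ is $m \cdot 2^{-\tau}$-close to uniform on $\{0,1\}^{m-|Q|}$. Consequently, at each step the conditional distribution the simulator samples from (uniform over $X \times Y$ ignoring the unqueried $z_i$'s) differs from the true conditional (uniform over pairs with $\overrightarrow{\IP}(x,y) = z$) by at most $m \cdot 2^{-\tau}$ in TV. A chain-rule/coupling argument across all $c$ protocol steps and freezings accumulates total TV error at most $c m \cdot 2^{-\tau}$, which falls below $1/10$ once $\tau$ is chosen as $O(\log(cm))$; the hypothesis $b \geq 50 \log m$ is what leaves enough room for $\tau$ inside the density threshold $b/2 + \tau$. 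The main technical obstacle will be this last step: setting up the coupling so that the per-step approximation errors combine additively rather than multiplicatively, and verifying that the ``freeze'' operation introduces no extra slack, since each frozen block commits to sampled values of $(x^i, y^i)$ that must remain consistent with all future protocol-bit updates.
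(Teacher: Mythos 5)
The paper does not actually prove Theorem~\ref{thm:CFKMP-simulation}; it cites it as implicit in~\cite{CFKMP21}. So the only thing to assess is whether your sketch is sound, and in its current form it is not: the high-level paradigm (maintain a rectangle, freeze/query when density drops, use discrepancy to control TV) is exactly right, but two technical steps do not work as written.

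\textbf{The density invariant is too weak.} You track only singleton-block projections: block $i$ is ``$\tau$-dense'' if $|X^i|,|Y^i| \geq 2^{b/2+\tau}$. This does not control the joint distribution of $\big(\IP(x^1,y^1),\dots,\IP(x^m,y^m)\big)$ over $X\times Y$. The rectangle can have every block projection large while the bits across different blocks are rigidly correlated (e.g.\ take $X$ to force $x^1=x^2$ and $Y$ to force $y^1=y^2$: then $\IP(x^1,y^1)=\IP(x^2,y^2)$ identically, yet all single-block projections are full). Your hybrid argument tacitly requires Lindsey's bound to hold for the conditional distribution of $\IP(x^i,y^i)$ given the values on blocks $1,\dots,i-1$, but conditioning can collapse a projection to a single point even when the unconditional projection is large. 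The invariant actually maintained in~\cite{CFKMP21} (and in the GPW framework it extends) is a \emph{blockwise min-entropy} condition over all subsets $I$ of unqueried blocks, typically $|X_{\downarrow I}|\geq 2^{(1-\delta)b|I|}$ for every such $I$, together with a density-restoration step that queries and fixes entire blocks to re-establish this after the protocol speaks. Only with that multi-block invariant does the tensorized discrepancy bound give $\TV$-closeness to uniform of $\overrightarrow{\IP}$ on $X\times Y$.

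\textbf{The potential argument does not bound the query count.} You claim each protocol bit lowers $\log|X|+\log|Y|$ by at most $1$ and hence raises $\Phi = \sum_{i\notin Q}\bigl[(b-\log|X^i|)+(b-\log|Y^i|)\bigr]$ by at most $1$. Neither implication is valid. First, your simulator \emph{samples} the transcript bit, so the minority side may be chosen and $\log|X|$ can drop by far more than $1$ in a single step; the ``at most one bit per round'' bound holds only if you always advance the majority child, which is the deterministic-simulation variant and needs a different error analysis. Second, even a unit drop in $\log|X|$ can increase $\Phi$ by more than one: with two blocks and $X=\{x: x_1^1=x_1^2\}$, $X'=\{x: x_1^1=x_1^2=0\}$, we have $|X'|=|X|/2$ but both $|X^1|$ and $|X^2|$ halve, so $\Phi$ jumps by $2$. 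The correct argument is either a charge to \emph{deficiency} $mb-\log|X|$ (which indeed drops by at most one per majority step and is depleted by $\Omega(b)$ per density-restoration query), or an expectation/Markov argument when sampling, as is done in~\cite{CFKMP21}. Replacing the per-block potential with the blockwise min-entropy deficiency and replacing ``exactly one bit'' with ``in expectation at most one bit'' (plus a truncation/Markov step to convert back to a cost bound) is precisely the machinery your sketch is missing.
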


The above theorem says that a randomized decision tree is able to simulate by probing only a few bits of its input $z$, the transcript of a deterministic communication protocol when it is given a random input pair $X_z,Y_z$. 
Its relevance for us is due to the following simple observation.
\begin{observation}  \label{rem:pdt-to-protocol}
Every deterministic parity decision tree of height $h$ can be simulated exactly by a deterministic 2-party communication protocol of cost at most $2h$.
\end{observation}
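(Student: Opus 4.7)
The plan is a direct simulation that exploits the $\mathbb{F}_2$-linearity of PDT queries. Fix any partition of the $mb$ input coordinates between Alice (holding $x$) and Bob (holding $y$), so that the PDT's input decomposes as $z = (x, y)$. The key observation is that any $\mathbb{F}_2$-linear function $\ell \colon \{0,1\}^{mb} \to \{0,1\}$ splits as $\ell(z) = \ell_A(x) \oplus \ell_B(y)$, where $\ell_A$ and $\ell_B$ are the restrictions of $\ell$ to Alice's and Bob's coordinates respectively. Consequently a single linear query can be resolved by exchanging exactly two bits: Alice sends $\ell_A(x)$ to Bob, Bob sends $\ell_B(y)$ back to Alice, and both parties then know $\ell(z)$.

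Using this, I would build a deterministic protocol $\Pi$ that maintains, as a shared invariant, the current node of $T$ along the root-to-leaf computation path determined by $z$, with both parties starting at the root. At each internal node the next query is determined by that node (both parties know the node because the tree is fixed and the prior transcript has encoded the path taken so far), so they locally compute their respective shares of the query, exchange the two bits as above, and jointly advance to the appropriate child. After at most $h$ rounds a leaf is reached and both parties output its label. The transcript is a deterministic function of $(x,y)$ that encodes the entire path in $T$, so $\Pi$ matches the output of $T$ exactly while using at most $2h$ bits of communication. There is no real obstacle beyond noting that the decomposition trick genuinely relies on $\mathbb{F}_2$-linearity of the queries and would not go through for more expressive query models.
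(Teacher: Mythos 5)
Your proof is correct and is precisely the standard argument the paper has in mind (it states this as an observation without proof, treating it as folklore). The split $\ell(z) = \ell_A(x) \oplus \ell_B(y)$ over $\mathbb{F}_2$, with Alice and Bob each broadcasting their one-bit share to resolve each of the $h$ queries, gives exactly the claimed cost of $2h$ and exactly reproduces the PDT's computation path.
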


Now we can lift our avarage-case hardness to parity decision trees.
\begin{lemma}  \label{lemma:PDT-average}
    There exists a constant $c > 0$ such that for every obfuscation map $\rho$ and every parity decision tree $\mathcal{T}$ of height at most $c\cdot n^{1/3}\log n$ purporting to solve $\Search\big(\SP_{n,\rho}\circ \IP\big)$, the following is true:
         $$ \Pr_{\beta \sim \mu'}\bigg[\mathcal{T}(\beta) \text{is falsified on } \beta\bigg] \leq \frac{2}{5}.$$
\end{lemma}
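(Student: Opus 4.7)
The strategy is to derive a contradiction with Lemma~\ref{lemma:DT-error} by simulating the PDT on the lifted distribution by a small-cost randomized decision tree on the base distribution, invoking Theorem~\ref{thm:CFKMP-simulation}. Suppose, toward contradiction, that $\mathcal{T}$ is a parity decision tree of height $h = c\cdot n^{1/3}\log n$ that succeeds on $\mu'$ with probability strictly more than $2/5$, i.e., outputs a clause of $\SP_{n,\rho} \circ \IP$ that is falsified by $\beta$ with probability $> 2/5$ when $\beta \sim \mu'$.

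First, split the $mb$ input coordinates of $\SP_{n,\rho}\circ \IP$ into Alice's and Bob's halves as dictated by the structure of $\IP$ (each $\IP$-block contributes $b/2$ variables to each party). By Observation~\ref{rem:pdt-to-protocol}, $\mathcal{T}$ can be simulated by a deterministic two-party communication protocol $\Pi$ of cost at most $2h$. Choosing the constant hidden in $b = \Theta(\log n)$ large enough guarantees $b \geq 50\log m$, since $\log m = \Theta(\log n)$. Therefore, Theorem~\ref{thm:CFKMP-simulation} yields a randomized decision tree $\mathcal{T}'$ of cost $O(h/b) = O(c\cdot n^{1/3})$ such that for every $\alpha \in \{0,1\}^m$, the output distribution of $\mathcal{T}'(\alpha)$ is within total variation distance $1/10$ of $\Pi(X_\alpha,Y_\alpha)$, where $(X_\alpha,Y_\alpha)$ is a uniformly random sample from $\overrightarrow{\IP}^{-1}(\alpha)$.

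Now compose $\mathcal{T}'$ with the deterministic projection $\pi$ which sends every lifted clause of $\SP_{n,\rho}\circ\IP$ to the unique base clause of $\SP_{n,\rho}$ whose lift contains it, obtaining a randomized decision tree $\mathcal{T}'_\pi$ of the same cost whose output is a clause of $\SP_{n,\rho}$. By Observation~\ref{obs:LiftedClause}, whenever $\Pi(X_\alpha,Y_\alpha)$ outputs a lifted clause falsified by $(X_\alpha,Y_\alpha)$, the projection $\pi(\Pi(X_\alpha,Y_\alpha))$ is a base clause falsified by $\overrightarrow{\IP}(X_\alpha,Y_\alpha) = \alpha$. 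Since the process $\alpha \sim \mu$ followed by uniform sampling from $\overrightarrow{\IP}^{-1}(\alpha)$ is exactly $\mu'$, and since the TV bound from Theorem~\ref{thm:CFKMP-simulation} holds pointwise in $\alpha$, averaging over $\alpha \sim \mu$ gives
\[
\Pr_{\alpha \sim \mu}\bigl[\mathcal{T}'_\pi(\alpha)\text{ is falsified by }\alpha\bigr] \;>\; \frac{2}{5} - \frac{1}{10} \;=\; \frac{3}{10}.
\]

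Finally, by averaging over the internal randomness of $\mathcal{T}'_\pi$, fix a setting of the random bits that preserves (at least) this success probability. This yields a deterministic decision tree of cost $O(c\cdot n^{1/3})$ solving $\Search(\SP_{n,\rho})$ on $\mu$ with success probability exceeding $3/10$. Apply Lemma~\ref{lemma:DT-error} with $\varepsilon = 3/10$ to obtain a constant $\gamma_0 > 0$ such that every decision tree of height at most $\gamma_0\cdot n^{1/3}$ succeeds with probability at most $3/10$ on $\mu$. Picking the constant $c$ in the statement of the lemma smaller than $\gamma_0 / K$, where $K$ is the absolute constant hidden in the $O(h/b)$ bound of Theorem~\ref{thm:CFKMP-simulation}, forces the simulated decision tree's height to lie below $\gamma_0 n^{1/3}$, contradicting the lower bound on its success probability. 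The only real subtlety is bookkeeping the constants so that the PDT height, divided by $b$ via the CFKMP simulation, lands inside the regime where Lemma~\ref{lemma:DT-error} applies; the remaining steps are direct applications of the cited tools.
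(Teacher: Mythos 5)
Your proposal is correct and follows exactly the same route as the paper's own proof: convert the PDT to a two-party protocol via Observation~\ref{rem:pdt-to-protocol}, invoke Theorem~\ref{thm:CFKMP-simulation} to obtain a randomized decision tree of cost $O(h/b) = O(n^{1/3})$ within $1/10$ total variation distance, project the output to the base clause, fix coins, and contradict Lemma~\ref{lemma:DT-error}. Your version is, if anything, slightly more explicit about the $b \geq 50\log m$ hypothesis and about the one-directional use of Observation~\ref{obs:LiftedClause} (a lifted clause falsified by $\beta$ projects to a base clause falsified by $\alpha$), but there is no substantive difference from the argument in the paper.
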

\begin{proof}
    Assume $\mathcal{T}$ makes an error with probability $<3/5$. Then our main idea is that we would be able to construct an ordinary decision tree for $\Search(\SP_{n,\rho})$ of depth $ O(n^{1/3})$ which makes error with probability $<7/10$ under distribution $\mu$. This contradicts Lemma~\ref{lemma:DT-error}.
    
    Using Observation~\ref{rem:pdt-to-protocol}, we get a deterministic 2-party protocol of cost at most $2\cdot\text{depth}(\mathcal{T})$ that makes error less than $3/5$ for solving $\Search\big(\SP_{n,\rho}\circ \IP\big)$.
    Theorem~\ref{thm:CFKMP-simulation} then yields a randomized decision tree $\cT'$ with the following properties. On input $\alpha \sim \mu$, 
    \begin{itemize}
        \item $\cT'$ makes at most $O \left(\text{cost}(\Pi)/\log n \right)$ queries to $\alpha$, i.e., at most $O(n^{1/3})$.
        \item If $\mathcal{D}_1$ denotes the actual distribution of the transcript of $\Pi$ when it is run on input sampled uniformly at random from $\overrightarrow{\text{IP}}^{-1}(\alpha)$ and $\mathcal{D}_2$ denotes the distribution of the transcript of $\Pi$ simulated by $\cT'$, 
        $$ || \mathcal{D}_1- \mathcal{D}_2 || \leq \dfrac{1}{10}$$
    \end{itemize}
    We now modify $\cT'$ to output a clause as follows. A transcript of $\Pi$ leads it to output a clause of $\SP_{n,\rho}\circ \IP$. 
    The modified $\cT'$ outputs the unique corresponding un-lifted clause of $\SP_{n,\rho}$. The probability of error is at most $ \text{Pr} [\Pi \text{ errs }] + 1/10 < 7/10$. This gives a randomized decision tree; by fixing the coins we can replace it by a deterministic decision tree, contradicting Lemma~\ref{lemma:DT-error}.
\end{proof}

\subsection{Foolable Nodes Are Frequent}   \label{sec:foolable-nodes-frequent}
Let $\cP$ be a bottom-read-once linear branching program for $\Search\big(\SP_{n,\rho} \circ \IP\big)$ that corresponds to a bottom-regular ResLin proof of the unsatisfiability of $\SP_{n,\rho} \circ \IP$. 
Our goal is to show size of $\cP$ is large. 
To do so, we establish that the affine spaces associated with many nodes of $\mathcal{P}$ have a certain property that allows to fool them. 
In particular, let $\beta$ be an assignment sampled by $\mu'$.
We will prove that with high probability after making $t = O(n^{1/3})$ linear queries, according to $\beta$, we will end in a node $v$ of $\cP$ such that the associated affine space $A_v$ does not have much information about $\beta$.
We will show that this implies the affine space $A_v$ contains many useful assignments that allows us to prove the co-dimension $A_v$ is large.
Now, we define the sought property formally.

\begin{definition}  \label{defn:foolable-assignment}
Let $\alpha \in \Supp(\mu)$ and $P$ be the blue path induced by $\alpha$ that ends at $v$.
Let $u$ and $w$ be the two children of $v$. We say a subset $T \subseteq [m]$ is \emph{$\alpha$-foolable} if $T$ does not contain any variable mentioning $v,u$ or $w$, i.e. the variables $P_{x,i}, Z_{x,j}$ for $x \in \{u,v,w\}, i \in [N], j \in [N-1]$. 
\end{definition}

\begin{definition}  \label{defn:foolable-space}
    An affine space $A \subseteq \F^{mb}_2$ is \emph{$\alpha$-foolable} if $\Cl(A)$ is $\alpha$-foolable with $\alpha \in \Supp(\mu)$ and there exists $\beta \in A$ such that $\alpha = \overrightarrow{\IP}(\beta)$.
\end{definition}

We call a node $v$ of $\cP$ \emph{$\alpha$-foolable} if the associated affine space $A_v$ is $\alpha$-foolable. 
Recall that $\cP(\beta,t)$ is the node that $\cP$ arrives at after making $t$ linear queries on $\beta$. 
It turns out the node $\cP(\beta, t)$ is $\alpha$-foolable with high probability if $t$ is sufficiently small. We prove this in the following theorem.

\begin{theorem}  \label{thm:node-foolability}
     Let $\cP$ be any bottom-read-once linear branching program corresponding to a bottom-regular ResLin proof of $\SP_{n,\rho} \circ \IP$. 
     There exists a constant $c > 0$ such that if $t < c \cdot n^{1/3}$, then 
     $$ \Pr_{ \alpha \sim \mu, \beta \sim \overrightarrow{\IP}^{-1}(\alpha)}\bigg[ \mathcal{P}(\beta,t) \text{ is $\alpha$-foolable}\bigg] > \frac{3}{5}.$$
\end{theorem}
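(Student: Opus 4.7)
The plan is to prove this by contradiction, using Lemma~\ref{lemma:PDT-average}. Assume the theorem fails: there exist $\mathcal{P}$ and $t \le c\cdot n^{1/3}$ with $\Pr_{\mu'}[\mathcal{P}(\beta,t)\text{ is $\alpha$-foolable}] \le 3/5$; equivalently, non-foolability occurs with probability at least $2/5$. I will build a parity decision tree $\mathcal{T}'$ for $\Search(\SP_{n,\rho}\circ\IP)$ of depth $O(tb) = O(c\cdot n^{1/3}\log n)$ whose success probability under $\mu'$ exceeds the $2/5$ threshold from Lemma~\ref{lemma:PDT-average}, contradicting that lemma once $c$ is chosen small enough.

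The tree $\mathcal{T}'$ first simulates the initial $t$ linear queries of $\mathcal{P}$ on input $\beta$, which uses $t$ parity queries and arrives at a node $v^{*}$ whose affine space $A$ has row-space dimension at most $t$; by Lemma~\ref{lem:Closure}, $|\Cl(A)| \le t$, and $\Cl(A)$ is determined by $v^{*}$ alone, so it can be computed without any input queries. Each block of $\Cl(A)$ corresponds to a base variable of type $P_{x,i}$, $Z_{x,j}$, or $R_{i}$; the first two contribute the vertex $x$, and the third contributes $f(i)$. Let $V^{*}$ collect these vertices, and define $V^{**}$ to be the union of $V^{*} \cap \{\text{level }n-1\}$ with the parents (at level $n-1$) of vertices of $V^{*} \cap \{\text{sinks}\}$; then $|V^{**}| \le 3t$. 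For each $v' \in V^{**}$, $\mathcal{T}'$ spends $b$ parity queries to read all bits of the block corresponding to $R_{f^{-1}(v')}$ and applies $\IP$ to determine the color of $v'$ under $\alpha$. If exactly one candidate is blue, the tree spends $O(b)$ further queries on the remaining blocks appearing in the induction clauses $D_{0}(v')$ and $D_{1}(v')$ to pin down the unique lifted clause that $\beta$ falsifies, and outputs it; otherwise it outputs a fixed default clause.

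Correctness hinges on the following observation: whenever $\mathcal{P}(\beta,t)$ is not $\alpha$-foolable, Definitions~\ref{defn:foolable-assignment} and~\ref{defn:foolable-space} force $\Cl(A)$ to contain a base variable naming one of $v, u, w$, where $v$ is the blue-path endpoint of $\alpha$ and $u, w$ are its children; hence $v \in V^{**}$. By the construction of $\mu$, $v$ is the \emph{unique} blue vertex at level $n-1$, so the color-checking phase of $\mathcal{T}'$ isolates $v$ unambiguously and the final phase identifies the specific falsified lifted clause. Thus the success probability of $\mathcal{T}'$ is at least the non-foolability probability, which is at least $2/5$. The total depth is $t + 3tb + O(b) = O(tb) = O(c \cdot n^{1/3} \log n)$, which falls below the threshold of Lemma~\ref{lemma:PDT-average} for $c$ small enough. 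The principal technical obstacle, beyond bookkeeping, is confirming that the translation from blocks of $\Cl(A)$ through base variables into vertices really does place $v$ in $V^{**}$ under the non-foolability hypothesis, together with absorbing the small arithmetic slack between ``$\ge 2/5$'' and the strict inequality needed, which we arrange by tightening the assumed foolability threshold to $3/5 - \eta$ for a small constant $\eta > 0$ and correspondingly tightening the constant in Lemma~\ref{lemma:PDT-average}.
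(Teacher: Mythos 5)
Your proposal is correct and follows essentially the same strategy as the paper: simulate the first $t$ linear queries of $\mathcal{P}$ with a parity decision tree, use Lemma~\ref{lem:Closure} (and the monotonicity Lemma~\ref{lem:MonotoneClosure}) to bound $|\Cl(A)|$, probe the $R$-blocks of the $O(t)$ relevant vertices to detect the blue-path endpoint whenever $\mathcal{P}(\beta,t)$ fails to be $\alpha$-foolable, and then invoke Lemma~\ref{lemma:PDT-average} to get the probability bound. The only cosmetic difference is that you collect a flat candidate set $V^{**}$ of level-$(n-1)$ vertices and parents of sinks, whereas the paper probes a local $7$-vertex window around each $\Cl(A)$-vertex; both are $O(tb)$ queries and the correctness argument (non-foolability forces a variable of $v,u,w$ into $\Cl(A)$, hence the endpoint is among the probed vertices) is identical.
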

\pavel{I'd call this lemma}
\begin{proof}
    Let $v$ denote the random node $\mathcal{P}(\beta,t)$. Let the blue path induced by $\alpha$ end at $a$ and let the children of $a$ be $b$ and $c$.  Notice that the second condition of being $\alpha$-foolable (that $A_v$ contains an element of $\overrightarrow{\IP}^{-1} (\alpha)$) is always satisfied by $\cP(\beta, t)$ since one such element is $\beta$. We just need to lower bound the probability of the first condition of $\alpha$-foolability being satisfied, i.e., the probability that $\Cl(A_v)$ does not contain any variable mentioning $a,b$ or $c$. \newline
    
    We construct a PDT $\mathcal{T}$ for $\Search(\SP_{n} \circ \IP)$ from $\mathcal{P}$ in the following manner: on input $\beta$, it will simulate the path traced out in $\mathcal{P}$ for $t$ steps by making precisely those linear queries that would have been issued in $\mathcal{P}$. At the end of it, $\mathcal{T}$ does the following: Let $A$ be the affine space corresponding to the queries issued and responses received so far. For every vertex $k= (i,j)$  in the pyramid graph $G_n$ such that one of its variables ($P_{k,\ell}$ or $Z_{k,\ell}$ for some $\ell \in [N]$) is in $\Cl(A)$, query the $b$ coordinates from the blocks of the following set of variables: 
    \[
    S = \left\{R_{f^{-1} (w) } | w \in \{ (i-1, j-1), (i-1, j), (i, j-1), (i, j), (i, j+1), (i+1, j), (i+1, j+1) \} \cap V(G) \right\}
    \]
    If one of the induction clauses mentioning only stones in $S$ is falsified (recall that the placement of stones to vertices is the same for all assignments in $\mu$), output the corresponding clause. Otherwise, output an error symbol. 

    Clearly, the depth of $\mathcal{T}$ is $O(n^{1/3}\log n)$. Thus, by Lemma~\ref{lemma:PDT-average}, the probability that it outputs a falsified clause is at most $2/5$. Let $A_v$ denote the affine space at $\mathcal{P}(\beta, t)$. Note that $A \subseteq A_v \implies \Cl (A_v) \subseteq \Cl (A) $, by Lemma \ref{lem:MonotoneClosure}. It is straight-forward to verify that if $A_v$ is not $\alpha$-foolable, then $\mathcal{T}$ successfully outputs a clause falsified by $\beta$: if one of the variables belonging to $a,b$ or $c$ is in $\Cl(A)$, in the final step the PDT queries the stones placed on $a,b,c$ and detects that an induction clause at $a$ is falsified. The result now follows from Lemma \ref{lemma:PDT-average}.
\end{proof}

\subsection{Foolability Implies Large Rank}  \label{sec:foolability-to-rank}
In this subsection, we prove there is an obfuscation map $\rho: [N]^3 \to \V$ such that for a foolable node $v$ of a bottom-read-once linear branching program $\cP$ computing $\Search(\SP_{n,\rho} \circ \IP)$, the associated affine space $A_v$ must have large co-dimension.

First, we prove an auxiliary lemma.
Let $T \subseteq [m]$ be a subset of the variables of $\SP_{n, \rho} $. We say a stone $j$ is \emph{marked} by $T$ if $T$ contains a variable that mentions the stone $j$, i.e. $R_j, P_{v,j}$ for any vertex $v \in V$, $P_{f(j),k}$ for any stone $k \in [N]$, or $Z_{f(j),\ell}$ for any $\ell \in [N-1]$.
Let $Q(T) \subseteq [N]$ be the set of stones marked by $T$.

\begin{lemma}
\label{lem:Fooling}
    Let $\rho$ be any obfuscation map and $\alpha \in \Supp(\mu)$. Let $v$ be the vertex at which the blue path induced by $\alpha$ ends. Let $T \subseteq [m]$ be an $\alpha$-foolable subset with $|Q(T)| < N/2$. 
    For any $i,j,k \in [N] \setminus Q(T)$, there exists an assignment $\gamma \in \{0,1\}^m$ extending the restriction $\alpha|_T$ which satisfies all clauses of $\SP_{n,\rho} $ except one of the following two: 
    \begin{align*}
    C_1 &:= \neg P_{u, i} \lor \neg R_{i} \lor \neg P_{w, j} \lor \neg R_{j} \lor \neg P_{v,k} \lor R_k \lor \rho(i, j, k), \text{or} \\
    C_2 &:= \neg P_{u, i} \lor \neg R_{i} \lor \neg P_{w, j} \lor \neg R_{j} \lor \neg P_{v,k} \lor R_k \lor \neg \rho(i, j, k),
    \end{align*}
    where $u$ and $w$ are the out-neighbors of $v$.
\end{lemma}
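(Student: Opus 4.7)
The plan is to construct $\gamma$ by starting from $\alpha$ and modifying it surgically on variables that lie outside $T$. The hypotheses ($\alpha$-foolability of $T$ together with $\{i,j,k\} \cap Q(T) = \emptyset$) together imply that $T$ avoids the entire ``active region'' we need to touch: every $P_{x,\ell}$ and $Z_{x,\ell}$ with $x \in \{u,v,w,f(i),f(j),f(k)\}$ or $\ell \in \{i,j,k\}$, along with the colour variables $R_i, R_j, R_k$. So we have full freedom to reassign stones in and out of the vertices $\{u,v,w,f(i),f(j),f(k)\}$ and to recolour the stones $i, j, k$.

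I take $\gamma$ to be defined by a new bijective placement $\sigma \colon [N] \to V$ that agrees with $f$ outside a small support, satisfies $\sigma(i) = u$, $\sigma(j) = w$, $\sigma(k) = v$, and is obtained from $f$ by a permutation $\pi$ of $[N]$ acting on $\{i, j, k, f^{-1}(u), f^{-1}(w), f^{-1}(v)\}$ together with at most three helper stones described below. I then set $P_{x, \ell} = 1$ iff $x = \sigma(\ell)$; I colour each stone \emph{canonically}, meaning $R_\ell = 0$ iff $\sigma(\ell)$ lies on the blue path induced by $\alpha$; and I update $Z_{x,\cdot}$ to encode the unique stone now placed on $x$. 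Under this canonical colouring the vertex-by-vertex colouring of the pyramid is identical to that of $\alpha$ (blue on the blue path, red elsewhere), so every root clause, sink clause, stone-placement clause, and non-target induction clause is satisfied by the very same combinatorial argument that works for $\alpha$. At $v$ with stones $(i, j, k)$ the pattern $P_{u,i} = R_i = P_{w,j} = R_j = P_{v,k} = 1,\ R_k = 0$ holds, so exactly one of $C_1, C_2$ (determined by $\rho(i,j,k)$) is falsified, while all other induction clauses at $v$ are trivially satisfied because $\sigma$ is a bijection and thus each vertex carries exactly one stone.

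The main obstacle is ensuring $T$-consistency on the colour variables $R_\ell$ for $\ell \in Q(T)$: a conflict arises precisely when $\sigma(\ell)$ and $f(\ell)$ lie on opposite sides of the blue path, since the canonical colour under $\sigma$ then disagrees with the value forced by $T$. The only candidates for such $\ell$ are the three displaced stones $f^{-1}(u), f^{-1}(w), f^{-1}(v)$. For each one that produces a conflict under the naive transposition with its partner in $\{i,j,k\}$, I replace the transposition by a $3$-cycle $(\ell,\ \text{partner},\ t)$ through a helper stone $t \notin Q(T) \cup \{i, j, k, f^{-1}(u), f^{-1}(w), f^{-1}(v)\}$ chosen so that $f(t)$ lies on the required side of the blue path. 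Because $|Q(T)| < N/2$ and the blue path uses only $n - 1$ of the $N = n(n+1)/2$ vertices, both sides of the blue path contain enough unmarked stones to supply up to three distinct helpers (for $n$ sufficiently large, as is the regime of the lower-bound argument). The helper $t$ lies outside $Q(T)$, so its own colour can be flipped freely to match its new destination under canonical colouring, and every $P$ and $Z$ change introduced by the $3$-cycles remains in the active region and therefore outside $T$. A short case analysis covers the easier degenerate situations in which some of $\{i, j, k, f^{-1}(u), f^{-1}(w), f^{-1}(v)\}$ or some of $\{u, v, w, f(i), f(j), f(k)\}$ coincide.
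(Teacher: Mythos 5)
Your construction is genuinely different from the paper's, and the difference is exactly where the gap lies: you insist that $\gamma$ arise from a \emph{bijective} placement $\sigma\colon [N]\to V$, whereas the paper deliberately uses a non-bijective placement. The paper even records this as a key observation: no clause of $\SP_{n,\rho}$ forces the placement map to be injective or surjective, so one may pick two fresh stones $\ell_1,\ell_2\notin Q(T)\cup f^{-1}(S)\cup\{i,j,k\}$ and put $\ell_1$ on \emph{every} ``free'' blue-path vertex and $\ell_2$ on \emph{every} ``free'' off-path vertex, while the displaced stones $f^{-1}(u),f^{-1}(v),f^{-1}(w)$ are simply placed nowhere, letting their colour variables be set freely to agree with $\alpha$ on $T$.

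The step in your argument that fails is the claim that ``both sides of the blue path contain enough unmarked stones to supply up to three distinct helpers.'' The blue path has only $n-1$ vertices, so $f^{-1}(S)$ has only $n-1$ stones, while $|Q(T)|$ is allowed to be as large as $N/2-1=n(n+1)/4-1\gg n-1$. Thus $Q(T)$ can contain all of $f^{-1}(S)$ (for instance $T=\{R_{f^{-1}(v)}\}\cup\{P_{p,f^{-1}(p)}: p\in S\setminus\{v\}\}$ is $\alpha$-foolable, has $Q(T)=f^{-1}(S)$, and forces $R_{f^{-1}(v)}=0$ while the $P$'s pin every other blue-path vertex). Then any $i,j,k\notin Q(T)$ have $f(i),f(j),f(k)\notin S$, so the naive transposition sends $f^{-1}(v)$ to $f(k)\notin S$, giving it red canonical colour, which contradicts the $T$-forced value $R_{f^{-1}(v)}=0$. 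Your fix requires a helper $t\notin Q(T)$ with $f(t)\in S$, but $f(t)\in S$ immediately implies $t\in f^{-1}(S)=Q(T)$, so no such helper exists. You also cannot leave $f^{-1}(v)$ at $v$ while putting $k$ there (that breaks bijectivity), and recolouring $f^{-1}(v)$ blue while parking it at an off-path vertex risks falsifying a second induction clause. Dropping the bijectivity requirement, as the paper does, sidesteps all of this.
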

\begin{proof}
    Let $S \subseteq V(G)$ be the set of vertices in the blue path induced by $\alpha$. 
    We assign stone $k$ to vertex $v$ and stones $i,j$ to $u,w$ respectively. To all other vertices we assign arbitrary stones as long as they are consistent with $\alpha_{|T}$. Formally: pick two stones $\ell_1, \ell_2$ in $[N]\setminus (Q(T) \cup f^{-1}(S) \cup \{i,j,k\}) $. Consider the following map $\text{STONE}: V \rightarrow [N]$. 
    
    \begin{align*}
        \text{STONE}(p)= \begin{cases}
            f^{-1}(p) & \text{ if }f^{-1}(p) \in Q(T) \\
            i & \text{ if }p=v \\
            j & \text{ if }p=u \\
            k & \text{ if }p=w \\
            \ell_1 & \text{ if }f^{-1}(p) \not \in Q(T) \cup \{i,j,k\}, f^{-1}(p) \in S  \\
            \ell_2 & \text{ if } f^{-1}(p) \not \in Q(T) \cup \{i,j,k\} \cup S            
        \end{cases}
    \end{align*}
    Define a coloring map $\text{COLOR}: [N] \rightarrow \{ \text{RED}, \text{BLUE} \}$ as follows:

    \begin{align*}
        \text{COLOR}(s)= \begin{cases}
            \text{RED} & \text{ if } s \in Q(T), f(s) \not \in S \\
            \text{BLUE} & \text{ if } s \in Q(T), f(s) \in S \\
            \text{BLUE} & \text { if }s=i \\
            \text{RED} & \text{ if }s=j \\
            \text{RED } & \text{ if }s=k \\
            \text{BLUE} & \text{ if }s=\ell_1 \\
            \text{RED } & \text{ if }s=\ell_2 \\
            \text{BLUE} & \text{ otherwise}
            \end{cases}
    \end{align*}
    We remark the color used in the last case does not matter as these stones are not used in the stone placement given by the map $\text{STONE}$.
    Let $\gamma \in \{0,1\}^{m}$ be the assignment which sets the variables according to this placement and coloring map, i.e., 
    \begin{align*}
        P_{v,j} &= \begin{cases}
            1 & \text{ if }j=\text{STONE}(v) \\
            0 & \text{ otherwise}
        \end{cases} \\
         Z_{v,j} &= \begin{cases}
            0 & \text{ if }j<\text{STONE}(v) \\
            1 & \text{ otherwise}
        \end{cases} \\
        R_j & = \begin{cases}
            1 & \text{ if } \text{COLOR}(j)= \text{RED} \\
            0 & \text{ if } \text{COLOR}(j)= \text{BLUE}
        \end{cases}
    \end{align*}
    Notice that this assignment is consistent with $\alpha$ on $T$ and it falsifies a single induction clause at $v$: stone $i$ is placed at $v$, stones $j,k$ are placed at $u,w$ respectively; stones $j,k$ are red while stone $i$ is blue. 
    We remark that there is no set of clause in $\SP_{n,\rho}$ which forces the placement of stones to vertices to be bijective.
    Thus, the only clause of $\SP_{n,\rho}$ falsified by $\gamma$ is one of the following: 
    \begin{align*}
    C_1 &:= \neg P_{u, i} \lor \neg R_{i} \lor \neg P_{w, j} \lor \neg R_{j} \lor \neg P_{v,k} \lor R_k \lor \rho(i, j, k), \\
    C_2 &:= \neg P_{u, i} \lor \neg R_{i} \lor \neg P_{w, j} \lor \neg R_{j} \lor \neg P_{v,k} \lor R_k \lor \neg \rho(i, j, k),
    \end{align*}    
    
    \pavel{Maybe a picture of stone placement by $\alpha$, $\alpha'$ and $\alpha_1$ would be useful.}
\end{proof}

For our hard formula, we use an appropriate obfuscation map $\rho$ given by the following lemma.
An analogous lemma was proved by Alekhnovich et al.~\cite{AJPU02} with different constants as their formula is over a slightly smaller set of constants, but still quadratic in $N$.
For sake of completeness, we state the proof in Appendix.

\begin{lemma}
\label{lem:RhoFunction}
    For $N$ sufficiently large, there exists a mapping $\rho: [N]^3 \to \V$ such that for every $Q \subseteq [N]$ with $|Q| \leq N/400$ and every $X \in \V$, there exist $i < j < k \in [N] \setminus Q$ such that $\rho(i,j,k) = X$.
\end{lemma}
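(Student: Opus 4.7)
The plan is a standard probabilistic-method argument. I would sample $\rho: [N]^3 \to \V$ uniformly at random (each value $\rho(i,j,k)$ drawn independently and uniformly from $\V$) and show by a union bound that with positive probability the required property holds simultaneously for every pair $(Q, X)$.

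Fix $Q \subseteq [N]$ with $|Q| \leq N/400$ and $X \in \V$, and let $\mathcal{E}_{Q,X}$ denote the bad event that no triple $i < j < k$ with $i, j, k \in [N] \setminus Q$ satisfies $\rho(i,j,k) = X$. The number of such triples is $\binom{|[N] \setminus Q|}{3} \geq \binom{\lfloor (399/400) N \rfloor}{3} = \Omega(N^3)$. Since these are distinct triples, the events $[\rho(i,j,k) = X]$ are mutually independent, and each has probability exactly $1/|\V| = 1/(2N^2)$. Therefore
\[
\Pr[\mathcal{E}_{Q,X}] \;\leq\; \left(1 - \frac{1}{2N^2}\right)^{\Omega(N^3)} \;\leq\; \exp\!\bigl(-\Omega(N)\bigr).
\]

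Next I would apply the union bound over all $(Q, X)$. The number of subsets $Q \subseteq [N]$ of size at most $N/400$ is at most $(N+1)\binom{N}{\lfloor N/400 \rfloor}$, which by Stirling's formula is bounded by $2^{(H(1/400) + o(1)) N}$, where $H$ denotes the binary entropy function; numerically $H(1/400) < 0.03$. The number of choices for $X$ is $|\V| = 2N^2$. Hence
\[
\Pr\!\left[\,\bigcup_{Q, X} \mathcal{E}_{Q,X}\right] \;\leq\; 2N^2 \cdot 2^{(H(1/400) + o(1)) N} \cdot \exp\!\bigl(-\Omega(N)\bigr),
\]
which tends to $0$ as $N \to \infty$ because the $\exp(-\Omega(N))$ factor dominates. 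Thus for all sufficiently large $N$ a valid $\rho$ exists.

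The only quantitatively sensitive point --- and what one might loosely call the ``main obstacle'' --- is to verify that the entropy exponent $H(1/400) \cdot N$ coming from the count of $Q$'s is strictly dominated by the concentration exponent arising from $(1 - 1/(2N^2))^{\Omega(N^3)}$. The threshold $N/400$ in the statement is chosen with plenty of slack so that this balance holds; if one wanted $|Q|$ to be a slightly larger fraction of $N$, it would suffice to rerun these numerical estimates. Beyond this routine numerical check, the argument has no substantive obstacles.
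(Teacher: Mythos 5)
Your proposal is correct and follows essentially the same probabilistic-method argument as the paper: choose $\rho$ uniformly at random, bound the probability of each bad event $\mathcal{E}_{Q,X}$ using independence of the values $\rho(i,j,k)$ over the $\Omega(N^3)$ triples avoiding $Q$, and union bound over the $2N^2 \cdot \binom{N}{\le N/400}$ choices of $(Q,X)$. The paper simply carries out the numerical comparison you defer to a ``routine check'' explicitly, showing the concentration exponent $(1-\varepsilon)^3/24 \approx 0.041$ (in nats) beats the entropy exponent $H(1/400) \approx 0.018$, so the union-bound sum is $O(N^2 e^{-0.0238 N}) \to 0$; you are right that the $1/400$ threshold gives this comparison comfortable slack.
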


We remark that the proof of the following lemma is the only place where we are using an obfuscation map with a certain property (given by Lemma~\ref{lem:RhoFunction}) and also that the branching program computing $\Search(\SP_{n,\rho} \circ \IP)$ is bottom-read-once.
\begin{lemma} \label{lemma:large-rank-stone}
    There exists an obfuscation map $\rho: [N]^3 \to \V$ such that the following holds.
    Let $\beta \in \F^{mb}_2$, $t > 0$, and $p = \cP(\beta, t)$ be a node in a bottom-read-once branching program $\cP$ computing $\Search(\SP_{n,\rho} \circ \IP)$.
    If $p$ is $\alpha$-foolable for $\alpha = \overrightarrow{\IP}(\beta)$, then $\codim(A_p) \geq \min\{\frac{N}{800}, t\}$.
\end{lemma}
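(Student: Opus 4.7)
The plan is to argue by contradiction. Suppose $\codim(A_p) < \min\{N/800,\,t\}$; the goal is to exhibit an assignment in $A_p$ whose routing from $p$ ends at a leaf whose output clause it does not falsify.

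First, I set up the combinatorial inputs. Let $T := \Cl(A_p)$. By Lemma~\ref{lem:Closure}, $|T| \leq \codim(A_p) < N/800$. A direct inspection of the variables of $\SP_{n,\rho}$ shows that each variable marks at most two stones (for instance, $P_{v,j}$ marks stones $j$ and $f^{-1}(v)$), so $|Q(T)| \leq 2|T| < N/400$. This is the hypothesis of Lemma~\ref{lem:RhoFunction}: for every target $X \in \V$ there is a triple $(i,j,k) \in ([N] \setminus Q(T))^3$ with $i<j<k$ and $\rho(i,j,k) = X$. Foolability of $p$ further guarantees that $T$ mentions no variable attached to the endpoint $v$ of the blue path induced by $\alpha$ or to its children $u, w$. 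Lemma~\ref{lem:Fooling} then produces, for any such triple, a total assignment $\gamma_{ijk} \in \{0,1\}^m$ extending $\alpha|_T$ whose unique falsified $\SP_{n,\rho}$-clause is the induction clause $C(i,j,k)$ at $v$ involving the stones $i,j,k$. Since $\gamma_{ijk}$ agrees with $\alpha$ on $T$, it extends the closure assignment of $\beta$; because $\IP$ is stifled, Lemma~\ref{lem:StiflingGadget} supplies some $\gamma_{ijk}^{*} \in A_p$ with $\overrightarrow{\IP}(\gamma_{ijk}^{*}) = \gamma_{ijk}$.

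Second, I bring in the bottom-read-once hypothesis. Lemma~\ref{lem:PostDim} gives $\dim(\Post(p)) \leq mb - t$, so $\dim(\Post(p)^{\perp}) \geq t$. Combined with $\codim(A_p) < t$, a dimension count yields $\dim\bigl((A_p - \beta) \cap \Post(p)^{\perp}\bigr) \geq t - \codim(A_p) > 0$. Any $\gamma^{*} \in A_p$ with $\gamma^{*} - \beta \in \Post(p)^{\perp}$ gives the same response to every linear query that $\cP$ can issue on a path out of $p$, so $\beta$ and $\gamma^{*}$ are routed to the same leaf $\ell$; by Lemma~\ref{lem:BPPath}, $\gamma^{*} \in A_\ell$ and therefore $\gamma^{*}$ must falsify the output clause $C_\ell$. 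The clause $C_\ell$ is one of the lifts in $C^{*} \circ \IP$, where $C^{*}$ is the unique $\SP_{n,\rho}$-clause falsified by $\alpha$ (the induction clause at $v$ with the stones $i_0, j_0, k_0$ that $\mu$ places there). The aim is therefore to pick a triple with $C(i,j,k) \neq C^{*}$ and arrange $\gamma_{ijk}^{*} \in \beta + \Post(p)^{\perp}$; Observation~\ref{obs:LiftedClause} then forbids $\gamma_{ijk}^{*}$ from falsifying any clause of $C^{*} \circ \IP$, in particular not $C_\ell$, contradicting the routing conclusion.

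The main obstacle is the matching step: simultaneously requiring $\gamma_{ijk}^{*}$ to lift to the prescribed $\gamma_{ijk}$ and to sit in the coset $\beta + \Post(p)^{\perp}$. Lemma~\ref{lem:StiflingGadget} only returns some representative in the fiber $A_p \cap \overrightarrow{\IP}^{-1}(\gamma_{ijk})$, with no a priori control over its location in $A_p$. I expect this to be overcome by exploiting two sources of freedom: (i) the fiber is large, since by balancedness of $\IP$ and the safe-basis construction inside the proof of Lemma~\ref{lem:StiflingGadget} one can modify the representative on blocks outside $\Cl(A_p)$; and (ii) Lemma~\ref{lem:RhoFunction} supplies many admissible triples $(i,j,k)$, as $|[N] \setminus Q(T)| > N/2$. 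Combining this freedom with the spread property of $(A_p - \beta)_{\downarrow \overline{T}}$ (which holds by definition of the closure $T$) to realise the needed intersection with $\beta + \Post(p)^{\perp}$ is the delicate step I expect to be the principal technical hurdle.
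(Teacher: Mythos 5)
Your overall structure is different from the paper's, and the gap you flag in your last paragraph is a real one that I do not think can be patched in the way you hope. You are trying to produce a single assignment $\gamma^{*}_{ijk} \in A_p$ that simultaneously (a) lifts to a carefully constructed $\gamma_{ijk}$ falsifying an induction clause $C(i,j,k) \neq C^{*}$, and (b) lies in the coset $\beta + \Post(p)^{\perp}$, so that it follows $\beta$'s path from $p$. These are three independent constraints — membership in $A_p$, membership in the $\IP$-fiber of a prescribed $\gamma_{ijk}$, and membership in $\beta + \Post(p)^{\perp}$ — and a naive dimension count does not make them compatible: the fiber $\overrightarrow{\IP}^{-1}(\gamma_{ijk})$ is an arbitrary product of preimages with no linear structure relating it to $\Post(p)^{\perp}$, and Lemma~\ref{lem:StiflingGadget} gives no handle on where in $A_p$ the returned representative sits. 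The ``spread'' of $(A_p-\beta)_{\downarrow \bar T}$ and the multiplicity of admissible triples supply slack, but you would need a genuinely new argument to show that this slack can be steered into the specific linear coset $\beta + \Post(p)^{\perp}$, and nothing in the paper's toolbox does that.

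The paper sidesteps this entirely by not tracing the routing of any particular input. It instead proves a claim of a different flavor: for \emph{every} variable $Y$ of $\SP_{n,\rho}\circ\IP$, some sink reachable from $p$ outputs a clause mentioning $Y$. This is established precisely with the ingredients you already assembled (Lemma~\ref{lem:RhoFunction} to choose $(i,j,k)$, Lemma~\ref{lem:Fooling} to build $\gamma$, Lemma~\ref{lem:StiflingGadget} to lift into $A_p$, and Observation~\ref{obs:AllVariablesInLift} to see that $Y$ appears in the lifted clause), but there is no need to control \emph{which} sink is reached or to compare with $\beta$'s leaf — existence of a reachable sink mentioning $Y$ is enough. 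Once all $mb$ variables appear in reachable sinks, the span $U$ of the constraint rows of those sinks has full dimension $mb$; Lemma~\ref{lem:BPPath} puts $U \subseteq \Span(\Row(M) \cup \Post(p))$, and Lemma~\ref{lem:PostDim} bounds $\dim(\Post(p)) \le mb - t$, forcing $\rank(M) \geq t$. This is a purely linear-algebraic count on subspaces of $\F_2^{mb}$ and does not require exhibiting any single input in a coset. If you replace your routing argument by this coverage-plus-dimension-count, your Step 1 survives essentially verbatim and the remaining difficulty disappears.
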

\begin{proof}
    Fix $\rho: [N]^3 \to \V$ to be a map with the property guaranteed by Lemma~\ref{lem:RhoFunction}.
    Let $A_p = \Space(M, c)$. Suppose $\rank(M) \leq \frac{N}{800}$.     
    Let $S$ be the set of sinks of $\cP$ reachable from $p$.

    \begin{claim} \label{claim:only_one_falsification} For each variable $Y$ of $\SP_{n,\rho} \circ \IP$ there is a sink in $S$ outputting a clause $D$ such that $Y$ or $\neg Y$ is in $D$.
    \end{claim}

    \begin{proof}[Proof of Claim \ref{claim:only_one_falsification}] 
        We shall show that there exists an assignment $\gamma \in A_p$ such that $\gamma$ falsifies only one clause of $\SP_{n, \rho} \circ \IP$ and that clause contains $Y$ or $\neg Y$. 
        Let $Y \in \text{BLOCK}(Z)$. 
        Let $v$ be the endpoint of the blue path induced by $\alpha$, and let its children be $u$ and $w$.
         
        Let $T = \Cl(A_p)$. 
        Note that $|T| < N/800$, thus $Q(T) < N/400$. 
        By Lemma \ref{lem:RhoFunction}, there exist $i<j<k$ in $[N] \setminus Q(T)$ such that $\rho(i,j,k)= Z.$ 
        By assumption, $T$ is $\alpha$-foolable.
        By Lemma \ref{lem:Fooling}, we can extend the restriction $\alpha_{|T}$ to a full assignment $\gamma \in \F^{m}_2$ such that $\gamma$ does not falsify any clause of $\SP_{n,\rho}$ other than one of the following two clauses:
  \begin{align*}
    C_1 &:= \neg P_{u, i} \lor \neg R_{i} \lor \neg P_{w, j} \lor \neg R_{j} \lor \neg P_{v,k} \lor R_k \lor Z\\
    C_2 &:= \neg P_{u, i} \lor \neg R_{i} \lor \neg P_{w, j} \lor \neg R_{j} \lor \neg P_{v,k} \lor R_k \lor \neg Z.
    \end{align*}        
    By Lemma \ref{lem:StiflingGadget}, there is an assignment $\beta' \in A_p$ such that $\overrightarrow{\IP}(\beta')= \gamma$. 
    Note that $\beta'$ falsifies only one clause of $\SP_{n,\rho} \circ \IP$, and that clause belongs to either $C_1 \circ \IP$ or $C_2 \circ \IP$. 
    By Observation~\ref{obs:AllVariablesInLift}, every clause in $C_1 \circ \IP$ and $C_2 \circ \IP$ contains every variable in the block of $\rho(i,j,k)=Z$ (possibly with negations). Thus, the only clause falsified by $\beta'$ contains either $Y$ or $\neg Y$. 
    Since $\beta' \in A_p$ and $\beta'$ falsifies only one clause $\tilde{C}$ of $\SP_{n,\rho} \circ \IP$, one of the sinks in $S$ must be labelled $\tilde{C}$. 
    \end{proof}

    We continue the proof of Lemma \ref{lemma:large-rank-stone}.    
    Let $U$ be the space spanned by rows of matrices defining the spaces of sinks in $S$.
    Formally, let $s \in S$ be labelled by the affine space $A_s = \Space(M_s, c_s)$ (this corresponds to a clause of $\SP_{n,\rho} \circ \IP$).
    Then, $U = \Span\bigl(\{\Row(M_s) \mid s \in S\}\bigr)$.
    By Claim~\ref{claim:only_one_falsification} each variable of $\SP_{n,\rho} \circ \IP$ is mentioned at some sink in $S$, so the space $U$ has full dimension, i.e., $\dim(U) = mb$.
    Let $W = \Span\bigl(\Row(M) \cup \Post(p)\bigr)$.
    By Lemma~\ref{lem:PostDim}, we have
    \[
    \dim(W) \leq \rank(M) + \dim(\Post(p)) \leq \rank(M) + mb - t.
    \]
    On the other hand by Lemma~\ref{lem:BPPath}, $U \subseteq W$ and thus, $\dim(W) \geq \dim(U) = mb$.
    Putting both inequalities together, we get $\rank(M) \geq t$.

 \end{proof}

\subsection{Lifted Distributions Fool Rank}
In the earlier two subsections, we have established the following two facts: (i) in any BROLBP $\mathcal{P}$ corresponding to a bottom-regular ResLin proof of $\SP_{n,\rho} \circ \IP$, when inputs $\beta$ are sampled according to the $\IP$ lift of $\mu$, the node $\mathcal{P}(\beta,t)$ is a foolable node with high probability; (ii) in such a BROLBP, the constraint matrix for the affine space associated with a foolable node has large rank. 

To prove that $\mathcal{P}$ has large size, it is sufficient to argue that each large rank constraint system is satisfied with small probability under the $\IP$ lift of $\mu$.
Of course, such a statement is well known to be true if we sample inputs from the uniform distribution in $\mathbb{F}_2^{bm}$. 
However, our distribution is not so at all. In particular, it has quite sparse support. 
Still, it turns out that any lifted distribution is pseudo-random with respect to the rank measure if the gadget satisfies a generalization of the stifling property. 
Consider a gadget $g:\{0,1\}^b \to \{0,1\}$. For any $i \in [b]$, and $o \in \{0,1\}$ an assignment $\alpha$ to the bits different from $i$ is called \emph{$o$-stifling for $i$} if $g$ gets fixed to $o$ by $\alpha$, i.e., the induced subfunction $g_{|([b] \setminus \{i\}) \leftarrow \alpha}$ gets fixed to the constant function that always evaluates to $o$, no matter how the $i$-th bit is set.  We say $g$ is \emph{$\varepsilon$-balanced, stifled} if for any $i \in [b]$, and for any $o \in \{0,1\}$, the following is true: when we sample $x \in \{0,1\}^b$ uniformly at random from $g^{-1}(o)$, the projection of $x$ on co-ordinates different from $i$ is $o$-stifling for $i$ with probability at least $\varepsilon$. 

\begin{claim}  \label{claim:IP-gadget-quality}
     Inner-product defined on $2b\ge8$ bits is a $7/18$-balanced and stifled gadget.
\end{claim}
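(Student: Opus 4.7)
The plan is to reduce to a small counting exercise using the standard parity-bias count for inner product. Write the $2b$-bit input of $\IP$ as a pair $(u,v) \in \{0,1\}^b \times \{0,1\}^b$, so $\IP(u,v) = \sum_{j=1}^{b} u_j v_j \pmod 2$. By the symmetry of $\IP$ under swapping $(u,v)$, it suffices to treat a coordinate on one side; fix $i \in [b]$ and consider the bit $u_i$ (the analysis for a $v$-side coordinate is identical).

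First I would characterize when an assignment to all bits except $u_i$ is $o$-stifling for $u_i$. Flipping $u_i$ changes $\IP(u,v)$ by $v_i$, so the output is invariant in $u_i$ iff $v_i = 0$, in which case that invariant value equals $\sum_{j \neq i} u_j v_j$. Hence the assignment is $o$-stifling for $u_i$ iff $v_i = 0$ and $\sum_{j \neq i} u_j v_j = o$. When we sample $(u,v)$ uniformly from $\IP^{-1}(o)$, the second condition is implied by the first, so the probability of being $o$-stifling equals $\Pr[v_i = 0 \mid \IP(u,v) = o]$.

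Next I would compute this probability using the standard count $|\IP^{-1}(o)| = 2^{2b-1} + (-1)^o 2^{b-1}$. Among these, the pairs with $v_i = 0$ are indexed by a free $u_i$ together with a pair $(u_{-i}, v_{-i})$ of $(b-1)$-bit vectors whose inner product equals $o$, for a total of
\[
2 \bigl( 2^{2(b-1)-1} + (-1)^o 2^{(b-1)-1} \bigr) = 2^{2b-2} + (-1)^o 2^{b-1}.
\]
Consequently,
\[
\Pr[v_i = 0 \mid \IP(u,v) = o] \;=\; \frac{2^{2b-2} + (-1)^o 2^{b-1}}{2^{2b-1} + (-1)^o 2^{b-1}}.
\]
The $o=0$ case is clearly $\geq 1/2$. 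The worst case is $o = 1$; cross-multiplying, the bound $\geq 7/18$ reduces to $18(2^{2b-2} - 2^{b-1}) \geq 7(2^{2b-1} - 2^{b-1})$, which simplifies to $2^{b+1} \geq 11$, valid for all $b \geq 3$ and in particular under the hypothesis $2b \geq 8$. There is essentially no obstacle: the only substantive input is the classical parity-bias count $|\IP^{-1}(o)|$, and everything else is arithmetic.
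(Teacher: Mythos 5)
Your proof is correct and follows essentially the same route as the paper: both reduce to the observation that stifling the coordinate $u_i$ (for target output $o$) is governed by the paired coordinate $v_i$ being $0$, and hence to estimating $\Pr[v_i = 0 \mid \IP(u,v) = o]$. Where the paper bounds this via a (slightly loose) Fourier-expectation estimate, you compute it exactly from the preimage count $|\IP^{-1}(o)| = 2^{2b-1} + (-1)^{o} 2^{b-1}$; both derivations certify $7/18$ for $2b \ge 8$, and your exact arithmetic in fact gives the sharper worst-case ratio $7/15$ at $b=4$.
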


\begin{proof}
By simple manipulation,
$$\Pr_{(x,y) \sim \{0,1\}^{2b}} \big[ \IP(x,y) = 0 \wedge x_1 = 0\big] = \mathbb{E}\bigg[ \bigg(\frac{1 + (-1)^{\sum_{i=1}^b x_iy_i}}{2}\bigg)\bigg(\frac{1+(-1)^{x_1}}{2}\bigg)\bigg]. $$ 
The RHS becomes,
$$\frac{1}{4} + \frac{1}{4}\mathbb{E}\big[(-1)^{x_1}\big] + \frac{1}{2}\mathbb{E}\big[(-1)^{\sum_{i=1}^b x_i y_i}\big]$$
The second sum is 0, and the third is at most $\frac{1}{2^{b+1}}$.  Overall, this gives that 
$$\Pr_{(x,y) \sim \{0,1\}^{2b}} \big[ \IP(x,y) = 0 \wedge x_1 = 0\big] \geq \frac{1}{4} - \frac{1}{2^{b+1}}.$$ Further, by a similar method, 
$$\Pr_{(x,y) \sim \{0,1\}^{2b}} \big[ \IP(x,y) = 0\big] \leq \frac{1}{2} + \frac{1}{2^b}.$$
Thus, 
$$\Pr_{(x,y) \sim \{0,1\}^{2b}} \big[ x_1 = 0\,|\, \IP(x,y) = 0 \big] \ge \frac{1/4 - 1/2^{b+1}}{1/2 + 1/2^b}. $$
Note that any assignment that sets $x_1=0$ stifles $y_1$. Hence,  $y_1$ is stifled with probability at least $7/18$,  if $b \geq 4$, by the projection of a random $0$ (and similarly 1) assignment to $\IP$. Completely analogously, any bit is stifled with the same probability.
\end{proof}

\begin{remark}
    It is worth noting that several gadgets, including Inner-Product, Indexing, Majority, even on sufficiently large but constant number of bits, are stifled and balanced.
\end{remark}

Now we state the utility of balanced, stifling gadgets.

\begin{lemma}  \label{lemma:rank-fooling}
    Let $g$ be any $\varepsilon$-balanced, stifled gadget and $z \in \mathbb{F}_2^m$ be any fixed vector. Then, for every matrix $M \in \mathbb{F}_2^{r \times bm}$ of full rank $r$, and sny vector $\gamma \in \mathbb{F}_2^{r}$ the following holds:
    $$\Pr_{\beta \sim \overrightarrow{g}^{-1}(z)} \bigg[M\beta\,=\,\gamma\bigg] = 2^{-\Omega_{\varepsilon}(r/b)}. $$
\end{lemma}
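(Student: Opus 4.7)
The plan is to reduce the problem to bounding the probability of a system of single-bit equations at pivot positions in distinct blocks, and then exploit the stifling/balance property of $g$ block-by-block. First I would put $M$ into reduced row echelon form, obtaining $r$ rows $u_1,\ldots,u_r$ with pivot columns $a_1<\cdots<a_r$ satisfying $(u_i)_{a_j}=\delta_{ij}$. Since each of the $m$ blocks contains only $b$ coordinates, pigeonhole forces the $r$ pivots to occupy at least $p \coloneqq \lceil r/b\rceil$ distinct blocks $b_1,\ldots,b_p$. For each such block I pick one row $w_k$ whose pivot $a_{\sigma(k)}$ lies in $b_k$. The reduced-echelon structure guarantees $(w_k)_{a_{\sigma(\ell)}}=\delta_{k\ell}$, i.e., the selected pivots of other selected rows are zero entries of $w_k$. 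Let $\tilde{M}$ be the $p\times bm$ matrix with rows $w_1,\ldots,w_p$ and $\tilde\gamma$ be the corresponding right-hand side induced by $\gamma$. Then $M\beta=\gamma$ implies $\tilde M\beta=\tilde\gamma$, so it suffices to upper-bound $\Pr[\tilde M\beta=\tilde\gamma]$.

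Next I would condition on every bit of $\beta$ other than the $p$ selected pivot bits $\beta_{a_{\sigma(1)}},\ldots,\beta_{a_{\sigma(p)}}$. Under this conditioning each equation $\langle w_k,\beta\rangle=\tilde\gamma_k$ collapses (thanks to $(w_k)_{a_{\sigma(\ell)}}=0$ for $\ell\ne k$ and $(w_k)_{a_{\sigma(k)}}=1$) to a single assignment $\beta_{a_{\sigma(k)}}=c_k(\tau)$ determined by the revealed bits $\tau$. Crucially, because the selected pivots lie in distinct blocks and $\overrightarrow{g}^{-1}(z)$ samples each block $\beta^{b_k}$ independently and uniformly from $g^{-1}(z_{b_k})$, the $p$ pivot bits $\beta_{a_{\sigma(k)}}$ are mutually independent conditional on $\tau$, and the event ``$\beta^{b_k}_{\neq a_{\sigma(k)}}$ is stifling'' depends on disjoint portions of $\beta$ across $k$.

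Finally I would apply the $\varepsilon$-balanced, stifled property of $g$. For each block $b_k$, with probability at least $\varepsilon$ over the sampling of $\beta^{b_k}$, the projection $\beta^{b_k}_{\neq a_{\sigma(k)}}$ is a $z_{b_k}$-stifling assignment for position $a_{\sigma(k)}$; in that case $\beta_{a_{\sigma(k)}}$ is conditionally uniform on $\{0,1\}$ and agrees with $c_k(\tau)$ with probability exactly $1/2$. Otherwise $\beta_{a_{\sigma(k)}}$ is forced, so the conditional probability is at most $1$. Taking the weighted average across the two cases gives $\Pr[\beta_{a_{\sigma(k)}}=c_k(\tau)]\le 1-\varepsilon/2$ for each $k$, and by the independence established above,
\[
\Pr[M\beta=\gamma]\;\le\;\Pr[\tilde M\beta=\tilde\gamma]\;\le\;\bigl(1-\tfrac{\varepsilon}{2}\bigr)^{p}\;\le\;\bigl(1-\tfrac{\varepsilon}{2}\bigr)^{\lceil r/b\rceil}\;=\;2^{-\Omega_{\varepsilon}(r/b)}.
\]

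The one subtle point is recognizing that no ``spread'' assumption on $\Row(M)$ is needed to invoke a safe-basis style argument: since we only need an upper bound on the probability, we are free to throw away rows and keep only $p\ge r/b$ of them with pivots in distinct blocks — obtained by a trivial pigeonhole on reduced row echelon form — and this is what makes the blockwise independence and the stifling bound compose cleanly. The remainder is a direct conditional probability computation.
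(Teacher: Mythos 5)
Your proof is correct and follows essentially the same strategy as the paper's: echelon form, pigeonhole to extract at least $r/b$ pivot blocks, and then the stifling property plus block independence to bound each pivot block's contribution by $1-\varepsilon/2$. The only cosmetic difference is the bookkeeping — the paper uses (non-reduced) row echelon form and samples blocks sequentially from right to left so that each pivot row's remaining support is already determined, whereas you use reduced row echelon form and condition on all non-pivot bits simultaneously; both routes exploit the same structural facts and lead to the identical bound.
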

\begin{proof}
    After Gaussian elimination, turning $M$ into row-echelon form, there are at least $r/b$ different blocks in which pivots of rows appear. Let us call each such block a pivot block. The distribution $\overrightarrow{g}^{-1}(z)$ samples independently at random from  $g^{-1}(z_i)$ for each of the $i$-th block. It will be convenient to think that we sample, one after the other, independently from blocks in this way, starting from the rightmost.  Consider the situation when we arrive at a pivot block having sampled all blocks to its right.  Let the bit of $z$ corresponding to this pivot block be $o \in \{0,1\}$.  Consider any one row that has a pivot in that block. Let the equation corresponding to this row be denoted by $\ell$. By the property of $g$, with probability at least $\varepsilon$ the random assignment from $g^{-1}(o)$ will be stifling for the pivot of $\ell$. Conditioned on that event, the stifled bit will be set to each of $0,1$ with probability exactly $1/2$ as each possible setting gives rise to a distinct assignment in $g^{-1}(o)$. Hence, the probability that $\ell$ is satisfied by the sampled assignment to this block is at most $(1-\varepsilon/2)$. Thus, continuing this way, the probability that all the equations are satisfied  is at most $\big(1-\varepsilon/2\big)^{r/b}$, yielding the desired result.
\end{proof}

\subsection{Putting Everything Together}
Now, we are ready to finish the proof of our lower bound, i.e., Theorem~\ref{thm:LowerBound}.
\begin{proof}[Proof of Theorem~\ref{thm:LowerBound}]
    Let $\mathcal{P}$ be the BROLBP derived from a bottom-regular ResLin proof of  $\SP_{n,\rho} \circ \IP$. Let $t = \lfloor c\cdot n^{1/3} \rfloor$ for an appropriately chosen small constant $c > 0$. Combining Theorem~\ref{thm:node-foolability} and Lemma~\ref{lemma:large-rank-stone}, we get 
    \begin{eqnarray}  \label{eq:large-rank-probability}
        & \Pr_{\beta \sim \mu'} \bigg[ A_{\mathcal{P}(\beta,t)} \text{ has co-dimension } \ge t \bigg] & \ge \frac{3}{5}.
    \end{eqnarray}
    On the other hand, for any node $v$ of $\mathcal{P}$ which has $\text{co-dim}(A_v) \ge t$, Lemma~\ref{lemma:rank-fooling} yields,
    \begin{eqnarray} \label{eq:large-rank-prob-upper-bound}
      \Pr_{\beta \sim \mu'} \bigg[ \mathcal{P}(\beta,t) \text{ is } v \bigg] \le 2^{-\Omega\big(\frac{t}{\log n}\big)}.  
    \end{eqnarray}
    If $s$ is the total number of nodes of $\mathcal{P}$, combining \eqref{eq:large-rank-probability} and \eqref{eq:large-rank-prob-upper-bound}, we get immediately
    $$ s \cdot 2^{-\Omega\big(t/\log(n)\big)} \geq \frac{3}{5}.$$
    Substituting the value of $t$ in the above, the result immediately follows.
    
\end{proof}

\section{Future Directions}  \label{sec:conclusion}

We provided the first super-polynomial separation between the powers of bottom-regular and general ResLin proofs. We believe the general proof strategy that we implemented, modifying and generalizing the recent technique of Efremenko, Garl{\'i}k and Itsykson \cite{EGI23}, should yield exponential lower bounds on the length of bottom-regular ResLin proofs for other formulas as well. For instance, formula $\text{MGT}_n$ which is the constant-width version of $\text{GT}_n$, that encodes the contradiction that a finite total order has no minimal element, when obfuscated appropriately and then lifted with inner-product can be proved to be hard for bottom-regular ResLin. Indeed, the first part of our proof strategy is quite general and applies to all ResLin proofs without any assumption on regularity. Here, we need just the fact that the search for a falsified clause in the base formula is hard on average for small height decision trees w.r.t some distribution $\mu$. That is sufficient, thanks to lifting theorems, to yield the fact that after the first few (typically $n^{\Omega(1)}$) linear queries, the branching program corresponding to the ResLin proof is still far from discovering a falsified clause in the lifted formula with high probability, when the input is sampled from the lifted distribution $\overrightarrow{g}^{-1}(\mu)$. This is step 2 of our proof outline. In most formulas, one could then define some natural notion of \emph{foolability} and then say the affine space associated with nodes of the branching program are frequently foolable. How do we know this is useful? Unfortunately, the usefulness of these notions seem formula-specific. For the binary pigeonhole principle, Efremenko et al. observed that local consistency was enough to yield high rank of the dual of the affine space. For our formula, we achieved the same exploiting the obfuscation map and stifling nature of our lifting gadget. But this seems not immediately generalizable. An interesting direction here is the following:
\begin{problem}  \label{problem:tseitin}
Prove strong lower bounds on the size of bottom-regular proofs for the lift of Tseitin formulas over expander graphs.
\end{problem}

Another direction is to consider the formulas where even implementing the first step of our strategy seems impossible. 
\begin{problem}  \label{problem:random}
    Prove strong lower bounds on size of bottom-regular ResLin proofs for appropriate lifts of random constant-width CNF formulas.
\end{problem}

The above seems challenging as for random formulas of constant-width, for every distribution, there exists a decision tree that finds in $O(1)$ queries a falsified clause with high probability. This, very likely, requires changing our technique substantially. Finally, one of the challenges posed by Gryaznov et al. \cite{GPT22} remains still open.
\begin{problem}  \label{problem:top-regular}
   Prove super-polynomial lower bounds on the size of top-regular ResLin proofs.
\end{problem}

\bibliographystyle{plain}
\bibliography{RegularReslin}

\appendix
\section*{Appendix}
\begin{proof}[Proof of Lemma~\ref{lem:RhoFunction}]
    We prove a random mapping has the sought property with non-zero probability.
    We choose $\rho$ uniformly randomly from all possible mappings from $[N]^3$ to $\V$.
    Let $X \in \V$ and $Q \subseteq [N]$ with $|Q| \leq \varepsilon N$ for $\varepsilon = \frac{1}{400}$.
    Recall that the number of variable of $\SP_{n,\rho}$ is $m = 2N^2$.
    Then, we have
     \begin{align*}
     \Pr & [\text{For all } i < j < k \in [N] \setminus Q : \rho(i,j,k) \neq X] = \left(\frac{m -1}{m}\right)^{\binom{(1-\varepsilon)N}{3}} \\    
     & \leq \left(1 - \frac{1}{m}\right)^{(1-\varepsilon)^3N^3 / 12} \leq e^{-(1-\varepsilon)^3N / 24}
     \end{align*}
    By union bound over all possible $X \in \V$ and $Q \subseteq [N]$ , we get the following bound for the probability that there is $X \in V$ and $Q \subseteq [N]$ of size $\varepsilon N$ such that for all triples $i < j < k \in [N] \setminus Q$ holds that $\rho(i,j,k) \neq X$.
    \[
    2N^2 \cdot \binom{N}{\varepsilon N} \cdot e^{-(1-\varepsilon)^3 N / 24} \leq 2N^2 \cdot e^{(H(\varepsilon) - (1-\varepsilon)^3/24)N} \leq 2N^2\cdot e^{-0.0238 N}
    \]
    The first inequality holds because $\binom{N}{\varepsilon N} \leq e^{H(\varepsilon)N}$ for $H(\varepsilon) = - \varepsilon \ln \varepsilon - (1-\varepsilon) \ln (1 - \varepsilon)$.
    The second inequality holds for $\varepsilon = \frac{1}{400}$.
    The last term is clearly less than 1 for large enough $N$.
\end{proof}

\begin{proof}[Proof of Lemma \ref{lemma:resolution-complete}] Suppose the set of clauses $\Phi$ semantically implies $C:=[x_1=a_1] \lor \cdots \lor [x_k=a_k]$. Then, the CNF formula $\Phi' := \Phi \land [x_1 \neq a_1] \land \cdots \land [x_k \neq a_k]$ is unsatisfiable. We construct a decision tree $\cT$ for $\Search(\Phi')$ as follows: 
\begin{description}
    \item[Step 1:] First, $\cT$ queries the variables $x_1, \dots , x_k$. If it sees that $x_j=a_j$ for some $j \in [k]$, it outputs the clause $[x_j \neq a_j]$. 
    \item[Step 2:] Then, $\cT$ queries the variables of $\Phi'$ one-by-one and outputs a falsified clause as soon as it finds one.

\end{description}
By the well-known equivalence between tree-like resolution and decision trees, $\cT$ can be converted into tree-like resolution refutation for $\Phi'$. Each node $v$ of $\cT$ is labelled with a clause $C_v$. If a node queries variable $x_i$ and has two children $v_0, v_1$, then $C_v$ can be derived from $C_{v_0}, C_{v_1}$ by resolving $x_i$. The root is labelled with the empty clause. 

Let $v$ be the node of $\cT$ where Step 2 starts. The clause $C_v$ must be $C= [x_1= a_1] \lor \cdots \lor [x_k=a_k]$, because resolving it with $[x_1 \neq a_1], \dots , [x_k \neq a_k]$ results in the empty clause. Note that every leaf under $v$ is a clause of $\Phi$. 
Moreover, the subtree rooted in $v$ has size at most $2^n$.
Thus, $C=C_v$ can be derived from $\Phi$ in at most $2^n$ steps.
\end{proof}
\end{document}